\def\<{\langle}
\def\>{\rangle}
\tikzstyle{Z dot}=[inner sep=0mm, minimum size=2mm, shape=circle, draw=black, fill=zxGreen, tikzit fill={rgb,255: red,216; green,248; blue,216}, outer sep=-0.5mm]
\tikzstyle{Z phase dot}=[draw=black, fill=zxGreen, shape=rectangle, minimum size=4.5mm, rounded corners=1.8mm, inner sep=0.5mm, outer sep=-0.5mm, scale=0.8, tikzit shape=circle, tikzit fill={rgb,255: red,216; green,248; blue,216}, font={\footnotesize\boldmath}]
\tikzstyle{X dot}=[shape=circle, draw=black, fill=zxRed, tikzit fill={rgb,255: red,221; green,165; blue,165}, inner sep=0 mm, minimum size=2 mm, outer sep=-0.5mm]
\tikzstyle{X phase dot}=[Z phase dot, draw=black, fill=zxRed, tikzit fill={rgb,255: red,221; green,165; blue,165}]
\tikzstyle{H box}=[fill=zxHad, draw=black, shape=rectangle, inner sep=0.6mm, minimum height=1.5mm, minimum width=1.5mm, tikzit fill=yellow, font={\footnotesize\boldmath}]
\tikzstyle{box}=[draw=black, shape=rectangle, fill=white, minimum size=1em, inner sep=0.2em, scale=0.85, font={\scriptsize}, outer sep=-0.5mm]
\tikzstyle{black dot}=[fill=black, draw=black, shape=circle, inner sep=1pt]
\tikzstyle{sLabel}=[font={\scriptsize}, tikzit draw=black, auto]
\tikzstyle{not}=[draw=black, circle, addcross, minimum size=2mm, outer sep=-0.5mm, inner sep=0mm]
\tikzstyle{Z dot thick}=[inner sep=0mm, minimum size=2mm, shape=circle, draw=black, fill=zxGreen, tikzit fill={rgb,255: red,216; green,248; blue,216}, outer sep=-0.5mm, line width=1pt]
\tikzstyle{X dot thick}=[inner sep=0mm, minimum size=2mm, shape=circle, draw=black, fill=zxRed, tikzit fill={rgb,255: red,221; green,165; blue,165}, outer sep=-0.5mm, line width=1pt]
\tikzstyle{fault-location}=[fill=white, draw=black, shape=circle, minimum size=2mm, inner sep=0mm, outer sep=-0.5 mm, regular polygon, regular polygon sides=8, font={\tiny}]
\tikzstyle{fault-free-location}=[fill=white, draw={rgb,255: red,177; green,98; blue,255}, shape=circle, minimum size=2mm, inner sep=0mm, outer sep=-0.5 mm, regular polygon, regular polygon sides=8, font={\tiny}]
\tikzstyle{fault-location-faulty}=[fill=white, draw={rgb,255: red,191; green,0; blue,64}, shape=circle, minimum size=2mm, inner sep=0mm, outer sep=-0.5 mm, regular polygon, regular polygon sides=8, minimum size=3mm, font={\tiny}]
\tikzstyle{new style 0}=[fill=white, draw=black, shape=circle]
\tikzstyle{dashed-line}=[-, style=dashed, draw={rgb,255: red,128; green,128; blue,128}]
\tikzstyle{hadamard}=[-, style=dashed, draw=blue]
\tikzstyle{X Web}=[-, preaction={line width=1mm, draw=zxDarkRed}, tikzit draw=red]
\tikzstyle{Z Web}=[-, preaction={line width=1.7mm, draw=zxDarkGreen}, tikzit draw=green]
\tikzstyle{XZ Web}=[-, preaction={line width=1.8mm, draw=zxDarkGreen}, preaction={line width=1mm, draw=zxDarkRed}, tikzit draw=blue]
\tikzstyle{braceedge}=[-, decorate, decoration={brace, amplitude=2mm, raise=-1mm}]
\tikzstyle{arrow}=[->]
\tikzstyle{fault-free}=[-, draw={rgb,255: red,177; green,98; blue,255}, line width=1pt]
\tikzstyle{new edge style 0}=[-, fill={rgb,255: red,216; green,216; blue,216}, draw=black]
\tikzstyle{new edge style 1}=[->, draw={rgb,255: red,177; green,98; blue,255}, line width=1pt]
\tikzstyle{fault-accounting-component}=[-, style=dashed, draw={rgb,255: red,0; green,128; blue,128}]
\tikzstyle{new edge style 2}=[-, style=dashed, draw={rgb,255: red,0; green,128; blue,128}]
\tikzstyle{thick}=[-, line width=1pt]
\theoremstyle{definition}
\declaretheorem[numberwithin=section]{theorem}
\declaretheorem[sibling=theorem]{definition}
\declaretheorem[sibling=theorem]{proposition}
\declaretheorem[sibling=theorem]{remark}
\declaretheorem[sibling=theorem]{corollary}
\def\bR{\begin{color}{red}}
\def\bB{\begin{color}{blue}}
\def\bG{\begin{color}{green}}
\def\bP{\begin{color}{purple}}
\def\bC{\begin{color}{cyan}}
\def\bM{\begin{color}{magenta}}
\def\bL{\begin{color}{lime}}
\def\e{\end{color}}
\newcommand{\interp}[1]{#1}
\newcommand{\code}[1]{\left\llbracket#1\right\rrbracket}
\definecolor{cbpink}{RGB}{214,130,211}
\definecolor{cbyellow}{RGB}{241,131,108}
\definecolor{cbblue}{RGB}{110,148,189}
\definecolor{cbred}{RGB}{162,4,162}
\definecolor{cborange}{RGB}{251,145,10}
\definecolor{cbgreen}{RGB}{5,162,162}
\definecolor{cbcyan}{RGB}{204,234,207}
\newcommand{\TextOCM}{\textsc{OCM}\xspace}
\newcommand{\TextFusion}{\textsc{Fusion}\xspace}
\newcommand{\TextCopy}{\textsc{Copy}\xspace}
\newcommand{\TextPiCommute}{\textsc{Pi-Copy}\xspace}
\newcommand{\pPhis}{p_{\mathrm{phys}}}
\newcommand{\pMem}{p_{\mathrm{mem}}}
\newcommand{\pTwo}{p_{\mathrm{2}}}
\newcommand{\pSPAM}{p_{\mathrm{SPAM}}}
\title{Fault Tolerance by Construction}
\author{Benjamin Rodatz, Boldizsár Poór, Aleks Kissinger}
\affiliation{University of Oxford, Oxford, UK}
\begin{document}

\allowdisplaybreaks
\maketitle

 \begin{abstract}
   A key challenge in fault-tolerant quantum computing is synthesising and optimising circuits in a noisy environment, as traditional techniques often fail to account for the effect of noise on circuits. 
In this work, we propose and numerically verify a framework for designing fault-tolerant quantum circuits that are \emph{correct by construction}.
The framework starts with idealised specifications of fault-tolerant gadgets and refines them using provably sound basic transformations.

To reason about manipulating circuits while preserving their error correction properties, we define \emph{fault equivalence}; two circuits are considered fault-equivalent if all undetectable faults on one circuit have a corresponding fault on the other.
This guarantees that the effect of undetectable faults on both circuits is the same. 
We argue that fault equivalence is a concept that is already implicitly present in the literature. 
Many problems, such as state preparation and syndrome extraction, can be naturally expressed as finding an implementable circuit that is fault-equivalent to an idealized specification. 

To utilize fault equivalence in a computationally tractable manner, we adapt the ZX calculus, a diagrammatic language for quantum computing.
We restrict its rewrite system to not only preserve the underlying linear map but also fault equivalence, i.e.\@ the circuit's behaviour under noise.
Enabled by our framework, we verify, optimise, and synthesise new and efficient circuits for syndrome extraction and cat state preparation.
We numerically confirm the improved performance of our optimised circuits in simulation.
We anticipate that fault equivalence can capture and unify different approaches in fault-tolerant quantum computing, paving the way for an end-to-end circuit compilation framework.
 \end{abstract}

\tableofcontents

\section{Introduction}

A key challenge in building scalable and reliable quantum computers is the suppression of noise caused by interaction with the environment and imperfect gate operations~\parencite{shorFaulttolerantQuantum1996}.
Fault-tolerant quantum computation (FTQC) offers a collection of techniques to deal with these issues.
FTQC usually employs a quantum error-correcting code to introduce redundancy in quantum data to allow errors to be detected and corrected, then describes how one can implement computations on the encoded data and rounds of error correction in such a way that faults from imperfect gate realizations remain localized.

It is often useful to break the task of compiling a large, fault-tolerant computation into building blocks, or `gadgets', which give physical quantum circuit implementations for the basic state preparations, gates, and measurements needed by a computation.
Constructing fault-tolerant versions of these building blocks is a crucial but difficult task.
A key challenge is that faults in multi-qubit operations can create correlated errors in quantum data, and these errors can propagate through the circuit in non-trivial ways.
Verifying that a gadget is robust against all such possibilities is a difficult problem, often requiring resource-intensive simulations or complex analytical proofs~\parencite{gotoMinimizingResource2016,chenVerifyingFaultTolerance2025}.

While many quantum circuit synthesis techniques exist to decompose complex quantum state preparations and unitaries into basic gates, most techniques are insufficient for the construction of fault-tolerant gadgets~\parencite{hwangFaulttolerantCircuit2022}. This is because they focus only on implementing the ideal linear map and do not account for the ways in which faults might alter this map.
However, different circuits that realize the same unitary can behave very differently under the influence of noise, depending on how they propagate faults.
A \textit{de facto} solution to this problem is to build a candidate circuit implementing a fault-tolerant computation, then check its behaviour under noise either analytically or using a classical simulation tool like Stim~\parencite{gidneyStimFast2021}.

Building fault-tolerant gadgets in this way requires a great deal of skill, trial and error, computational resources, or all of the above.
In this paper, we propose a new methodology for compiling fault-tolerant computations that are \textit{correct by construction}.
That is, we start with an idealised computation that has good properties with respect to fault tolerance, but is not implementable, i.e.\@ it is not expressed in terms of the basic, potentially noisy operations available on a quantum device.
We then transform that computation step-wise into something that is expressed in terms of these basic operations in such a way that the behaviour of the computation under faults is preserved.
This methodology of starting with an idealised \textit{specification} and \textit{refining} it into an implementable program using a sequence of provably sound basic transformations has been a cornerstone of formal software verification since at least the 1980s~\parencite{bjornerFormalSpecification1982,backRefinementCalculus1998}.
As we show, it can be fruitfully applied to the problem of fault-tolerant quantum compilation as well.

To adapt these ideas to FTQC, we introduce and formalize the concept of \emph{fault equivalence}:
Two circuits are considered fault-equivalent if all undetectable faults on one circuit have a corresponding fault on the other circuit of equal or lesser weight.
Practically, this notion ensures that circuit transformations do not introduce new, undetectable faults that propagate badly.
This concept unifies and generalises many related conditions and criteria in the literature, and gives us a way to reason about both the construction of specific fault-tolerant gadgets, such as state preparations and syndrome extraction circuits, as well as the fault-tolerant computation as a whole.

After establishing the concept and basic properties of fault equivalence for circuits in the first part of the paper, we will demonstrate how the ZX calculus, a useful tool for circuit rewriting and optimisation, can be adapted to take faults into account as well.
The ZX calculus represents quantum computations using certain undirected graphs called ZX diagrams, which can be transformed using a handful of simple rewrite rules~\parencite{coeckeInteractingQuantumObservables2008}.

It has been applied extensively to quantum circuit synthesis and optimization~\parencite{duncanGraphtheoreticSimplification2020,cowtanPhaseGadget2020,Staudacher2022reducing,winderlRecursivelyPartitioned2023,wille2022basis,gogioso2023annealing,riuReinforcementLearning2025,staudacher2024multicontrolled}, and has recently been of some interest in representing and reasoning about quantum error correction and FTQC~\parencite{kissingerPhasefreeZXDiagrams2022,kissingerScalableSpiderNests2024, huangGraphicalCSSCode2023, khesinUniversalGraph2025, debeaudrapZXCalculus2020, cowtanCSSCode2024,townsend-teagueFloquetifyingColourCode2023,bombinUnifyingFlavorsFault2024,rodatzFloquetifyingStabiliser2024,mcewenRelaxingHardwareRequirements2023,gidneyPairMeasurement2023,garvie2018Verifying,duncan2013verifying,}.
An important property of ZX diagrams is that they are time-agnostic, so a single diagram can have multiple interpretations and/or implementations depending on how one chooses a time orientation and which sets of nodes correspond to basic gates and measurements.
This lends the flexibility to completely re-interpret a quantum computation during the course of ZX rewriting, which has no analogue in traditional circuit synthesis techniques.

Adopting the ZX calculus for our purposes requires that equivalence under noise is preserved both (1) when mapping between quantum circuits and ZX diagrams, and (2) when performing ZX rewrites.
To achieve the former, we introduce the concept of a fault-equivalent representation.
This property ensures that the behaviour of the quantum circuit under some noise model is fully accounted for by the behaviour of the ZX diagram under its noise model. 
To achieve the latter, we define fault-equivalent rewrites, a subset of ZX rewrite rules that are proven to preserve fault equivalence in ZX diagrams.
Standard ZX rules only preserve the ideal (i.e.\@ fault-free) linear map represented by a ZX diagram, while these fault-equivalent rewrites preserve a more fine-grained notion of equivalence that also considers possible faults.

Building on these, we provide a constructive framework for fault-equivalent circuit synthesis:
\begin{enumerate}
  \item A specification, such as an idealised circuit under an idealised noise model, is translated into a fault-equivalent ZX diagram.
  \item The diagram is transformed and optimized using only fault-equivalent rewrites of ZX diagrams, which guarantee that the fault-tolerant properties of the idealised specification are preserved.
  \item A new, optimized circuit is extracted from the rewritten diagram that is guaranteed to be fault-equivalent to the idealised specification.
\end{enumerate}
Notably, the framework guarantees the fault tolerance of the derived circuits despite their lack of obvious structural indicators of fault tolerance, such as transversal gates~\parencite{shorFaulttolerantQuantum1996,steaneActiveStabilization1997}.

The key conceptual shift behind this new paradigm for fault-tolerant circuit synthesis is to analyze the relationship of quantum circuits under noise, rather than treating computations in isolation, as formalized by fault equivalence. 
More broadly, this relational viewpoint reveals that many apparently distinct aspects of fault-tolerant quantum computing are in fact expressions of a single unifying principle. 
Within this framework, fault-tolerant circuit verification amounts to determining whether a quantum circuit can be transformed back to some given specification, while fault-tolerant circuit optimization involves transforming a circuit into a more efficient equivalent one. 
Framing these problems in terms of circuit transformations enables productive exchange between synthesis, verification, and optimization techniques.

We demonstrate the effectiveness of this framework by applying it to the verification, optimisation, and design of elementary FTQC gadgets, leading to the discovery of new efficient circuits with guaranteed fault tolerance. 
We show how known methods, namely Shor- and Steane-style syndrome extraction, can be derived straightforwardly in this framework.
Additionally, we show that fault-equivalent rewrites directly yield a novel cat-state preparation method which is applicable to codes with any distance.
Finally, we use our framework to construct new provably sound optimisations of Shor and Steane-style syndrome extraction circuits, which use fewer ancillae and gates (\autoref{fig:steane-comparison}).
For the Steane-style syndrome extraction, we use STIM simulations~\parencite{gidneyStimFast2021} to compare the performance of the original and optimised circuits. 
We show that, as predicted, the optimised circuits outperform the original circuits.

The paper is organized into two parts.
In the first part, building on foundational spacetime code perspective of noise in quantum circuits \parencite{delfosseSpacetimeCodes2023,gottesmanOpportunitiesChallenges2022,baconSparseQuantum2017}, we define the central concept of fault equivalence, and demonstrate its utility by re-expressing established concepts within this new framework.
In the second part, we adapt the ZX calculus to reason about fault equivalence, define faults on ZX diagrams, and establish fault-equivalent rewrite rules.
We then use this framework to verify, synthesize, and optimize novel circuits for cat state preparation and syndrome extraction.

\begin{figure}[htb]
  \begin{subfigure}[t]{.5\textwidth}
    \[
 \tikzfig{steane-7-initial-circuit}
    \]
    \caption{Standard Steane-style syndrome extraction.}
    \label{fig:steane-initial}
  \end{subfigure}
  \begin{subfigure}[t]{.5\textwidth}
    \[
 \tikzfig{steane-7-final-circuit}
    \]
    \caption{Optimized circuit derived via fault-equivalent rewrites.}
    \label{fig:steane-final}
  \end{subfigure}

  \caption{Comparison of standard and optimized Steane-style syndrome extraction circuits for the $\code{7,1,3}$ Steane code using fault-tolerant state preparation of \textcite{gotoMinimizingResource2016}.}
  \label{fig:steane-comparison}
\end{figure}

\subsection{Related Work}
The construction of efficient and reliable fault-tolerant circuits is a central challenge in quantum computing, and a variety of approaches have been developed to address it.
These methodologies can be broadly categorized into foundational formalisms, automated heuristic search, protocol-specific design, and formal verification.
Our work introduces a novel framework that bridges formal methods with practical circuit synthesis using the concept of fault equivalence.

The spacetime perspective on faults, which is central to our work, builds upon several foundational ideas.
While the idea of propagating Pauli errors through Clifford circuits has been fundamental to fault-tolerant computation since its inception, the formal treatment of spacetime locations in the style we use in this paper was introduced by \textcite{baconSparseQuantum2017}.
In their work, the \emph{spackle map} was introduced to model how a single Pauli fault propagates and branches through gates, effectively creating a directed web of correlated errors across spacetime.
This perspective was elevated into a circuit-centric paradigm by \textcite{gottesmanOpportunitiesChallenges2022}, who argued for a fundamental shift in focus.
Instead of concentrating on a static error-correcting code that a circuit acts upon, this view treats the circuit's dynamics itself as defining the code, proposing that the primary goal of a fault-tolerant scheme should be to identify the spacetime location of a fault, rather than just its effect on a static codespace.
The modern mathematical language for this paradigm, which we make use of extensively in Part I of this paper, is provided by the theory of spacetime codes from \textcite{delfosseSpacetimeCodes2023}.

The primary tool we adapt for this purpose is the ZX calculus.
While it has been employed by \textcite{bombinUnifyingFlavorsFault2024} as a descriptive language to unify different models of topological fault tolerance, our work focuses on making it a constructive tool for fault-tolerant circuit synthesis.
This approach is a direct continuation of our prior work in \textcite{rodatzFloquetifyingStabiliser2024}, where the core concept of fault-equivalent rewrites was first introduced under the name \enquote{distance-preserving rewrites}.
The present work expands significantly on that foundation, introducing a more general framework and a more rigorous mathematical treatment of its core principles.

Pauli webs were first formulated by \textcite{bombinUnifyingFlavorsFault2024} as a graphical notation to identify and analyze checks and stabilizers of ZX diagrams.
A direct analogue, termed a \emph{detecting region}, was independently developed in the circuit formalism by \textcite{mcewenRelaxingHardwareRequirements2023}, where it was used to obtain insights and develop new, more hardware-efficient circuits for implementing QEC codes.
Additionally, Pauli webs were employed by \textcite{townsend-teagueFloquetifyingColourCode2023} to relate the detecting regions of a static quantum error correction code with those of a dynamic Floquet code derived via ZX transformations.
\textcite{litinskiBlockletConcatenation2025} utilized Pauli webs to keep track of and understand information flow of fault-tolerant protocols constructed with concatenation and transversal gates.
Furthermore, \textcite{wanPauliWebs2025} used Pauli webs to graphically verify the correctness of the transversal $\ket{Y} = S\ket{+}$ state initialization scheme for the surface code.

With this context, we can contrast our framework against other techniques.
A dominant approach for circuit synthesis is automated discovery using heuristic search and optimization.
Techniques based on Reinforcement Learning (RL)~\parencite{zenQuantumCircuit2024}, sometimes combined with other methods like unitary diagonalization~\parencite{weidenLearningSynthesize2024}, can be used to discover compact, hardware-efficient circuits.
Similarly, Satisfiability (SAT) and Satisfiability Modulo Theories (SMT) solvers can find gate- or depth-optimal circuits by translating the design problem into a set of logical constraints~\parencite{pehamAutomatedSynthesis2025,shuttyDecodingMerged2022}.
These search-based methods have proven effective in creating novel circuits, including those that leverage sophisticated techniques like flag qubits to detect errors~\parencite{chamberlandFlagFaulttolerant2018,chaoFlagFaultTolerant2020}.
However, in these approaches, fault tolerance is typically a property to be optimized within a complex reward function or verified after the fact.
In contrast, our framework replaces heuristic search with formal derivation, ensuring that fault tolerance is provably preserved through every transformation.

At the other end of the spectrum lies formal verification, which aims to prove that a completed circuit design is indeed fault-tolerant.
Recent advances, for example, use quantum symbolic execution to build automated verifiers for this purpose~\parencite{chenVerifyingFaultTolerance2025}.
While verification provides assurance for a final design, it relies on circuits that have already been constructed.
Our approach is to build fault-tolerant circuits that are correct by construction, obviating the need for a separate verification step.

Independent work introduced an alternative framework for fault-tolerantly transforming quantum circuits~\parencite{pesah2025faulttoleranttransformationsspacetimecodes}, similarly leveraging the spacetime perspective of quantum circuits~\parencite{baconSparseQuantum2017,gottesmanOpportunitiesChallenges2022,delfosseSpacetimeCodes2023}.
Instead of directly transforming the circuits under noise, they translate them into chain complexes and view rewrites as chain maps.

\part{Faults on Circuits}

\section{Faults on Circuits}\label{sec:faults-on-circs}

\subsection{Clifford circuits and spacetime locations}\label{sec:spacetime-locs}

Following \textcite{baconSparseQuantum2017,gottesmanOpportunitiesChallenges2022, delfosseSpacetimeCodes2023}, we consider Pauli faults in spacetime, defined as Pauli actions occurring at time steps between gates.

We use $\mathcal P^n$ and $\overline{\mathcal{P}^n}$ to respectively denote $n$-qubit Pauli operators and the quotient of $\mathcal P^n$ by its centre $\{\pm I, \pm iI \}$.


We view gates in a circuit as mapping the state at time step $t$ to a new state at time step $t + 1$.
In the next section, we will define faults as certain Pauli operators acting on the qubits at each time step, i.e.\@ in between the gates. To do this rigorously, we define the \textit{spacetime locations} in a circuit, as well as the linear operator obtained by \textit{applying} a Pauli operator at those locations.

\begin{definition}[Spacetime locations]
 Let $C$ be a quantum circuit.
 The \emph{spacetime locations} $\mathcal L$ of $C$ consist of pairs $(t,q)$ for each timestep $t$ and qubit $q$ that exists at timestep $t$.
 A \emph{fault} $F$ on $C$ is defined as an element in $\overline{\mathcal{P}^{|\mathcal L|}}$, indicating the action of $F$ on each spacetime location of $C$.
\end{definition}

See, for example, \autoref{fig:fault-examples} (a) where the octagons indicate the potential locations of Pauli faults\footnote{Prior work~\parencite{baconSparseQuantum2017,gottesmanOpportunitiesChallenges2022,delfosseSpacetimeCodes2023} used empty circles to indicate fault-locations. To avoid confusion with the spiders of the ZX calculus, we instead use octagons.}.

\begin{figure}[htb]
    \centering
    \begin{minipage}{0.45\textwidth}
        \centering
        \scalebox{1.15}{\tikzfig{02-errors/example-circuit-fault-locations}}
        \caption*{(a)}
    \end{minipage}
    \begin{minipage}{0.45\textwidth}
        \centering
        \scalebox{1.15}{\tikzfig{02-errors/example-circuit-measurement-fault}}
        \caption*{(b)}
    \end{minipage}
    \caption{A circuit on five qubits over four time-steps. We have (a) The spacetime locations of the circuit shown as octagons. (b) Applying a Pauli $X_jX_k \in \overline{\mathcal P}^{|\mathcal L|}$ to a circuit $C$. In this case, the resulting linear map contains a projection onto the $-1$ eigenspace of $Z \otimes Z$ and projections onto the $+1$ eigenspaces of the other two Pauli measurements.
    \label{fig:spacetime-locations}}
\end{figure}

For the purposes of this paper, we will focus on Clifford circuits, which consist of Clifford unitary gates, preparations of single-qubit states $\{ |0\>, |1\>, |{+}\>, |{-}\> \}$, and measurements of arbitrary Pauli operators.

Notably, circuits with spacetime locations define not just a single linear map, but a whole family of them, indexed by Pauli operators $P \in \overline{\mathcal{P}^{|\mathcal L|}}$.

\begin{definition}[Applying Paulis to a circuit]
    \label{def:pauli-application}
For a circuit $C$ with spacetime locations $\mathcal L$ and a Pauli operator $P \in \overline{\mathcal{P}^{|\mathcal L|}}$, let $C^P$ be the linear map obtained by placing a Pauli $P_j$ at location $j \in \mathcal L$, where $P \propto P_1 \otimes \ldots \otimes P_{|\mathcal L|}$. We call $C^P$ the \textit{application} of $P$ to $C$. Two Paulis are considered equivalent with respect to $C$, written $P \sim Q$ if $C^P = C^Q$, and a Pauli $P$ is called \textit{trivial} with respect to $C$ if $P \sim I$.
\end{definition}

Note that if $C$ contains Pauli measurements, we treat them as projectors onto the $+1$ eigenspace when we compute the linear map as in \autoref{def:pauli-application}. This corresponds to obtaining the $+1$ outcome for that measurement. If we wish to compute the map corresponding to the other outcome, we can introduce an anti-commuting Pauli operator before and after the measurement, as in \autoref{fig:spacetime-locations}(b). Hence, all the branches of a computation represented by a circuit $C$ involving Clifford gates and Pauli measurements can be computed as applications of Paulis to $C$.

\subsection{Faults}

Our main purpose in introducing spacetime locations is to represent the effect of faults on a circuit. Faults are simply Paulis drawn from a fixed subgroup $\langle \mathcal F\rangle \subseteq \overline{\mathcal{P}^{|\mathcal L|}}$. This group is generated by a chosen set $\mathcal F$ of \textit{atomic faults} we wish to consider. 
Atomic faults make up the faults that can occur, if a single fault event happens, e.g.\@ if a single measurement flip or a single gate error occurs. 
These correspond to the independent error mechanisms from STIM \parencite{gidneyStimFast2021}. 
Fixing this set of atomic faults is what defines our \textit{noise model}:

\begin{definition}[Adversarial noise model, fault weight]
 Given a quantum circuit $C$ with spacetime locations $\mathcal L$, an \emph{adversarial noise model} $\mathcal F$ is a set of atomic faults in $\overline{\mathcal{P}^{|\mathcal L|}}$, each representing the effect of a single fault event.
 The set of potential faults under $\mathcal F$ is the group $\langle \mathcal F\rangle \subseteq \overline{\mathcal{P}^{|\mathcal L|}}$.
 The \emph{weight} of a fault $F \in \langle \mathcal F\rangle$, denoted by $wt(F)$, is the minimal number of atomic faults in $\mathcal F$ required to generate $F$.
\end{definition}

Note that throughout this work, we only reason about the worst-case behaviour of circuits under noise, i.e.\@ how the circuits behave if at most $n$ faults occur.
This is sufficient for reasoning about circuit synthesis and optimisation of fault-tolerant gadgets.
As such, our noise models do not assign probabilities to faults.
We leave extending this framework to average-case behaviour under stochastic noise as future work. 

\begin{figure}[htb]
    \centering
    \begin{minipage}{0.45\textwidth}
        \centering
        \scalebox{1.15}{\tikzfig{02-errors/example-circuit-measurement-fault}}
        \caption*{(a)}
    \end{minipage}
    \begin{minipage}{0.45\textwidth}
        \centering
        \scalebox{1.15}{\tikzfig{02-errors/example-circuit-cnot-fault}}
        \caption*{(b)}
    \end{minipage}

    \caption{Examples of atomic faults on a circuit. (a) A measurement flip of the Pauli-$ZZ$ measurement created by placing a Pauli $X$ before and after the measurement. (b) Two Paulis after a CNOT, which count as a fault of weight one.}
    \label{fig:fault-examples}
\end{figure}

While there are a variety of potential noise models to consider, we will consider the following: 
\begin{definition}[Circuit-level noise]
 The \emph{circuit-level noise model} for some quantum circuit $C$ consists of the following atomic faults:
    \begin{itemize}
        \item For each wire: 
        \begin{itemize}
            \item \textbf{Qubit flips:} A fault acting only on that wire.  
        \end{itemize}      
        \item For each gate (including Pauli measurements):
        \begin{itemize}
            \item \textbf{Gate faults:} Faults acting on any subset of the output wires of the gate.
        \end{itemize}
        \item For each Pauli measurement:
        \begin{itemize}
            \item \textbf{Measurement flip:} An anticommuting Pauli applied immediately before and after the measurement.
            \item \textbf{Measurement flip + gate faults:} A measurement flip and a fault acting on any subset of the outputs of the measurement.
        \end{itemize}
    \end{itemize}
\end{definition}

\autoref{fig:fault-examples}(a) shows how measurement flips can be represented using this framework.
Despite affecting multiple spacetime locations, in this noise model, the measurement flip is an atomic fault, counting as one fault event.
\autoref{fig:fault-examples}(b) shows a multi-qubit fault created by a single fault event on the CNOT. 

As a mathematical reasoning tool, we additionally consider subsets of circuit-level noise, where we idealise certain operations and wires as fault-free:
\begin{definition}[Idealised submodel of circuit-level noise]
 Let $C$ be a quantum circuit where we idealise some of the wires and gates as fault-free.
 Multi-qubit Pauli measurements can be idealised as either fault-tolerant or fault-free. 
 The idealised submodel of circuit-level noise then consists of \textbf{all the faults of circuit-level noise except for the following:}
    \begin{itemize}
        \item For each \emph{fault-free} wire: 
        \begin{itemize}
            \item Removing all \textbf{qubit flips} affecting that wire. 
        \end{itemize}       
        \item For each \emph{fault-free} multi-qubit gate:
        \begin{itemize}
            \item Removing all \textbf{gate faults} on its outputs.
        \end{itemize}
        \item For each \emph{fault-tolerant but not fault-free} multi-qubit Pauli measurement:
        \begin{itemize} 
            \item Removing all \textbf{measurement flip + multi-qubit gate faults} only \emph{leaving measurement flips and measurement flips + single-qubit gate faults}.
        \end{itemize}
        \item For each \emph{fault-free} multi-qubit Pauli measurement:
        \begin{itemize} 
            \item Removing all \textbf{measurement flip and gate faults}.
        \end{itemize}
    \end{itemize}
\end{definition}

\begin{figure}
    \centering
    \tikzfig{02-errors/example-circuit-fault-free}

    \caption{Bold, purple wires and operations indicate components idealized as fault-free under the chosen noise model. The $ZZ$ measurement is idealised as fault-tolerant, and the $XZ$ measurement is idealised as completely fault-free (as indicated by the colour of the measurement arrow at the top).}
    \label{fig:idealised-wires}
\end{figure}

As such, idealised submodels of circuit-level noise consist of strictly fewer atomic faults, where, by assumption, we impose some of the circuit-level noise faults to be impossible.
While this is not realistic for circuits that are run on hardware, it serves as an essential mathematical reasoning tool throughout this work.

For multi-qubit Pauli measurements, we introduce two levels of idealisation: fault-tolerant and the stronger notion of fault-free.
While the two idealisations might not seem immediately obvious at first, they will become clearer throughout the rest of this work.
In short, oftentimes fault-tolerant implementations of Pauli measurements are assumed to be free of multi-qubit faults on the outputs but not free of all faults, in that faults that occur during the measurements can still create a measurement flip and/or faults of at most the same weight on the outputs.

We indicate the noise model under consideration by drawing the idealised edges and gates in purple.
We indicate fault-tolerant Pauli measurements by drawing the box in purple but leaving the measurement arrow at the top in black, and we draw fault-free Pauli measurements completely in purple.
For example, in \autoref{fig:idealised-wires}, we consider a submodel of circuit-level noise that idealises three edges, the $ZZ$ measurement as fault-tolerant, and the $XZ$ measurement as fault-free.

\subsection{Detecting and Correcting Faults}
Fault-tolerant quantum computing involves encoding circuits with redundancy so that faults can be detected and corrected.
This includes performing Pauli measurements that, in an ideal fault-free scenario, have predetermined outcomes.
When their outcomes differ from expectations, it indicates that something went wrong.

Following \textcite{gidneyStimFast2021}, we define:
\begin{definition}[Detecting sets]
    \label{def:detecting-sets}
 A \emph{detecting set} is a set of measurements whose parity of the measurement outcomes in the fault-free case is predetermined. 
\end{definition}

The most trivial example of a detecting set is two consecutive $ZZ$ Pauli measurements (see \autoref{fig:detecting-sets-examples} (a)).
In the fault-free case, we would expect these two measurements to have the same outcome, i.e.\@ their parity to be even.
If an anti-commuting fault, i.e.\@ a fault of type $X$ or $Y$, occurs that flips only one of the measurements, we detect it.
Similarly, flags~\parencite{chamberlandFlagFaulttolerant2018} can be understood as introducing additional detecting sets to a circuit, where we expect the outcome of the flag to be $0$, introducing a detecting set of size one (see \autoref{fig:detecting-sets-examples} (b)).
\begin{figure}
    \centering
    \begin{minipage}{0.45\textwidth}
        \centering
        \tikzfig{02-errors/detecting-set-example-1}
        \caption*{(a)}
    \end{minipage}
    \begin{minipage}{0.45\textwidth}
        \centering
        \tikzfig{02-errors/detecting-set-example-2}
        \caption*{(b)}
    \end{minipage}
    \caption{Two examples of detecting sets. Both detect if there are an odd number of faults of type $X$ and $Y$ in the highlighted spacetime locations. (a) Two consecutive $ZZ$ measurements. (b) A flag around two CNOTs.}
    \label{fig:detecting-sets-examples}
\end{figure}

We say: 
\begin{definition}[Detectability of a fault]
 A fault is \emph{detectable} if it flips an odd number of measurements in a detecting set.
\end{definition}

By flipping an odd number of measurements in a detecting set, a detectable fault creates a measurement outcome that violates the expected parity of said detecting set.
Importantly, a fault may be undetectable within the analysed circuit fragment while being detectable by a later circuit fragment. 
For example, faults on the outputs of a circuit are undetectable --- they do not flip any measurements --- but might well be detected by the next part of the circuit. 

To calculate the detecting sets of a Clifford circuit and which faults they can detect, one can use the mathematical tools laid out by \textcite{delfosseSpacetimeCodes2023}, building on the stabiliser formalism.
Alternatively, within the ZX calculus, detecting regions and detectable faults can be formalised in terms of Pauli webs~\parencite{bombinUnifyingFlavorsFault2024,rodatzFloquetifyingStabiliser2024}.
As we review the ZX calculus and Pauli webs in the second half of this paper, we defer the technical details of calculating detection sets and the detectability of faults to \autoref{sec:detection-regions}. 

We can quantify a circuit's resilience to noise by generalizing the notion of distance from error correction codes to circuits.
Similar to~\parencite{bravyiHighthresholdLowoverhead2024}, we define:
 
\begin{definition}[Circuit Distance]
 Let $C$ be a circuit under some noise model $\mathcal F$.
 Then the \emph{distance} of $C$ under $\mathcal F$, denoted as $dist_{\mathcal F}(C)$, is defined as the minimum weight of any non-trivial, undetectable fault in $\langle \mathcal F \rangle$.
\end{definition}

Relating this to the prior work of \textcite{delfosseSpacetimeCodes2023}, we observe that the circuit distance is closely related to the distance of the circuit's spacetime code.

As faults on the boundary edges are undetectable, for a circuit to have a non-trivial distance, it cannot have any faults affecting inputs or outputs.
Just as distance and correctability are closely related in quantum error correction codes, we observe: 

\begin{proposition}[Correctability of circuits]
    \label{prop:correctability-of-circuits}
 Let $C$ be a circuit with distance $d$ under some noise model $\mathcal F$. 
 Then any two non-equivalent faults of weight at most $\lfloor \frac{d - 1}{2} \rfloor$ have different syndromes.
\end{proposition}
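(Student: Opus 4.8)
The plan is to adapt the standard quantum error-correction argument to the setting of faults and to argue by contraposition: rather than proving directly that non-equivalent low-weight faults have distinct syndromes, I will show that any two faults of weight at most $\lfloor (d-1)/2\rfloor$ sharing the same syndrome must already be equivalent. The argument rests on three algebraic facts about the group $\langle \mathcal F\rangle$ of potential faults: (i) the syndrome map is a homomorphism into an $\mathbb{F}_2$-vector space, so the syndrome of a product $F_1F_2$ is the mod-$2$ sum of the individual syndromes; (ii) the weight is subadditive, $wt(F_1F_2)\le wt(F_1)+wt(F_2)$, which is immediate from the definition of $wt$ as the minimal number of atomic generators needed; and (iii) fault equivalence is a group congruence, so that $F_1$ and $F_2$ are equivalent exactly when $F_1F_2^{-1}$ is trivial.

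First I would make the notion of syndrome precise, associating to each fault the vector that records, for every detecting set, the parity of the number of its measurements that the fault flips. Fact (i) then follows because the set of measurements flipped by a product of faults is the symmetric difference of the sets flipped individually, and a fault is detectable precisely when this vector is nonzero, recovering the definition of detectability.

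Next, given non-equivalent faults $F_1,F_2$ each of weight at most $\lfloor (d-1)/2\rfloor$, I assume for contradiction that they have equal syndromes and set $F := F_1F_2$, recalling that in $\overline{\mathcal{P}^{|\mathcal L|}}$ every element is its own inverse, so $F_2^{-1}=F_2$. By (i) the syndrome of $F$ is the sum of two equal vectors, hence zero, so $F$ is undetectable; by (ii) and the weight arithmetic $wt(F)\le 2\lfloor (d-1)/2\rfloor\le d-1<d$ (which holds for both parities of $d$). Since $dist_{\mathcal F}(C)=d$ is the minimal weight of a non-trivial undetectable fault, an undetectable fault of weight strictly below $d$ must be trivial; thus $F=F_1F_2$ is trivial, and by (iii) $F_1$ and $F_2$ are equivalent, contradicting the hypothesis.

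The main obstacle I expect is establishing fact (iii) cleanly, i.e.\@ that equivalence of faults behaves like the quotient of $\langle\mathcal F\rangle$ by the subgroup of trivial faults. I plan to argue this by observing that the circuit $C^F$ depends on $F$ only through the net Pauli accumulated on each maximal identity-free segment of each wire, so that $F\mapsto C^F$ factors through a group homomorphism whose kernel is exactly the trivial faults; equivalence is then equality of images, which is a genuine congruence and yields $F_1\sim F_2 \iff F_1F_2^{-1}\text{ trivial}$. Facts (i) and (ii) are routine once the syndrome is set up, so the whole proof reduces to this structural observation about how applied faults determine the resulting circuit.
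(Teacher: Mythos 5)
Your proposal is correct and follows essentially the same route as the paper's own proof: form the product fault $F_1F_2$, use additivity of syndromes to conclude it is undetectable, bound its weight by $2\lfloor\frac{d-1}{2}\rfloor \leq d-1$, and contradict the assumption that $dist_{\mathcal F}(C)=d$. The only difference is one of care rather than substance: you explicitly justify the step that non-equivalence of $F_1,F_2$ is the same as non-triviality of $F_1F_2$ (which the paper asserts without comment, and which indeed requires the Clifford/Pauli-propagation structure to make equivalence a congruence), and you handle both parities of $d$ correctly where the paper writes an equality $2\lfloor\frac{d-1}{2}\rfloor = d-1$ that only holds for odd $d$.
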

\begin{proof}
 We can prove this by contradiction. 
 Let us assume $F_1, F_2$ are non-equivalent faults of weight at most $\lfloor \frac{d - 1}{2} \rfloor$ with the same syndrome.
 Then the combined fault $F_1 F_2$ is undetectable, as the syndrome of the combination of two faults is the same as adding the syndromes of the individual faults mod 2~\parencite{delfosseSpacetimeCodes2023}.
 Additionally, $F_1 F_2$ is non-trivial, as $F_1$ and $F_2$ were assumed not to be equivalent.
 But then $F_1 F_2$ is a non-trivial, undetectable fault of weight at most $\lfloor \frac{d - 1}{2} \rfloor \times 2 = d - 1$.
 This contradicts our assumption that the circuit has distance $d$. 
\end{proof}

This means that equivalence classes of faults that contain elements of weight at most $\lfloor \frac{d - 1}{2} \rfloor$ are uniquely identifiable.
For each such equivalence class of faults, we can calculate its effect and the corresponding correction~\parencite{delfosseSpacetimeCodes2023}.
Just as with quantum error detection codes, identifying the correct equivalence class of faults, given the syndrome information, is computationally hard. 
\autoref{prop:correctability-of-circuits} purely guarantees that this is theoretically doable, for example, using a look-up table.
Leveraging this close relationship between distance and correctability, for the remainder of this paper, we focus on the detectability of faults.

\section{Fault Equivalence}
\label{sec:circuit-fault-equivalence}
The central focus of this paper is to study the behaviour of circuits under noise.
Specifically, we observe that circuits which implement the same operation can nonetheless behave very differently when faults occur, with faults in one potentially being far more detrimental than in another.
Therefore, when synthesising or optimising circuits in a noisy setting, the equivalence of noise-free behaviour alone is insufficient.
We need a more fine-grained notion of equivalence, which we call \emph{fault equivalence}, relating circuits under noise.
In the remainder of this section, we argue that fault equivalence is not only a useful notion but one implicitly present in much of the thinking around fault-tolerant circuits.
We illustrate this by reframing various fault-tolerant circuit compilation tasks from the literature as synthesising implementable circuits that are fault-equivalent to some idealised specification of a circuit and its behaviour. 

\subsection{Behaviour of Circuits under Noise}
To showcase the differing behaviour of circuits under noise, consider the following two circuits, both preparing the cat state on four qubits, which is defined, up to normalisation, as $\ket{0000} + \ket{1111}$:
\[\tikzfig{03-fault-equivalence-in-action/example-faulty-cat-state} \qquad \qquad \text{vs. } \qquad \qquad \tikzfig{03-fault-equivalence-in-action/example-fault-free-cat-state}\]

The circuit on the right also prepares a cat state using the gates available on a quantum computer. 
The circuit on the right consists of an idealised, fault-free cat state preparation.
The only allowed faults are Pauli flips on the data qubits.
These two circuits behave differently under noise.
Notably, certain faults in this circuit can propagate and generate two faults on the data qubits:
\[\tikzfig{03-fault-equivalence-in-action/example-faulty-cat-state-fault-1} \quad = \quad  \tikzfig{03-fault-equivalence-in-action/example-faulty-cat-state-fault-2} \quad = \quad  \tikzfig{03-fault-equivalence-in-action/example-fault-free-cat-state-fault}\]

A single fault event on the larger circuit can be as detrimental as two fault events on the idealised circuit. 
This is problematic.
For example, consider a circuit that uses the idealised cat state preparation and has a distance of two, meaning all weight-one faults are detectable.
If we replace the ideal preparation with this implementation, a single fault can propagate and become a weight-two fault.
This fault may no longer be detectable, potentially reducing the circuit's distance to one.
Thus, despite behaving the same in the noise-free setting, these two circuits are not equivalent in a noisy setting.

We define:
\begin{definition}[Fault equivalence, $w$-fault equivalence]
   \label{def:fault-equivalence}
 Let $C_1, C_2$ be two circuits that implement the same linear map with respective noise models $\mathcal{F}_1, \mathcal{F}_2$. The circuit $C_1$ under $\mathcal{F}_1$ is \emph{$w$-fault-equivalent} to $C_2$ under $\mathcal{F}_2$ if and only if for all faults $F_1 \in \langle \mathcal{F}_1 \rangle$ with weight $wt(F_1) < w$, we have either:
    \begin{enumerate}
        \item $F_1$ is detectable, or
        \item there exists a fault $F_2 \in \langle \mathcal{F}_2 \rangle$ on $C_2$ such that:
        \begin{itemize}
          \item $wt({F}_2) \leq wt({F}_1)$ and
          \item $C_1^{F_1} = C_2^{F_2}$.
        \end{itemize}
      \end{enumerate}
 The condition must similarly hold for all faults ${F}_2 \in \langle \mathcal{F}_2 \rangle$ with weight $wt({F}_2) < w$, making this equivalence relation symmetric.
 We write $C_1 \underset{w}{\FaultEq} C_2$.
 Two circuits $C_1$ and $C_2$ are \emph{fault-equivalent} if they are $w$-fault-equivalent for all $w \in \mathbb{N}$.
 We write $C_1 \FaultEq C_2$.
\end{definition}

Intuitively, $w$-fault equivalence guarantees a correspondence between the undetectable faults of weight less than $w$ in two circuits.
It ensures that for any undetectable fault on one side, there exists an equivalent fault on the other side with at most the same weight.
Therefore, neither circuit can achieve a specific undetectable fault with fewer fault events than the other.
Similarly, fault equivalence guarantees that every fault, no matter the weight, is either detectable or has a corresponding fault of at most the same weight on the other side. 

An alternative, equivalent formulation of fault-equivalence is that we can observe that undetectable faults $F$ that create the same linear map $D^F$ form an equivalence class up to stabilisers of the gates, also known as gauges \cite{ruschCompletenessFault2025}. 
Fault-equivalence requires that the minimum weight representative in each equivalence class has the same weight on $C_1$ as on $C_2$. 

This means that ($w$-)fault equivalence is a more fine-grained notion of equivalence than the usual notion of circuit equivalence, which requires the two circuits to represent the same linear map. 
Fault equivalence guarantees that two circuits represent the same linear map. 
However, the other way around, two circuits implementing the same linear map does not guarantee that they are fault-equivalent. 

Coming back to the counter-example given above, we can see that, despite implementing the same linear map, the two circuits are not fault-equivalent; there exists no fault of weight one on the idealised cat state that has the same effect as the fault shown above. 
The smallest fault with the same effect is of weight two.

\subsection{Defining Fault-Tolerant Circuit Synthesis with Fault Equivalence}
To reason that fault equivalence is a useful notion, we recover the description of common problems in FTQC in terms of fault equivalence, focusing on fault-tolerant circuit synthesis.
Essentially, they boil down to synthesising implementable circuits that are fault-equivalent to some idealised specification of a circuit.

For example, fault-tolerant cat state preparation up to some weight $w$ is usually expressed as:
``Find a circuit that implements the cat state such that all faults of weight $n < w$ are either detectable or create at most $n$ data errors.''
This is formally equivalent to: 

\begin{restatable}[Fault-tolerant cat state preparation]{definition}{ftCatStatePrep}
  \label{def:ft-cat-state-prep}
 A circuit $C$ \emph{fault-tolerantly prepares the cat state} up to some weight $w$, if:
  \[
 \tikzfig{03-fault-equivalence-in-action/def-ft-cat-state}
  \]
\end{restatable}

Here, as well as in the rest of the paper, we consider both circuits under circuit-level noise and idealisations thereof. 
From now on, we will also omit drawing the spacetime locations.

If $C$ is $w$-fault-equivalent to the specification, we are guaranteed that any circuit-level fault $F$ on $C$ of weight less than $w$ must either be detectable or correspond to a fault of at most the same weight on the idealised specification.
As the only allowed faults on the idealised specification exactly correspond to data errors, this is formally equivalent to the definition usually found in the literature.

Similarly, fault-tolerant logical state preparation can be expressed as:
\begin{definition}[Fault-tolerant logical state preparation]
 Let $enc$ be the encoder of some error correction code and $\psi$ be a logical state.
 Then a circuit $C$ \emph{fault-tolerantly prepares a $\psi$} up to some weight $w$, if:
    \[\tikzfig{03-fault-equivalence-in-action/def-ft-state}\]
\end{definition}

Given that we idealised the state preparation on the logicals and the encoding as fault-free, the only faults we allow on this idealised specification are data errors after a successful, fault-free state preparation. 
Once again, this corresponds to the more common definition of logical state preparation as: ``A circuit that prepares the logical state such that any fault of weight $n < w$ is either detectable or creates at most $n$ many faults on the data.''

Going beyond state preparation, we can equally express fault-tolerant syndrome extraction as:
\begin{restatable}[Fault-tolerant syndrome extraction]{definition}{ftSyndromeExtraction}
  \label{def:ft-syndrome-extraction}
 Given a stabiliser code $enc$ of distance $d$ with stabiliser generators $S_1, \dots S_m$, a circuit $C$ \emph{fault-tolerantly implements the syndrome extraction} for $enc$, if:
  \[\tikzfig{03-fault-equivalence-in-action/ideal-error-detection}\]
\end{restatable}
Here, we require that, given a stabiliser code, we want a circuit that is $d$-fault-equivalent to measuring a generating set of the stabilisers perfectly without any fault in between the measurements. 

Oftentimes, syndrome extraction is done by repeating imperfect, yet fault-tolerant syndrome measurements, meaning syndrome measurements may give the wrong result if measurement errors occur, however, they do not propagate faults badly:
\begin{definition}[Fault-tolerant syndrome measurement]
    \label{def:ft-syndrome-measurement}
 A circuit $C$ \emph{fault-tolerantly implements a syndrome measurement} of type $P_1P_2 \dots P_n$ up to some weight $w$, if:
    \[\tikzfig{03-fault-equivalence-in-action/ZZZZ-parity-check}
    \]
\end{definition}

As atomic faults on fault-tolerant Pauli measurements consist of either data errors and/or measurement flips, this definition requires that any fault of weight $n$ is either detectable or creates at most $n$ data errors and/or a flip of the measurement outcome. 
For syndrome measurements in particular, \textcite[Definition 15]{baconSparseQuantum2017} give a formally equivalent notion which they call \emph{fault-tolerant sub-projection}, stating that any fault should either make the projection go to $0$, i.e.\@ be detectable, or have a corresponding Pauli error on the inputs and outputs of at most the same weight.
As such, fault equivalence not only captures existing notions in the literature but has previously been proposed for very specific settings.
Here, we show how it can be generalised beyond fault-tolerant syndrome measurements. 

\begin{remark}
 The syndrome measurement with one bare ancilla used in the surface code is not considered fault-tolerant under our definition above. 
 This is due to the fact that a circuit that satisfies the condition above can be used in any context, while the syndrome extraction of the surface code is bespoke to that specific context. 
 It is possible within the language of fault equivalence to further specify context and thus allow certain hook errors in the specification of one's desired gadgets. 
 However, that is beyond the scope of the current work.    
\end{remark}

\subsection{Formalising FTQC with Fault Equivalence}
Beyond using fault equivalence to reason about fault-tolerant gadgets, we can use it to recover notions of quantum error correction and fault-tolerant quantum computing.

For example, we can observe a close link between circuit distance and fault-equivalence:
\begin{proposition}
    \label{prop:dist-as-fe}
 Let $C$ be a quantum circuit with some noise model $\mathcal F$.
 Then $dist_{\mathcal F}(C) = d$, i.e.\@ $C$ has circuit distance $d$, if and only if $d$ is the maximum value such that $C$ is $d$-fault-equivalent to its idealised, fault-free implementation, i.e.:
    \[\resizebox{\textwidth}{!}{\tikzfig{03-fault-equivalence-in-action/dist-as-fe}}\]
\end{proposition}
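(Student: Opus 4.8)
The plan is to unfold both sides of the claimed biconditional into a single statement about undetectable faults and check that they coincide. First I would pin down the right-hand object: the \emph{idealised, fault-free implementation} of $C$ is the same circuit equipped with the noise model $\mathcal F_0$ in which every wire and gate is idealised as fault-free, so that its group of potential faults is $\langle \mathcal F_0 \rangle = \{I^{|\mathcal L|}\}$, containing only the trivial fault. With this identified, I would specialise the definition of $w$-fault-equivalence to the pair $(C,\mathcal F)$ and $(C,\mathcal F_0)$.

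The central reduction is that, because the only available fault on the fault-free side is the trivial one, condition~(2) of the definition of $w$-fault-equivalence collapses: the sole candidate $F_2 \in \langle \mathcal F_0 \rangle$ is $F_2 = I^{|\mathcal L|}$, for which $C^{F_2} = C$, so condition~(2) holds for a fault $F_1$ exactly when $C^{F_1} = C$, i.e.\ when $F_1$ is trivial. Hence $C$ under $\mathcal F$ is $w$-fault-equivalent to its fault-free implementation if and only if every $F_1 \in \langle \mathcal F \rangle$ with $wt(F_1) < w$ is either detectable or trivial. I would then verify that the symmetric half of the definition (ranging over $F_2 \in \langle \mathcal F_0 \rangle$) is automatically satisfied, since the only such $F_2 = I^{|\mathcal L|}$ is matched by the trivial fault $I^{|\mathcal L|} \in \langle \mathcal F \rangle$ of weight $0$ with identical effect.

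Next I would translate this into the language of distance. The statement that every fault of weight below $w$ is detectable or trivial is precisely the statement that there is \emph{no} non-trivial, undetectable fault of weight $< w$ in $\langle \mathcal F \rangle$; by the definition of $dist_{\mathcal F}(C)$ as the minimum weight of exactly such a fault, this is equivalent to $dist_{\mathcal F}(C) \geq w$. Therefore $C$ is $w$-fault-equivalent to its fault-free implementation if and only if $dist_{\mathcal F}(C) \geq w$, and the maximal $w$ for which this holds is exactly $dist_{\mathcal F}(C)$, which yields both directions of the proposition at once.

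I expect the only delicate point to be the definitional bookkeeping around the fault-free submodel rather than any genuine difficulty. One must be careful that ``fault-free'' really leaves $I^{|\mathcal L|}$ as the unique potential fault, so that condition~(2) forces \emph{triviality} of $F_1$ rather than mere equality of effect with some non-trivial $F_2$; and one must handle trivial faults correctly, noting that they are undetectable yet have no effect and so are exactly the faults excluded from the distance-defining set. Once these matters are settled, the equivalence is a direct comparison of two restatements of ``no non-trivial, undetectable fault of weight below $w$.''
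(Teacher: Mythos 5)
Your proof is correct and takes essentially the same route as the paper's: both reduce $w$-fault-equivalence with the fault-free idealisation to the statement that every fault of weight $< w$ is detectable or trivial (since the fault-free side admits only the trivial fault, with weight $0$), and then identify the maximal such $w$ with $dist_{\mathcal F}(C)$. Your write-up merely makes explicit the collapse of condition (2) and the symmetric half of the definition, which the paper's proof passes over with ``clearly.''
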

\begin{proof}
  $\Rightarrow$: If $C$ has $dist_{\mathcal F}(C) = d$ that means that all non-trivial faults of weight less than $d$ must be detectable. But then, clearly, $C$ is $d$-fault-equivalent to its idealised fault-free version.
 However, as $C$ has distance $d$ and not distance $d + 1$, there must exist some fault $F$ of weight $d$ that is non-trivial and undetectable, so $C$ is not $d+1$-fault-equivalent to its fault-free self.

  $\Leftarrow$: If $C$ is $d$-fault-equivalent to its idealised fault-free version but not $d + 1$-fault-equivalent to its idealised fault-free version, that means that all faults of weight less than $d$ must be detectable or trivial. However, there exists a fault of weight less than $d + 1$ that is non-trivial and undetectable. Therefore, $dist_{\mathcal F}(C) = d$.
\end{proof}

Furthermore, we can use fault equivalence to recover the standard definition of a quantum error correction code. 
A usual definition of a quantum error correction code is:

\begin{definition}[Quantum Error Correction --- Standard Definition]
\label{def:qec-lit}
A linear map $enc: (\mathbb{C}^{2})^{\otimes k} \to (\mathbb{C}^{2})^{\otimes n}$ is an encoder for a \emph{quantum error correction code} with distance $d$ if there exists a linear map $dec: (\mathbb{C}^{2})^{\otimes n} \to (\mathbb{C}^{2})^{\otimes k}$ (the decoder) such that for all faults ${F} \in \overline{\mathcal{P}^n}$ with weight $wt({F}) < d$ applied to the encoded state, the fault ${F}$ is detectable by the decoder.
\end{definition}

This concept can be expressed equivalently using fault equivalence:

\begin{definition}[Quantum Error Correction  --- Using Fault Equivalence]
\label{def:qec}
A linear map $enc: (\mathbb{C}^{2})^{\otimes k} \to (\mathbb{C}^{2})^{\otimes n}$ is an encoder for a \emph{quantum error correction code} with distance $d$ if there exists a linear map $dec: (\mathbb{C}^{2})^{\otimes n} \to (\mathbb{C}^{2})^{\otimes k}$ (the decoder) such that:
\[\tikzfig{03-fault-equivalence-in-action/def-qec-dist-preserv}\]
\end{definition}

The fault equivalence condition requires that any fault of weight less than $d$ on the left-hand side has to either be detectable or has to have an equivalent fault on the right-hand side.
As the right-hand side is idealized as fault-free, the only fault it allows for is the trivial, empty fault.
This means that any fault of weight less than $d$ either has to be detectable or has to be equivalent to the trivial fault.
This is exactly the usual definition of the detectability of faults.

Going one step further, we can define a notion of fault-tolerant quantum computing in terms of fault equivalence. 
Quantum error correction is primarily about protecting quantum data (states) from noise by encoding it with redundancy such that errors can be detected and corrected.
Similarly, fault-tolerant quantum computing aims to protect quantum computations (circuits) from faults such that faults can be detected and corrected.

We define:
\begin{definition}[Fault-tolerant quantum computation under adversarial noise]
 Given a quantum circuit $C$, $C_{FT}$ is a $d$-fault-tolerant implementation of $C$ if we have:
        \[\tikzfig{03-fault-equivalence-in-action/def-fault-tolerance}\]
\end{definition}

That is, given an idealized quantum circuit $C$ that we want to execute, we have to find an encoding of that circuit $C_{FT}$ that performs the same computation with redundancy in such a way that faults can be detected and corrected.
Similar to our definition of quantum error correction, this definition requires that any fault of weight less than $d$ on $C_{FT}$ has to either be detectable or trivial. 
Thus, we have encoded $C$ in such a way that faults of weight less than $d$ must be detectable and therefore, by \autoref{prop:correctability-of-circuits}, faults of weight at most $\lfloor \frac{d-1}{2} \rfloor$ must be correctable.

\subsection{Properties of Fault Equivalence}
Fault equivalence is a useful property for reasoning about quantum circuits in a noisy setting.
However, checking whether two quantum circuits are $w$-fault-equivalent is computationally hard. 
Intuitively, we have to iterate all fault combinations of weight less than $w$ to check whether they are detectable or have a corresponding fault in the other diagram. 
More formally, we can observe that calculating the distance of a code can be reduced to checking fault equivalence:
\begin{theorem}
 Checking whether two quantum circuits are fault-equivalent is NP-hard.
\end{theorem}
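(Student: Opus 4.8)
The plan is to prove NP-hardness by reduction from the minimum-distance problem for error-correcting codes, which is a classic NP-hard problem (its decision version---``does a given parity-check matrix $H$ admit a nonzero codeword of Hamming weight at most $w$?''---was shown to be NP-complete by Vardy). The bridge to fault equivalence is \autoref{prop:dist-as-fe}: the circuit distance $dist_{\mathcal F}(C)$ is exactly the largest $d$ for which $C$ is $d$-fault-equivalent to its idealised, fault-free implementation. Consequently, a polynomial-time procedure for deciding fault equivalence would let us compute $dist_{\mathcal F}(C)$ by binary search over the threshold $d$, querying ``is $C$ $d$-fault-equivalent to its fault-free version?'' for only $O(\log|\mathcal L|)$ values of $d$. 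It therefore suffices to encode an arbitrary code-distance instance as a circuit-distance computation.

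Given a parity-check matrix $H \in \mathbb{F}_2^{m \times n}$, I would build a Clifford circuit $C_H$ on $n$ qubits whose fault locations carry only single-qubit $X$-flips---that is, take the adversarial noise model $\mathcal F$ to consist exactly of the $n$ single-qubit $X$ faults on the data wires---together with one idealised, fault-free $Z$-type Pauli measurement $Z^{h_i} := \prod_{j : (h_i)_j = 1} Z_j$ for each row $h_i$ of $H$. Because the measurements are idealised as fault-free, they contribute no fault locations, and an $X$-fault $x \in \mathbb{F}_2^n$ flips the $i$-th measurement precisely when $x$ anticommutes with $Z^{h_i}$, i.e.\ when $h_i \cdot x = 1 \bmod 2$. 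Hence $x$ is undetectable iff $Hx = 0$, and $x$ is non-trivial iff $x \neq 0$. The minimum weight of a non-trivial undetectable fault is thus exactly the minimum Hamming weight of a nonzero codeword of $\ker H$, so $dist_{\mathcal F}(C_H)$ equals the minimum distance of the code. Chaining the reductions---fault-equivalence oracle $\Rightarrow$ circuit distance $\Rightarrow$ code distance---shows that an efficient fault-equivalence checker solves the NP-hard minimum-distance problem, establishing NP-hardness.

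The main obstacle is not the skeleton of the reduction, which is forced by \autoref{prop:dist-as-fe}, but the careful verification that the constructed circuit has the claimed distance. Two points need attention. First, one must pin down the noise model so that the only faults are $X$-flips: this is why the measurements are idealised as fault-free (removing measurement-flip and gate-fault atomic faults) and why no $Z$- or $Y$-type flips are admitted, guaranteeing that the set of undetectable faults is exactly $\ker H$ rather than some larger Pauli subgroup. Second, one must check that ``non-trivial'' in the circuit sense (not equivalent to $I^{|\mathcal L|}$) coincides with ``nonzero'' for these faults, which holds because distinct $X$-flip patterns yield genuinely distinct circuits $C_H^x$. A minor additional subtlety is the direction of the reduction: a fault-equivalence oracle answers ``$dist \ge d$?'', the complement of the NP-complete ``short codeword exists?'' question, so the argument is cleanest under polynomial-time Turing (Cook) reductions---exactly the sense in which computing a code's distance is intractable---rather than many-one reductions. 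One could alternatively route the reduction through quantum stabiliser codes via \autoref{def:qec}, but the purely classical $\ker H$ embedding keeps the distance bookkeeping transparent.
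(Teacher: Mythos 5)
Your high-level strategy --- reduce a known NP-hard minimum-distance problem to circuit distance, then to fault equivalence via \autoref{prop:dist-as-fe} --- is exactly the paper's, and your remark that the argument is cleanest under Turing reductions is a legitimate refinement the paper glosses over. The gap is in the circuit construction itself. In this framework a fault is detectable only if it flips oddly many measurements in a \emph{detecting set}, i.e.\ a set of measurements whose outcome parity is predetermined in the fault-free case. In your circuit $C_H$ the input is an arbitrary boundary state and each $Z^{h_i}$ is measured exactly once, so the only sets of measurements with predetermined parity are those indexed by linear dependencies $\sum_{i \in S} h_i = 0$ among the rows of $H$; but for such a set, any $X$-fault $x$ flips $\sum_{i \in S} h_i \cdot x = \left(\sum_{i \in S} h_i\right)\cdot x = 0$, i.e.\ evenly many, of its measurements. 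Hence \emph{no} fault in your noise model is detectable, every nonzero $x$ is undetectable and non-trivial, and $dist_{\mathcal F}(C_H) = 1$ for every $H$ --- the reduction collapses. Your ``$x$ is undetectable iff $Hx = 0$'' silently equates ``flips a measurement outcome'' with ``is detected,'' but a single measurement with a random outcome carries no signature of a fault; detection needs redundancy.

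The missing ingredient is precisely what the paper's construction supplies: it sandwiches the fault locations between \emph{two} perfect rounds of stabiliser measurements, so that each repeated pair of measurements forms a detecting set, and an undetectable non-trivial fault between the rounds must be a logical operator, of weight exactly $d$. You can repair your classical embedding the same way --- measure each $Z^{h_i}$ once before and once after the fault locations --- or, alternatively, prepend a fault-free preparation of $\ket{0}^{\otimes n}$, making each single measurement its own detecting set with predetermined outcome. With either fix, $x$ is detectable iff $Hx \neq 0$, the undetectable non-trivial faults are exactly $\ker H \setminus \{0\}$, and $dist_{\mathcal F}(C_H)$ equals the classical minimum distance, so your chain of reductions goes through. (A cosmetic difference would remain: the paper's source problem is the distance of quantum stabiliser codes, citing \textcite{kapshikarHardnessMinimum2023}, whereas you reduce from the classical problem of Vardy; both are valid starting points once the construction is repaired.)
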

\begin{proof}
 We will prove this by showing that calculating the distance of a code is the same as calculating the maximum $w$ for which two specific circuits are fault-equivalent.

 Given a stabiliser code with distance $d$ with stabiliser generators $S_1, \dots S_m$, we have:
    \[dist\left(\tikzfig{03-fault-equivalence-in-action/NP-hard-proof} \right) = d\]
 This is due to the fact that when perfectly measuring the stabilisers of the code, then allowing faults and then measuring the stabilisers of the code again, the smallest non-trivial, undetectable error must be of weight exactly $d$.
    
 But then, as calculating the distance of a diagram can be phrased in terms of checking fault equivalence (\autoref{prop:dist-as-fe}), calculating fault equivalence must be at least as hard as calculating the distance of a stabiliser code, which is known to be NP-hard~\parencite{kapshikarHardnessMinimum2023}.
\end{proof}

However, while arbitrary fault equivalence is computationally hard to check, there are some properties of fault equivalence that will be very useful throughout the rest of this paper:

\begin{proposition}
  \label{prop:fault-equivalence-compositionality}
 Fault equivalence is compositional, i.e.:
    \[\resizebox{\textwidth}{!}{\tikzfig{03-fault-equivalence-in-action/fe-compositionality}}\]
\end{proposition}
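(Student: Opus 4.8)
The plan is to prove compositionality directly from the definition of $w$-fault equivalence, by analysing how faults, weights, and detectability decompose across the composition boundary. Writing the left-hand side of the claimed identity as the juxtaposition (sequential and/or parallel) of two components $C_1$ and $C_2$, and the right-hand side as the juxtaposition of the fault-equivalent components $D_1$ and $D_2$, I would fix a weight bound $w$ and verify the two defining conditions for an arbitrary fault $F$ on the composite with $wt(F) < w$, then appeal to symmetry for the reverse direction. The entire argument then rests on three facts: faults split along the boundary, weights add, and detectability is local.

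First I would establish the decomposition of faults. Since every atomic fault of circuit-level noise is supported on a single wire, gate, or measurement, and each such location belongs to exactly one component (assigning each shared interface wire to one side by convention), any fault $F$ on the composite factors uniquely as $F = F_1 F_2$ with $F_i \in \langle \mathcal F_i\rangle$ supported on the locations of $C_i$. Partitioning a minimal generating set of $F$ by component shows $wt(F_1) + wt(F_2) \le wt(F)$, while the reverse inequality is immediate, so $wt(F) = wt(F_1) + wt(F_2)$; in particular $wt(F_i) \le wt(F) < w$. I would also record the elementary algebraic fact that applying faults commutes with composing circuits, $(C_1 \circ C_2)^{F_1 F_2} = C_1^{F_1} \circ C_2^{F_2}$, and symmetrically for $D$.

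Next comes the locality-of-detection step, which I expect to be the main obstacle. I would argue that any detecting set of the standalone component $C_i$ remains a detecting set of the composite with the same predetermined parity: its outcome parity is a structural property of the corresponding element of $C_i$'s spacetime code, and composition acts only at the interface, so the parity relation is preserved inside the composite. Consequently such a detecting set is flipped by $F$ if and only if it is flipped by $F_i$, because faults on the other component act on disjoint measurements. Hence, if $F$ is undetectable on the composite, each $F_i$ is undetectable on $C_i$. The delicate point is exactly this lifting of local detecting sets across the interface: one must check that gluing does not destroy the predetermined parity of a detecting set whose surviving value could a priori depend on the (now changed) input state, and that detecting sets straddling the boundary cause no trouble — they only ever help, since they can render $F$ detectable but are never required to certify it undetectable.

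Finally I would recombine. Given $F$ with $wt(F) < w$, if $F$ is detectable on the composite then condition~(1) holds and there is nothing more to show. Otherwise both $F_1$ and $F_2$ are undetectable, and since $wt(F_i) < w$, the hypothesised equivalences $C_i \FaultEq D_i$ supply faults $F_i' \in \langle \mathcal F_i'\rangle$ with $wt(F_i') \le wt(F_i)$ and $C_i^{F_i} = D_i^{F_i'}$. Setting $F' = F_1' F_2'$ gives $(D_1 \circ D_2)^{F'} = D_1^{F_1'} \circ D_2^{F_2'} = C_1^{F_1} \circ C_2^{F_2} = (C_1 \circ C_2)^{F}$ together with $wt(F') \le wt(F_1') + wt(F_2') \le wt(F_1) + wt(F_2) = wt(F)$, so condition~(2) holds. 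By symmetry the same holds for a fault originating on the right-hand side, establishing $w$-fault equivalence of the composites for every $w$, hence fault equivalence. The same reasoning applies verbatim to both sequential and parallel composition, since in each case the fault locations of the composite are the disjoint union (up to the shared interface) of those of the two components.
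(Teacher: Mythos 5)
Your proposal is correct and follows essentially the same route as the paper's proof: decompose $F = F_1 F_2$ across the composition boundary, observe that weights add because atomic faults are supported in single components, apply the componentwise equivalences to obtain $F_1', F_2'$, and recombine into $F' = F_1' F_2'$ with $wt(F') \leq wt(F)$ and equal effect. The one place you go beyond the paper is the locality-of-detection step: the paper's proof silently assumes that undetectability of $F$ on the composite implies each $F_i$ is undetectable on its standalone component (so that the existence clause of the hypothesis, rather than only the detectability clause, can be invoked), whereas you explicitly flag this as the delicate point and argue that component detecting sets lift to the composite while straddling detecting sets can only help --- a genuine gap-fill within the same approach rather than a different one.
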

\begin{proof}
 First, we will prove the compositionality of fault equivalence under parallel composition.
 To show that $C_1 \otimes C_2 \underset{\scriptstyle \min(w_1, w_2)}{\FaultEq} C'_1 \otimes C'_2$, we have to show that any undetectable fault of weight less than $\min(w_1, w_2)$ on $C_1 \otimes C_2$ has a corresponding fault on $C'_1 \otimes C'_2$ and vice versa.

 Let $F$ be an undetectable fault on $C_1 \otimes C_2$ of weight less than $\min(w_1, w_2)$.
 We can separate $F = F_1F_2$ into components $F_1, F_2$ respectively acting only on $C_1$ and $C_2$.
 As they act on disjoint circuits, we have $wt(F) = wt(F_1) + wt(F_2)$.
 But then, as $wt(F_1) \leq wt(F)$, by $C_1 \underset{\scriptstyle w_1}{\FaultEq} C'_1$, we know there exists a fault $F'_1$ such that $C_1^{F_1} = {C_1'}^{F'_1}$ and that $wt(F'_1) \leq wt(F_1)$.
 Similarly, for $F_2$, we can find an $F'_2$ such that $C_2^{F_2} = {C_2'}^{F'_2}$ and that $wt(F'_2) \leq wt(F_2)$.
 But then $F' = F'_1F'_2$ is a fault that acts on $C'_1 \otimes C'_2$ for which we have $wt(F') \leq wt(F'_1) + wt(F'_2) \leq wt(F_1) + wt(F_2) = wt(F)$ and $C_1^{F_1} \otimes C_2^{F_2} = {C_1'}^{F'_1} \otimes {C'_2}^{F'_2}$.
 Thus, for all undetectable faults of weight less than $\min(w_1, w_2)$ on $C_1 \otimes C_2$, we can find a corresponding fault on $C'_1 \otimes C'_2$.
 The other direction holds analogously.

 The compositionality of fault equivalence under sequential composition can be proven analogously. 
 Here, one has to take slightly more care when splitting the fault $F$ into components $F_1$ and $F_2$, respectively acting on $C_1$ and $C_2$. 
 In particular, when the fault acts on the interface between the two circuits, one has to split $F$ in such that $wt(F_1) + wt(F_2) \leq wt(F)$. 
 As the set of atomic faults on $C_2 \circ C_1$ is exactly the union of the atomic faults on $C_1$ and $C_2$, this is always possible. 
\end{proof}

This means that, if we can break two circuits into smaller, fault-equivalent components, we can simplify checking fault equivalence.

Furthermore, we can state:
\begin{proposition}
 Fault equivalence is transitive, i.e.:
    \[\tikzfig{03-fault-equivalence-in-action/fe-transitivity}\]
\end{proposition}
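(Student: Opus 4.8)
The plan is to prove transitivity of fault-equivalent rewrites directly from the definition of $w$-fault equivalence, unwinding what it means for $C_1 \underset{w_1}{\FaultEq} C_2$ and $C_2 \underset{w_2}{\FaultEq} C_3$ to hold simultaneously. I would first establish the natural claim that if $C_1 \underset{w_1}{\FaultEq} C_2$ and $C_2 \underset{w_2}{\FaultEq} C_3$, then $C_1 \underset{\min(w_1,w_2)}{\FaultEq} C_3$. The bound $\min(w_1,w_2)$ is the correct one to aim for, since we can only chain the correspondences when both component equivalences are guaranteed to apply to the faults in question.

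The core of the argument is a chasing-through-the-middle-circuit step. I would take an arbitrary fault $F_1 \in \langle \mathcal{F}_1 \rangle$ on $C_1$ with $wt(F_1) < \min(w_1, w_2)$. Since $wt(F_1) < w_1$, the relation $C_1 \underset{w_1}{\FaultEq} C_2$ tells us that either $F_1$ is detectable (in which case we are done, as condition (1) is satisfied for the composite relation), or there exists a fault $F_2 \in \langle \mathcal{F}_2 \rangle$ on $C_2$ with $wt(F_2) \leq wt(F_1)$ and $C_1^{F_1} = C_2^{F_2}$. In the latter case, the weight bound $wt(F_2) \leq wt(F_1) < \min(w_1,w_2) \leq w_2$ ensures that $F_2$ lies in the range where $C_2 \underset{w_2}{\FaultEq} C_3$ applies. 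So we again invoke that relation on $F_2$: either $F_2$ is detectable, or there is a fault $F_3 \in \langle \mathcal{F}_3 \rangle$ on $C_3$ with $wt(F_3) \leq wt(F_2)$ and $C_2^{F_2} = C_3^{F_3}$. Composing the equalities gives $C_1^{F_1} = C_3^{F_3}$ and $wt(F_3) \leq wt(F_2) \leq wt(F_1)$, establishing condition (2) for $C_1 \underset{\min(w_1,w_2)}{\FaultEq} C_3$. The reverse direction, starting from a fault on $C_3$, follows by the symmetric argument.

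The one subtle point I expect to require care is the case where $F_2$ turns out to be \emph{detectable} on $C_2$. In the transitivity argument above I want to conclude that $F_1$ is then detectable on $C_1$, so that condition (1) holds. This requires knowing that detectability is preserved under the correspondence guaranteed by fault equivalence: if $C_1^{F_1} = C_2^{F_2}$ and $F_2$ is detectable on $C_2$, then $F_1$ must be detectable on $C_1$. Since detectability is a property of the circuit-with-fault $C_i^{F_i}$ (a detectable fault flips the parity of some detecting set, which is a feature visible in the resulting circuit), and the two circuits-with-faults are literally equal, this should follow; but I would make this reasoning explicit, perhaps as a preliminary observation that detectability is invariant under the equality $C_1^{F_1} = C_2^{F_2}$ whenever the detecting sets are matched appropriately by the fault-equivalence data. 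This matching of detecting structure across fault-equivalent circuits is the main obstacle, and it is worth stating clearly rather than treating as obvious, since the whole framework rests on detectability being compatible with the correspondence of faults.
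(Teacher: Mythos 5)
Your proposal is correct and follows essentially the same route as the paper's proof: chase an undetectable fault $F_1$ through the middle circuit $C_2$ via the first equivalence, apply the second equivalence to the resulting $F_2$ (whose weight is still below $\min(w_1,w_2)$), compose the equalities $C_1^{F_1} = C_2^{F_2} = C_3^{F_3}$ and the weight bounds, and invoke symmetry for the reverse direction. The only difference is that you explicitly address the case where $F_2$ is detectable --- a subtlety the paper's proof silently skips by assuming condition (2) applies to $F_2$ --- and your resolution (detectability transfers across the correspondence because the fault-free circuits, and hence the detecting sets and their expected parities, coincide, while the faulty circuits are equal) is sound and, if anything, makes the argument more rigorous than the paper's own.
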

\begin{proof}
 To show that $C_1 \underset{\scriptstyle \min(w_1, w_2)}{\FaultEq} C_3$, we have to show that any undetectable fault of weight less than $\min(w_1, w_2)$ on $C_1$ has a corresponding fault on $C_3$ and vice versa.

 Let $F_1$ be an undetectable fault on $C_1$ of weight less than $\min(w_1, w_2)$.
 But then, as $wt(F_1) < w_1$, by $C_1 \underset{\scriptstyle w_1}{\FaultEq} C_2$, we know that there exists a fault $F_2$ such that $C_1^{F_1} = C_2^{F_2}$ and that $wt(F_2) \leq wt(F_1)$.
 But then, as $wt(F_2) \leq wt(F_1) <  \min(w_1, w_2)$, by $C_2 \underset{\scriptstyle w_2}{\FaultEq} C_3$, we know that there exists a fault $F_3$ such that $C_2^{F_2} = C_3^{F_3}$ and that $wt(F_3) \leq wt(F_2)$.
 But then we have a fault $F_3$ that satisfies $wt(F_3) \leq wt(F_2) \leq wt(F_1)$ and $ C_3^{F_3} = C_2^{F_2} = C_1^{F_1}$ and therefore, we have found a corresponding fault for $F_1$.

 Showing that for all faults on $C_3$ there exist corresponding faults in $C_1$ can be shown analogously.
 But then we have shown that $C_1 \underset{\scriptstyle \min(w_1, w_2)}{{\FaultEq}} C_3$.
\end{proof}

As we have previously expressed distance as a property in terms of fault equivalence, we now have:
\begin{corollary}
    \label{cor:fe-preserves-dist}
 Let $C_1, C_2$ be quantum circuits such that $C_1 \underset{\scriptstyle w}{\FaultEq} C_2$ under some noise models $\mathcal{F}_1, \mathcal{F}_2$ and $dist_{\mathcal{F}_1}(C_1) = d$. Then we have $dist_{\mathcal{F}_2}(C_2) \geq \min(w, d)$.
\end{corollary}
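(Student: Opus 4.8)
The plan is to recognise this as a formal consequence of the characterisation of distance as fault equivalence to the fault-free version (\autoref{prop:dist-as-fe}) together with the transitivity of fault equivalence established just above. Throughout I write $\widetilde{C}$ for the idealised, fault-free implementation of a circuit $C$, whose only fault is trivial.

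First I would dispose of the degenerate case $\min(w, d) = 0$, in which the claim $dist_{\mathcal{F}_2}(C_2) \geq 0$ is vacuous; so assume $\min(w, d) \geq 1$, and in particular $w \geq 1$. The step I expect to be the crux is to argue that $C_1$ and $C_2$ share the \emph{same} idealised fault-free implementation, i.e.\@ $\widetilde{C_1} = \widetilde{C_2} =: \widetilde{C}$. This follows by feeding the trivial (weight-$0$) fault on $C_1$ into $C_1 \underset{\scriptstyle w}{\FaultEq} C_2$: since the trivial fault is undetectable and has weight $0 < w$, fault equivalence forces a fault $F_2 \in \langle \mathcal{F}_2 \rangle$ with $wt(F_2) \leq 0$, hence $F_2$ trivial, so that $C_2 = C_2^{F_2} = C_1^{\,\mathrm{triv}} = C_1$ as fault-free maps. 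This identification is essential, because it is what lets me relate $C_2$ to \emph{its own} fault-free version rather than to that of $C_1$.

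With the common fault-free version $\widetilde{C}$ in hand, I would apply \autoref{prop:dist-as-fe} to $C_1$: from $dist_{\mathcal{F}_1}(C_1) = d$ we obtain $C_1 \underset{\scriptstyle d}{\FaultEq} \widetilde{C}$. Invoking the symmetry of fault equivalence on the hypothesis gives $C_2 \underset{\scriptstyle w}{\FaultEq} C_1$, and chaining these two relations through transitivity yields $C_2 \underset{\scriptstyle \min(w, d)}{\FaultEq} \widetilde{C}$.

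Finally I would apply \autoref{prop:dist-as-fe} in the reverse direction to $C_2$: since $C_2$ is $\min(w, d)$-fault-equivalent to its fault-free implementation $\widetilde{C}$, and the distance is by that proposition the maximal value for which such fault equivalence holds, we conclude $dist_{\mathcal{F}_2}(C_2) \geq \min(w, d)$. The only genuine subtlety is the fault-free identification step; once $\widetilde{C_1} = \widetilde{C_2}$ is secured, the result needs no further reasoning about individual faults, syndromes, or detecting sets, and is a direct composition of the two preceding results.
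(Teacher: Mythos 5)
Your proof is correct and follows essentially the same route as the paper's: identify the fault-free versions of $C_1$ and $C_2$ via the trivial fault, invoke \autoref{prop:dist-as-fe} to get $C_1 \underset{\scriptstyle d}{\FaultEq}$ its fault-free version, and chain through transitivity to conclude $C_2$ is $\min(w,d)$-fault-equivalent to its own fault-free version. Your treatment is somewhat more careful than the paper's (explicitly handling the degenerate case $\min(w,d)=0$, invoking symmetry, and spelling out why the trivial-fault argument forces the fault-free circuits to coincide), but the underlying argument is identical.
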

\begin{proof}
 Since $C_1$ and $C_2$ are $w$-fault-equivalent this means that the idealised, fault-free circuits of $C_1$ and $C_2$ must be equivalent, as $C_1$ under the trivial fault must be equivalent to $C_2$ under the trivial fault.
 But then we have:
    \[\tikzfig{03-fault-equivalence-in-action/distance-transitivity-proof}\]
 But then, by transitivity of fault equivalence, $C_2$ is $\min(w, d)$-fault-equivalent to its idealised fault-free version and thus we have a lower bound on $dist_{\mathcal{F}_2}(C_2)$.
\end{proof}

This means that manipulating circuits while preserving fault equivalence preserves their distance, and, thus, the correctability of faults.
In fact, in \cite{rodatzFloquetifyingStabiliser2024}, we defined the notion of distance-preserving rewrites, which we defined to be the rewrites that preserve the circuit distance in all contexts. 
We showed that fault-equivalent rewrites and distance-preserving rewrites are formally equivalent. 

Fault equivalence will be the key ingredient in our procedure for fault tolerance by construction; if we start with an idealised circuit that behaves well under noise and bring it into the shape of a fault-equivalent, non-idealised circuit that we can execute on hardware, we are guaranteed that the distance of the resulting circuit will be the same as the distance of our original circuit and therefore fault of weight less than the distance to be detectable.

\part{Faults on ZX Diagrams}

\section{Intro to ZX Calculus}
Having argued that fault equivalence is a useful concept for reasoning about circuits in a noisy setting, we turn our attention towards manipulating quantum circuits while preserving fault equivalence.
This allows us to present novel ways of solving the fault-tolerant circuit synthesis tasks stated above.
To do this, we utilise the ZX calculus.

The ZX calculus~\parencite{coeckeInteractingQuantumObservables2008} is a graphical formalism for representing and reasoning about quantum circuits.
While the ZX calculus is universal for arbitrary linear maps between qubits, we focus on the Clifford fragment of the ZX calculus, which is universal for stabiliser circuits~\parencite{backensZXcalculusComplete2014}.
For a comprehensive overview of the complete calculus, readers are referred to~\parencite{vandeweteringZXcalculusWorkingQuantum2020}.
Additionally, there exist variations of the ZX calculus for qudits~\parencite{poorZXcalculusComplete2024} and continuous-variable quantum computation~\parencite{shaikhFockedupZX2024}.

\subsection{ZX Diagrams}\label{sec:zx-diagrams}

A ZX diagram is a graphical representation of a linear map, composed of interconnected nodes with parameters and colours.
Nodes are referred to as \enquote{spiders}, of which there are two types: Z-spiders and X-spiders (hence ZX).
Each spider carries a phase parameter $\alpha \in \{0, \frac{\pi}{2}, \pi, \frac{3\pi}{2}\}$ and has an arbitrary number of \enquote{legs} or connections, which represent its input/output qubit ports.

The most basic ZX diagrams consist of one spider, respectively defined as: 
\begin{definition}[Z spider, X spider]
  \begin{align*}
 \textit{Z-spider:} \qquad
 \tikzfig{04-zx/ZX/z-spider}
 \quad &\coloneqq \quad
 \ket{0}^{\otimes n}\! \bra{0}^{\otimes m} + e^{i k\frac{\pi}{2}} \ket{1}^{\otimes n}\! \bra{1}^{\otimes m}\\[8pt]
 \textit{X-spider:} \qquad
 \tikzfig{04-zx/ZX/x-spider}
 \quad &\coloneqq \quad
 \ket{+}^{\otimes n}\! \bra{+}^{\otimes m} + e^{i k\frac{\pi}{2}} \ket{-}^{\otimes n}\! \bra{-}^{\otimes m}
  \end{align*}
\end{definition}
By convention, if the phase parameter is not explicitly annotated on a spider, it defaults to $0$. We call spiders with a phase of $0$ \textit{phase-free}.
In addition to spiders, we allow \textit{identities}, \textit{swaps}, \textit{cups}, and \textit{caps} in ZX diagrams, which are defined as follows:
\[
\tikz[tikzfig]{ \draw (0,0) -- (2,0); } \ \ :=\ \  \sum_i |i\>\<i|
\qquad
\tikzfig{swap} \ \ :=\ \ \sum_{ij} |ij\>\<ji|
\qquad
\tikzfig{cup} \ \ := \ \ \sum_i |ii\>
\qquad
\tikzfig{cap} \ \ := \ \ \sum_i \<ii|
\]
We also use the Hadamard gate, represented as a small box, as a basic component of ZX diagrams, although it can be constructed (up to a global phase) in terms of Z- and X-spiders:
\[
\tikzfig{04-zx/h-def} \ \ =\ \ |{+}\>\<0| + |{-}\>\<1|
\]

Larger, more complex diagrams can be created by composing smaller diagrams: 
\begin{definition}[Sequential composition, parallel composition] 
 Let $D_1: (\mathbb{C}^{2})^{\otimes n} \to (\mathbb{C}^{2})^{\otimes m}, D_2: (\mathbb{C}^{2})^{\otimes m} \to (\mathbb{C}^{2})^{\otimes k}$ be ZX diagrams. Then we can sequentially compose $D_2 \circ D_1: (\mathbb{C}^{2})^{\otimes n} \to (\mathbb{C}^{2})^{\otimes k}$ as: 
  \[
 \interp{\tikzfig{04-zx/ZX/D2-circ-D1}}
 \quad \coloneqq \quad
 \interp{\tikzfig{04-zx/ZX/D2}} \circ \interp{\tikzfig{04-zx/ZX/D1}}
  \]
 Let $D_1: (\mathbb{C}^{2})^{\otimes n} \to (\mathbb{C}^{2})^{\otimes m}, D_2: (\mathbb{C}^{2})^{\otimes k} \to (\mathbb{C}^{2})^{\otimes l}$ be ZX diagrams. Then we can parallelly compose $D_1 \otimes D_2: (\mathbb{C}^{2})^{\otimes (n + k)} \to (\mathbb{C}^{2})^{\otimes (m + l)}$ as: 
  \[
 \interp{\tikzfig{04-zx/ZX/D1-tensor-D3}}
 \quad \coloneqq \quad
 \interp{\tikzfig{04-zx/ZX/D1}} \otimes \interp{\tikzfig{04-zx/ZX/D3}}
  \]
\end{definition}

Using the basic spiders and the two methods of composition, a key characteristic of the ZX calculus is that it is universal: 
\begin{theorem}
 Any linear map $L: (\mathbb{C}^{2})^{\otimes n} \to (\mathbb{C}^{2})^{\otimes m}$ can be represented as a ZX diagram.
\end{theorem}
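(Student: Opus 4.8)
The plan is to establish universality of the ZX calculus by reducing it to the known universality of a standard gate set together with state preparations and measurements, and then realising each such elementary building block as a ZX diagram. Concretely, I would use the fact that an arbitrary linear map $L: (\mathbb{C}^2)^{\otimes n} \to (\mathbb{C}^2)^{\otimes m}$ can be written as a composite of: (i) preparations $\ket{0}$ and postselected effects $\bra{0}$, which pad the input/output dimensions as needed; (ii) single-qubit unitaries, which via the Euler decomposition factor into rotations $Z_\alpha = e^{-i\alpha Z/2}$ and $X_\beta = e^{-i\beta X/2}$; and (iii) a single entangling two-qubit gate such as CNOT. Since arbitrary linear maps (not just unitaries) are the target, I would also note that by the Choi–Jamiołkowski correspondence, using the cups and caps already defined in the excerpt, any linear map can be bent into a state on $(\mathbb{C}^2)^{\otimes(n+m)}$, so it suffices to prepare an arbitrary vector and then re-bend legs using caps and cups.

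\emph{The core of the argument} is then to exhibit the ZX-diagram representatives for each generator. The Z-spider and X-spider definitions immediately give $\ket{0}$, $\bra{0}$, $\ket{+}$, $\bra{+}$ as single-legged spiders, and the phase-carrying spiders furnish the rotation gates $Z_\alpha$ and $X_\beta$ for the allowed phase values. The Hadamard box, defined in the excerpt, together with these handles the full single-qubit Clifford group, and CNOT is represented by a Z-spider connected to an X-spider (which I would verify computes $\sum_{ij}\ket{i,i\oplus j}\bra{i,j}$ directly from the spider definitions). Composition of these pieces is legitimate precisely because the excerpt has already defined sequential composition $D_2 \circ D_1$ and parallel composition $D_1 \otimes D_2$ on ZX diagrams, matching matrix multiplication and tensor product of the underlying linear maps; thus any circuit built from the generators translates verbatim into a ZX diagram with the correct interpretation.

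\textbf{The main obstacle} is the restriction of spider phases to the set $\{0, \tfrac{\pi}{2}, \pi, \tfrac{3\pi}{2}\}$ stated in the definitions, i.e.\ the Clifford fragment. With only these four phases one cannot generate an \emph{arbitrary} linear map by finite composition, since the resulting maps lie in the (discrete) stabiliser family. Hence a faithful proof of the theorem as literally stated requires continuous phase parameters $\alpha \in [0,2\pi)$ on the spiders; I would therefore either (a) read the theorem as referring to the full ZX calculus with arbitrary real phases — consistent with the remark in the excerpt that ``the ZX calculus is universal for arbitrary linear maps between qubits'' before the Clifford fragment is singled out — and prove completeness of the generator set over $\mathbb{R}$-valued phases, or (b) restrict the claim to stabiliser maps when phases are confined to multiples of $\tfrac{\pi}{2}$. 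Assuming interpretation (a), the remaining work is purely the routine verification that the exhibited diagrams interpret to the claimed matrices and that the chosen generating set $\{$preparations, $Z_\alpha$, $X_\beta$, Hadamard, CNOT, cups, caps$\}$ is complete for all linear maps, which is standard.
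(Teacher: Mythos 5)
Your proposal is correct under your interpretation (a), and it is worth noting how it relates to what the paper actually does: the paper gives no argument at all for this theorem --- its entire proof is the citation \textcite[Proposition 2.18]{coeckeInteractingQuantumObservables2008} --- and the argument in that reference is in substance the one you reconstruct (map--state duality via cups and caps, Euler-decomposed single-qubit rotations, CNOT, and preparations/postselections). So your proof is not a different route so much as a self-contained expansion of the proof the paper delegates to the literature. Your flagging of the phase restriction is apt and is the one point of genuine tension with the paper: the spider definitions in this section only allow phases in $\{0, \frac{\pi}{2}, \pi, \frac{3\pi}{2}\}$, under which the theorem as literally stated is false (finite compositions of Clifford-phase spiders only yield stabiliser maps), so the statement must be read as referring to the full calculus with arbitrary real phases, consistent with the paper's earlier remark that the full ZX calculus is universal while only the Clifford fragment will be used in the sequel. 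The one step you wave through as ``standard'' that deserves a sentence in a complete write-up is exactness of scalars: the preparation/postselection decomposition recovers $L$ only up to a nonzero scalar factor, and one must note that zero-legged spiders evaluate to $1 + e^{i\alpha}$ and that finite products of such numbers (together with the zero scalar from $\alpha = \pi$) realise every complex number, so the global factor can be matched exactly rather than only up to proportionality.
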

\begin{proof}
 See \textcite[Proposition 2.18]{coeckeInteractingQuantumObservables2008}.
\end{proof}

To interpret larger diagrams, one can deconstruct them into their basic components:
\[\scalebox{0.9}{\tikzfig{04-zx/ZX/composition-example}}\]
While every ZX diagram can be interpreted as its underlying linear map this way, in practice, this is not usually necessary. 
Instead, we translate directly between quantum circuits and ZX diagrams. 
\autoref{fig:mappings} illustrates how various quantum states and Clifford gates are represented as ZX diagrams.
To construct a ZX diagram for a larger circuit, one can translate the individual gates and compose them according to the structure of the circuit.
\begin{figure}[htb]
  \begin{minipage}{.32\textwidth}
    \begin{align*}
 \ket{0}
 \quad =& \quad
 \tikzfig{04-zx/ZX/x-state-zero} \\[8pt]
 \ket{1}
 \quad =& \quad
 \tikzfig{04-zx/ZX/x-state-one} \\[8pt]
 \ket{+}
 \quad =& \quad
 \tikzfig{04-zx/ZX/z-state-zero} \\[8pt]
 \ket{-}
 \quad =& \quad
 \tikzfig{04-zx/ZX/z-state-one} \\[8pt]
 \tikzfig{04-zx/QC/not}
 \quad =& \quad
 \tikzfig{04-zx/ZX/not} \\[8pt]
 \tikzfig{04-zx/QC/y}
 \quad =& \quad
 \tikzfig{04-zx/ZX/y}
\end{align*}
\end{minipage}
\begin{minipage}{.32\textwidth}
  \begin{align*}
 \tikzfig{04-zx/QC/s}
 \quad =& \quad
 \tikzfig{04-zx/ZX/s}\\[8pt]
 \tikzfig{04-zx/QC/cnot}
 \quad =& \quad
 \tikzfig{04-zx/ZX/cnot} \\[8pt]
 \tikzfig{04-zx/QC/cz}
 \quad =& \quad
 \tikzfig{04-zx/ZX/cz} \\[8pt]
 \tikzfig{z-meas} \quad =& \quad \tikzfig{z-meas-zx} \\[8pt]
 \tikzfig{x-meas} \quad =& \quad \tikzfig{x-meas-zx}
    \end{align*}
  \end{minipage}
  \begin{minipage}{.32\textwidth}
    \begin{align*}
 \tikzfig{04-zx/QC/cat-state-4}
 \quad =& \quad
 \tikzfig{04-zx/ZX/cat-state-4}\\[8pt]
 \tikzfig{04-zx/QC/z-dots-z-parity-check}
 \quad =& \quad
 \tikzfig{04-zx/ZX/z-dots-z-parity-check}
    \end{align*}
  \end{minipage}
  \caption{Mapping quantum circuits representing common states and operations to ZX diagrams. As we note in the next section, the direction of wires is irrelevant, so we can write CNOT gates and CZ gates using vertical wires between spiders without ambiguity.}
  \label{fig:mappings}
\end{figure}

\subsection{Rewrite Rules}\label{sec:zx-rules}
\label{sec:zx-rewrites}
Beyond representing quantum circuits, the distinguishing feature of the ZX calculus lies in its ability to relate different quantum circuits.
It comes with a set of graphical rewrite rules that allow ZX diagrams to be transformed into equivalent diagrams (i.e.\@ representing the same linear map) by applying local transformations within a larger diagram.
This enables the manipulation, simplification, and verification of quantum computations directly in the graphical language.

First and foremost, a fundamental rewrite of the ZX calculus, often assumed implicitly, is that \emph{Only Connectivity Matters} (\TextOCM).
This principle allows us to rearrange spiders and bend their legs, provided the overall connectivity and the positions of inputs/outputs of the diagram are preserved.
For example, the following diagrams are all considered equivalent under OCM:
\[
 \tikzfig{04-zx/OCM}
\]
Beyond OCM, there are eight additional rewrite rules of the ZX calculus, from which complex transformations can be derived.
\autoref{fig:axioms} illustrates a set of axioms for the Clifford fragment of the ZX calculus (up to global scalars).
This set of axioms is known to be complete~\parencite{backens2017AsimplifiedStabiliserZXCalculus}, meaning that if two diagrams implement the same linear map, there is guaranteed to be a set of rewrites from this set of axioms that transforms one into the other. 
This gives the calculus a deductive power equal to matrix algebra but within an intuitive graphical framework.
\begin{figure}[htb]
  \[
 \tikzfig{04-zx/axioms}
  \]
  \caption{A complete set of axioms for the Clifford fragment of the ZX calculus. All rules also hold in the colour inverse. Equalities shown hold up to scalars.}
  \label{fig:axioms}
\end{figure}

The utility of the ZX calculus becomes clearer when we examine how individual rewrite rules encode familiar concepts in quantum computing or aid in specific procedures. 
For example, the ZX calculus can be used to propagate Pauli faults through Clifford circuits. 
Consider a $Y$ fault before the control of a CNOT:
\[
 \tikzfig{04-zx/pauli-propagation}
\]
When mapping between quantum circuits and ZX diagrams, we highlight the corresponding components to aid readability. 
To map the original circuit to a ZX diagram, we express the $Y$ fault as a combination of an $X$ and a $Z$ fault.
We can do this unambiguously, as we only consider diagrams up to a global phase.
Specifically, this means that we also do not have to care about the order of the decomposition, i.e.\@ whether we decompose the $Y$ into an $X$ and a $Z$ or a $Z$ and an $X$.
In the first rewrite, we fuse the two Z spiders, adding their respective phases --- by convention, we consider an empty spider to have a phase of 0. 
In the second step, we undo the spider fusion to the other side. 
Then using \TextCopy and \TextFusion, we can similarly push the $X$ component of the fault to the outputs, resulting in a ZX diagram that can be interpreted as a CNOT followed by a $Y \otimes X$ fault.

Another area where the ZX calculus has proven useful is circuit synthesis, often leading to novel circuit designs, e.g.\@~\parencite{cowtanPhaseGadget2020}.
As a basic example, we synthesize a circuit preparing the 4-qubit cat state.
We translate the cat state to ZX according to \autoref{fig:mappings}, and transform it so that only basic, implementable operations remain:
\[
 \scalebox{.9}{\tikzfig{04-zx/cat-state-prep}}
\]
As the ZX calculus preserves the underlying linear map, we are thus guaranteed that the resulting circuit exactly implements the  4-qubit cat state.
Note that other rewrites may provide a different implementation.

\section{Faults on ZX Diagrams}
The cat state preparation synthesised above using the ZX calculus is the exact circuit used in \autoref{sec:circuit-fault-equivalence} as a counter-example of how not to implement the cat state in a noisy setting.
While the ZX calculus preserves the underlying linear map, it does not necessarily preserve fault equivalence. 
To reason about fault equivalence in the ZX calculus, we first have to define faults on ZX diagrams. 

\subsection{Faults}

For quantum circuits, we consider Pauli faults in spacetime.
We define spacetime locations as the qubits in between the gates and faults as Paulis acting on the spacetime locations. 
Similarly, for ZX diagrams, we define:

For a circuit $C$ with spacetime locations $\mathcal L$ and a Pauli operator $P \in \overline{\mathcal{P}^{|\mathcal L|}}$, let $C^P$ be the linear map obtained by placing a Pauli $P_j$ at location $j \in \mathcal L$, where $P \propto P_1 \otimes \ldots \otimes P_{|\mathcal L|}$. We call $C^P$ the \textit{application} of $P$ to $C$. Two Paulis are considered equivalent with respect to $C$, written $P \sim Q$ if $C^P = C^Q$, and a Pauli $P$ is called \textit{trivial} with respect to $C$ if $P \sim I$.

\begin{definition}[Spacetime locations, Applying Paulis to a diagram]
 Let $D$ be a ZX diagram with edges $E$. 
 Then the \emph{spacetime locations} of $D$ are the edges $E$. 
 Let $P_{e_1} \otimes \ldots \otimes P_{e_{|E|}} \in \overline{\mathcal{P}^{|E|}}$ be a Pauli operator. 
 Then $D^P$ is the ZX diagram obtained by placing a Pauli $P_e$ each edge $e \in E$. 
 We call $D^P$ the \emph{application} of $P$ to $D$.
\end{definition}

\autoref{fig:fault-examples-ZX} (a), (b) showcase the spacetime locations of an example diagram and an example Pauli operator on that diagram.
\begin{figure}
    \centering
    \begin{minipage}{0.32\textwidth}
        \centering
        \tikzfig{05-ZX-errors/example-diagram-fault-locations}
        \caption*{(a)}
    \end{minipage}
    \begin{minipage}{0.32\textwidth}
        \centering
        \tikzfig{05-ZX-errors/example-diagram-example-fault}
        \caption*{(b)}
    \end{minipage}
    \begin{minipage}{0.32\textwidth}
        \centering
        \tikzfig{05-ZX-errors/example-diagram-fault-locations-fault-free}
        \caption*{(c)}
    \end{minipage}
    \caption{An example ZX diagram from one qubit to two qubits. 
 We have (a) The spacetime locations of the circuit shown as empty octagons. (b) An example fault on the circuit, which, under the edge-flip noise model, is of weight four. (c) The spacetime locations of a submodel of the edge-flip noise model, where three edges are idealised as fault-free (displayed in purple).}
    \label{fig:fault-examples-ZX}
\end{figure}

Once again, we can model faults as Pauli operators acting on some diagram. 
Noise models are, once more, defined as sets of atomic faults that define the effects of individual fault events.
However, while noise models on quantum circuits tend to be dependent on where the qubits flow through the circuit and what the gates are, one key feature of the ZX calculus is that it does not always admit such an interpretation.
For example, when synthesising a quantum circuit, we go via intermediate steps that do not have a physical interpretation. 
To leverage this key property of the ZX calculus, we will define a special noise model that is agnostic to how the ZX diagram corresponds to a quantum circuit: 
\begin{definition}[Edge-flip noise model]
  \label{def:edge-flip-noise}
 Let $D$ be a ZX diagram. The \emph{edge-flip noise model} on $D$ consists of all Paulis of weight one.
\end{definition}
Under edge-flip noise, atomic faults affect exactly one edge, independent of whether this edge is a qubit wire or internal to a gate. 
This means that edge-flip noise is preserved under \TextOCM, i.e.\@ dragging the spiders around.
This rule is essential to reasoning in the ZX calculus and, therefore, desirable to maintain. 
In particular, this preserves the time-agnostic nature of the ZX calculus --- under this noise model, it does not matter how time flows through the diagram. 
This allows us to treat space and time on an equal footing as stipulated by \textcite{gottesmanOpportunitiesChallenges2022}.

Furthermore, we can once again define idealised submodels of edge-flip noise:

\begin{definition}[Submodels of the edge-flip noise model]
  \emph{Submodels of the edge-flip noise model} are subsets of the edge-flip noise model that idealise certain edges as being fault-free by removing the corresponding atomic faults.
\end{definition}

\autoref{fig:fault-examples-ZX} (c) showcases how we might draw an example diagram under some idealised edge-flip noise model, where the idealised edges are drawn in purple.

Throughout the rest of this paper, on ZX diagrams, we will only consider the edge-flip noise models and submodels thereof. 
This will simplify the rewriting of ZX diagrams. 
However, to account for circuit-level noise on quantum circuits, we will have to pay additional attention when translating between quantum circuits and ZX diagrams.

\subsection{Detecting Faults}
\label{sec:detection-regions}

Pauli webs~\parencite{bombinUnifyingFlavorsFault2024} are a decoration of the Clifford ZX calculus tracking how Paulis propagate through a diagram.
They do so by highlighting edges in red and/or green according to a set of simple rules:
\begin{definition}[Pauli web]
 Given a ZX diagram $D$ with edges $E$, a Pauli $P \in \overline{\mathcal{P}^{|E|}}$ defines a highlighting of the edges of $D$ where we consider edge $i$ to be highlighted in green if $P_i = Z$, red if $P_i = X$ and red and green if $P_i = Y$. A Pauli web is an edge highlighting such that:
  \begin{itemize}
    \item a spider with $0$ or $\pi$ phase can have:
    \begin{itemize}
      \item an even number of legs highlighted in its own colour and all or none of its legs highlighted in the opposite colour
    \end{itemize}
    \item a spider with $\pm \frac{\pi}{2}$ phase can have \textbf{either}:
    \begin{itemize}
      \item an even number of legs highlighted in its own colour and no legs highlighted in the opposite colour, \textbf{or}
      \item an odd number of legs highlighted in its own colour and all legs highlighted in the opposite colour
    \end{itemize}
  \end{itemize}
  \label{def:pauli-web}
\end{definition}
\noindent
\autoref{fig:pauli-web-examples} shows an example of some possible Pauli webs on single spiders.

\begin{figure}
  \[
 \tikzfig{04-zx/ZX/pauli-web-examples}
  \]
  \caption{Example Pauli webs on a single spider.}
  \label{fig:pauli-web-examples}
\end{figure}

Pauli webs are in formal correspondence with the stabilisers and logicals of the diagram~\parencite{borghansZXcalculusQuantumStabilizer2019, rodatzFloquetifyingStabiliser2024}, proving very useful for tracking the flow of information through Clifford diagrams.
However, we will use a different property of Pauli webs; their relationship to detecting sets. 
We define: 
\begin{definition}[Detecting region]
 A \emph{detecting region} is a Pauli web that highlights no input or output edges.
\end{definition}

To understand detecting regions, we observe that Pauli webs locally correspond to stabilisers of the spider up to a global phase. 
In particular, they are either $+1$-stabilisers or $-1$-stabilisers, i.e.\@ they stabilise the spider up to a global factor of $+1$ or $-1$ respectively.
Using this fact, we can now reason: 
\begin{proposition}
  \label{thm:detecting-region}
 Let $D$ be a ZX diagram with a detecting Pauli web $P$.
 Let $n$ be the number of spiders in $D$ where $P$ locally stabilises the spider up to a global phase of $-1$ and $m$ the number of edges highlighted as $Y$.
 Then, if $n + m$ is odd, $D = 0$.
\end{proposition}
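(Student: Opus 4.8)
The plan is to derive from the detecting region the scalar identity $D = (-1)^{n+m}\, D$; when $n+m$ is odd this reads $D = -D$, forcing $D = 0$. The whole argument rests on the fact noted just before the statement: the restriction of the Pauli web $P$ to the legs of any single spider $s$ is a Pauli stabiliser of that spider up to a sign $\lambda_s \in \{+1,-1\}$, with $\lambda_s = -1$ exactly for the $n$ spiders counted in the proposition. Writing $Q_s := \bigotimes_{i \ni s} P_i$ for the Pauli string $P$ places on the legs of $s$, this says $Q_s |T_s\rangle = \lambda_s |T_s\rangle$, where $|T_s\rangle$ is the spider viewed as a state.

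First I would put the diagram into purely "state" form. Using \TextOCM I bend every input wire to an output, so each spider becomes a state $|T_s\rangle$ on the tensor factors indexed by its incident edges, and each internal edge $i$ between spiders $s,s'$ is realised by a Bell effect (cap) $\langle \Phi_i|$ contracting the corresponding factors. Note that $Q_s$ is Hermitian with $Q_s^2 = I$, so its eigenvalue on $|T_s\rangle$ is automatically $\pm 1$ and no spurious factors of $i$ arise, even when $Y$'s are present. Since the region is detecting, $P$ is trivial on every boundary edge, so $\bigotimes_s Q_s$ acts as the identity on the open legs of the resulting state.

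Next I would apply $\bigotimes_s Q_s$ to $\bigotimes_s |T_s\rangle$ in two ways and compare. On one hand it multiplies the state by $\prod_s \lambda_s = (-1)^n$. On the other hand I push it through the contraction: for an internal edge $i$ both half-edges carry the same Pauli $P_i$, and the elementary identity $\langle \Phi_i |(P_i \otimes P_i) = \langle \Phi_i|(I \otimes P_i^{\mathsf T} P_i)$ together with $X^{\mathsf T} X = Z^{\mathsf T} Z = I$ but $Y^{\mathsf T} Y = -I$ shows that each edge contributes $+1$ unless it is a $Y$-edge, in which case it contributes $-1$. Contracting all edges therefore produces the factor $(-1)^m$, where $m$ is the number of $Y$-edges. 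Equating the two evaluations gives $(-1)^n D = (-1)^m D$, i.e.\ $D = (-1)^{n+m} D$, and the claim follows immediately.

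The main obstacle I anticipate is the careful bookkeeping of the global scalar, and this is exactly where the $Y$-edges enter. Two points need care: (i) because every internal edge is shared by two spiders, each edge Pauli appears in two of the $Q_s$, so one must apply the per-spider eigenvalue equation \emph{before} contraction, as above, rather than trying to absorb edge Paulis one endpoint at a time, which would double-count; and (ii) the sign attached to a $Y$-edge comes precisely from $Y^{\mathsf T} = -Y$ in the Bell-effect identity, which is the shadow of the ambiguity $Y = iXZ$ versus $Y = -iZX$ that the quotient $\overline{\mathcal P}$ discards. If the diagram also contains Hadamard edges, I would first expand each Hadamard into spiders (or track $HPH$ across the edge); the same identity then applies with the colour-swapped Pauli, and $Y$ again supplies the only nontrivial sign. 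Everything else is routine.
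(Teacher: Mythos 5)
Your proof is correct and follows essentially the same route as the paper's: both insert the web's Paulis and evaluate the resulting object in two ways, obtaining $(-1)^n$ from the per-spider $\pm 1$-stabiliser property and $(-1)^m$ from cancelling the doubled Paulis on each internal edge, which yields $D = (-1)^{n+m}D$ and hence $D=0$ for odd $n+m$. The only difference is presentational: the paper tracks the $Y$-edge signs via the scalar in $Y = iXZ$ (getting $i \times i = -1$ per $Y$-edge), whereas you obtain the same sign from the transpose identity $\langle \Phi | (Y \otimes Y) = \langle \Phi | (I \otimes Y^{\mathsf T} Y) = -\langle \Phi |$ after bending all legs into state form --- a slightly cleaner formalisation of the same bookkeeping.
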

\begin{proof}
 Observe that the ZX rules introduced in \autoref{fig:axioms} are up to global scalars --- this is the only place in the paper where we explicitly have to consider global scalars. 
 Pauli webs locally correspond to stabilisers of the spider up to a global phase of $+1$ or $-1$.
 Thus, for each spider, we can surround it with Paulis according to the highlighting of the Pauli web, only changing the global phase of the underlying linear map. 
 For each $-1$-stabiliser, the global phase is multiplied by $-1$.
 Now, as every edge is connected to two spiders and we introduced Paulis of the type of the detecting region surrounding each spider, each edge now has two Paulis of the same type. 
 These cancel each other out.
 Since we have $Y = iXZ$, decomposing a $Y$ error into its $X$ and $Z$ components introduces a global scalar of $i$.
 As each edge highlighted in green and red has two $Y$ spiders, one for each spider that is connected to the corresponding edge, for each of the $m$ edges highlighted as $Y$, we gain a global phase of $i \times i = -1$.
 But then, once we have cancelled out all the newly introduced spiders, we get back the original diagram, up to a $(-1)^{n+m}$ global phase.
 Therefore, we have derived that $D = (-1)^{n+m} D$.
 But then, if $n + m$ is odd, then $D = -D$, implying that $D = 0$.
\end{proof}

So, coming back to the example from \autoref{fig:detecting-sets-examples} (a), two consecutive $ZZ$ Pauli measurements can be expressed as the following ZX diagram with one detecting region. 
We have:
\[\tikzfig{05-ZX-errors/detecting-region-example}\]
The Pauli web is a $+1$-stabiliser of all the spiders and, thus, for the trivial measurement outcome, the diagram is non-zero. 

When running a non-deterministic circuit with measurements on a quantum computer, we get a measurement outcome sampled from the probability distribution over all outcomes according to the Born rule. 
Any circuit that goes to 0 has a probability of 0 of ever occurring. 
In the example above, this means that any valid measurement outcome must commute with the detecting region, implying that the two projective Pauli measurements must have the same outcome. 

Faults give rise to new ZX diagrams, where we now place additional Pauli spiders on the edges. 
If these occur in a detecting region and they anti-commute with the highlighted edge, they flip the expected parity outcome of the measurements. 
Thus, when running the circuit, we will get the opposite parity as we expected and can conclude that a fault occurred. 
For our example above, if an $X$ fault occurs between the two measurements, we have: 
\[\tikzfig{05-ZX-errors/detecting-region-error}\]
This means that for the trivial measurement outcome, the diagram is $0$. 
Any measurement outcome that does not make the diagram go to zero must anticommute with the detecting region --- the opposite of what we would expect in the noise-free setting.
More formally, we have:

\begin{definition}[Detectability of Faults]
  \label{def:detectability}
 Let $D$ be a diagram with edges $E$. Let $F \in \overline{\mathcal{P}^{|E|}}$ be a fault on $D$.
 Then $F$ is \emph{detectable} if there exists a detecting region $P \in \overline{\mathcal{P}^{|E|}}$ such that $F$ and $P$ anticommute.
\end{definition}
\begin{proof}
 If $F$ and $P$ anticommute, this means that there are oddly many edges where the Pauli of $F$ is a $-1$-stabiliser of the detecting region. 
 Therefore, for $D$ to be non-zero, the parity of the measurement outcomes in that detecting region has to be the opposite of what it would be in the fault-free case, meaning we get a measurement outcome that violates our expectations --- the fault is detected.
\end{proof}

In contrast to detecting sets that only contain information about correlated measurements, detecting regions additionally contain information about the set of errors detectable by the detecting set. 
Thus, to figure out whether a fault is detectable, in the ZX calculus, one can iterate (a generating set of the) detecting regions and check if the fault anti-commutes with any of them. 
This can be calculated efficiently by adapting the procedure from~\parencite{borghansZXcalculusQuantumStabilizer2019}.

\section{Fault Equivalence for ZX Diagrams}
As outlined in \autoref{sec:circuit-fault-equivalence}, the goal of fault-tolerant circuit synthesis is to derive implementable circuits from a description of a linear map and its behaviour under noise.
One way to achieve this is by adapting the ZX calculus to noisy settings.

In the noise-free setting, circuit synthesis via the ZX calculus can be broken down into three steps:  (1) find a ZX diagram for the desired linear map, (2) rewrite that diagram into a diagram that looks like a quantum circuit, and (3) extract the corresponding circuit.
However, these steps only consider the underlying linear map and do not necessarily preserve behaviour under noise.
Therefore, we have to adapt the procedure.

For steps (1) and (3), where we translate between quantum circuits and ZX diagrams, we observe that we consider different noise models on each; respectively, considering circuit-level noise and edge-flip noise. 
Therefore, we have to take additional care when translating between the two.
For the second step, where we manipulate the ZX diagram, we want to preserve fault equivalence. 
Thus, we have to restrict ourselves to fault-equivalent rewrites. 
Therefore, the overall synthesis process will consist of the following steps:
\begin{enumerate}
  \item Translating a quantum circuit under (a potentially idealised submodel of) circuit-level noise into a fault-equivalent ZX diagram under edge-flip noise.
  \item Rewriting the diagram while preserving behaviour under edge-flip noise.
  \item Extracting a circuit from the rewritten diagram that is fault-equivalent under non-idealised, circuit-level noise.
\end{enumerate}

As all steps preserve fault equivalence, the final circuit will be fault-equivalent to the idealised specification we started off with, resulting in circuits that satisfy our specification by construction.

\subsection{Fault-Equivalent Representation}
We have seen that quantum circuits can be mapped to ZX diagrams; however, these mappings do not consider noise.
We seek a representation that accounts for the behaviour of the quantum circuit under noise. 
We introduce a formal property for ZX diagrams that achieves this strong correspondence:

\begin{definition}[Fault-equivalent representation]
  \label{def:fault-equivalent-repr}
 A ZX diagram $D$ is a \emph{fault-equivalent representation} of a quantum circuit $C$ under some noise model $\mathcal F$ if $D$ under (some submodel of) edge-flip noise is fault-equivalent to $C$ under $\mathcal F$. 
\end{definition}

A fault-equivalent representation of a quantum circuit guarantees that the behaviour of the circuit is fully modelled by the ZX diagram under edge flips.

\subsubsection{Fault Gadgets}
We aim to find ZX diagrams that, under edge-flip noise, are fault-equivalent to quantum circuits under circuit-level noise.
To address this, we introduce fault gadgets—diagrammatic tools for tracking the effect of individual faults.
A fault-equivalent representation can be constructed by mapping the circuit to a fault-free ZX diagram and inserting fault gadgets for each atomic fault.
This immediately guarantees that any fault on the circuit has a natural correspondence in the ZX diagram and vice versa.

First, we define:
\begin{definition}[Pauli boxes]
 We define the following four Pauli boxes:
    \[\tikzfig{06-fault-equivalence-on-ZX/pauli-boxes}\]
\end{definition}

A fault gadget is a diagrammatic method to track the effect of entangled faults on circuits: 
\begin{definition}
 Let $D$ be a ZX diagram and $P \in \overline{\mathcal{P}^{|E|}}$ be a fault on $D$.
 Then a fault gadget for $P$ can be constructed by adding a fault-free Pauli box on each edge according to $P$, connecting them with fault-free edges to a red spider, which is connected with a normal edge to a green spider.
\end{definition}

Fault gadgets are a method to track larger-weight faults while preserving the assumption that faults are represented by independent edge flips. 
They do not correspond to a physical operation within a circuit but rather are a diagrammatic tracker of potential faults.

For example, for a diagram consisting of the identity on four qubits and the fault $I \otimes X \otimes Y \otimes Z$, we have: 
\[\tikzfig{06-fault-equivalence-on-ZX/fault-gadget}\]
More generally, fault gadgets clearly generalise to arbitrary weight Paulis. 
By spider fusion, Pauli boxes corresponding to identity edges can be fused in and, therefore, omitted. 

In a noise-free setting, fault gadgets correspond to the identity:
   \[\tikzfig{06-fault-equivalence-on-ZX/fault-gadget-semantics}\]
This means that adding a fault gadget does not change the linear map a diagram represents. 

However, it changes its behaviour under noise:
   \[\tikzfig{06-fault-equivalence-on-ZX/fault-gadget-noise}\]
A single $Z$ flip on the fault gadget creates a fault that is equivalent to an $I \otimes X \otimes Y \otimes Z$ fault on the output. 
Thus, by adding the fault gadgets, we have introduced an additional spacetime location to the ZX diagram. 
A single $Z$ edge flip in that location has the effect of the fault we wanted to track. 
Furthermore, as the single-legged green spider stabilises $X$, $X$ flips on that edge are trivial, and $Y$ flips are equivalent to $Z$ flips. 
Therefore, no other non-trivial faults are added by the fault gadget.

\begin{proposition}
 For any quantum circuit $C$ under any noise model $\mathcal F$, we can find a ZX diagram that is fault-equivalent under edge-flip noise. 
\end{proposition}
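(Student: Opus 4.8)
The plan is to construct the required diagram explicitly, by combining the fault-free translation of $C$ with one fault gadget per atomic fault of $\mathcal F$, and then idealising every edge except the gadget trigger edges. First I would map $C$ to a fault-free ZX diagram $D_0$ using the standard circuit-to-ZX translation (\autoref{fig:mappings}), which preserves the underlying linear map together with its measurement-outcome bookkeeping. Writing $\mathcal F = \{f_1, \dots, f_k\}$ for the (finite) set of atomic faults of the noise model, I would attach to $D_0$ a fault gadget tracking $f_i$ for each $i$, obtaining a diagram $D$. Finally I would fix the submodel of edge-flip noise that idealises every edge of $D$ as fault-free except the $k$ trigger edges (the normal edge joining the red and green spider inside each gadget).

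The core of the argument is to set up a weight- and effect-preserving correspondence between $\langle \mathcal F\rangle$ and the faults of $D$ under this submodel. By the fault-gadget computation established just above the statement, a single $Z$-flip on the trigger edge of the gadget for $f_i$ produces exactly the fault $f_i$ on $D$, while an $X$-flip there is trivial and a $Y$-flip is equivalent to the $Z$-flip. Hence the non-trivial atomic faults of the submodel are exactly the trigger flips $Z_1, \dots, Z_k$, and the assignment $Z_i \mapsto f_i$ extends to a surjective, effect-preserving homomorphism from the group generated by the trigger flips onto $\langle \mathcal F\rangle$. Because the $k$ trigger edges are independent and $X$-flips never contribute to any effect, the minimal number of atomic ZX faults realising a given effect equals the minimal number of atomic circuit faults realising it, so the correspondence preserves $wt$ exactly.

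With this correspondence, fault equivalence in both directions is immediate and in fact holds for every $w$. Given a circuit fault $F = \prod_{i \in T} f_i$ with $T$ of minimal size, the trigger fault $\prod_{i \in T} Z_i$ has weight $|T| = wt(F)$ and, since $D_0$ agrees with $C$ in the fault-free case and each gadget reproduces its Pauli on the diagram, satisfies $D^{\prod_{i\in T} Z_i} = C^{F}$. Conversely, any trigger fault of weight $s$ has an effect lying in $\langle \mathcal F\rangle$ whose circuit weight is at most $s$. In particular a matching fault of no greater weight always exists, so the detectability clause of the definition is never even invoked, and $D$ is a fault-equivalent representation of $C$ under (this submodel of) edge-flip noise.

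The main obstacle I expect lies in the bookkeeping rather than in any single deep step: one must check that composing several gadget triggers genuinely yields the product fault $\prod_{i\in T} f_i$ acting on $D_0$ (using that the gadgets act on disjoint auxiliary structure and that the $Y = iXZ$ decomposition only affects a global phase, which the edge-flip setting discards), and that the effect equality $C^F = D^{F'}$ is read off correctly, including the linear constraints on measurement outcomes. Establishing the \emph{exact} weight equality, rather than a mere inequality, is the one place where the independence of the trigger edges and the triviality and equivalence of $X$- and $Y$-flips must be used with care.
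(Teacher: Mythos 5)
Your proposal is correct and follows essentially the same construction as the paper's proof: translate $C$ via the standard mapping with all edges idealised as fault-free, insert one fault gadget per atomic fault of $\mathcal F$, and observe that the only non-trivial faults of the resulting diagram are the gadget trigger flips, which mirror the circuit's atomic faults exactly. Your write-up simply makes explicit the weight- and effect-preserving correspondence (and the triviality of $X$-flips and the $Y$-to-$Z$ equivalence on trigger edges) that the paper's much terser proof leaves implicit.
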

\begin{proof}
 First, we translate the circuit using the mapping in \autoref{fig:mappings}, idealising all edges as fault-free.
 Then, for each atomic fault in the circuit, we insert the corresponding fault gadget.
 As a result, the only possible faults in the ZX diagram are those represented by these gadgets, directly mirroring the circuit's faults.
 Therefore, the final ZX diagram is fault-equivalent to the original circuit.
\end{proof}

\subsubsection{Fault-Equivalent Representations with Fewer Fault Gadgets}
\label{sec:phenomenological-fe-representation}
While fault gadgets are a useful tool for capturing highly entangled noise under edge flips, they can be tedious to work with. 
As circuits get larger and the noise models more involved, the number of necessary fault gadgets becomes unwieldy. 
We observe that often, more basic diagrams without fault gadgets are also fault-equivalent for specific circuits under certain idealised submodels of circuit-level noise.

There are two translation processes between quantum circuits and ZX diagrams, once when translating the specification into a ZX diagram and once when translating the final ZX diagram back into a circuit. 
First, we will consider the specifications. 

\begin{proposition}
  \label{prop:fault-free}
 For any gate idealised as fault-free, we can find a fault-equivalent ZX diagram without using fault gadgets. 
\end{proposition}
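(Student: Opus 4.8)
The plan is to take the standard ZX representation of the gate from \autoref{fig:mappings}, idealise its \emph{internal} edges as fault-free, and then verify fault equivalence directly by exhibiting a weight-preserving bijection between the remaining fault locations and the gate's wires. First I would map the fault-free gate $G$ to a ZX diagram $D$ via the translation of \autoref{fig:mappings}, recalling that this translation preserves the underlying linear map, so $D$ and $G$ denote the same map. I would then pass to the submodel of edge-flip noise on $D$ in which every edge that is not a boundary (input/output) edge is declared fault-free. This is a legitimate submodel, and it is the natural choice: a fault-free gate introduces no internal faults, so there is nothing in the circuit for an internal ZX fault location to correspond to.

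Next I would set up the correspondence between the faults that survive this idealisation. Under the chosen submodel the only fault locations of $D$ are its boundary edges, which are in bijection with the input/output wires of $G$. A Pauli fault $F$ on the wires of $G$ therefore corresponds to the fault $F'$ on $D$ given by the same Pauli on the matching boundary edges, and conversely. This correspondence is weight-preserving, since a single-qubit Pauli on one wire maps to a single edge flip and vice versa, with the decomposition of $Y$ into $X$ and $Z$ unambiguous up to the global phase we quotient out. Because $D$ and $G$ denote the same map and each side applies the same boundary Pauli at the same port, I would conclude $G^{F} = D^{F'}$ as linear maps, so corresponding faults have identical effect.

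Finally I would observe that, since $G$ is fault-free, the only faults it admits are these Pauli faults on its input/output wires, and every one of them has been matched by an equal-weight, equal-effect fault on $D$ (and symmetrically, every boundary fault on $D$ is matched on $G$). As boundary faults are never detectable, the matching is exactly what the definition of fault equivalence requires, yielding $D \FaultEq G$ for all weights $w$ with no fault gadgets used. \autoref{prop:fault-equivalence-compositionality} then lets this representation be slotted into larger diagrams. In the special case where the surrounding noise model also idealises the gate's wires, there are no non-trivial faults at all and the equivalence holds vacuously.

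The main obstacle I anticipate is pinning down the boundary-versus-internal edge bookkeeping cleanly: one must check that in each diagram of \autoref{fig:mappings} the boundary edges correspond exactly to the gate's wires, so that gluing an input edge of one gate to the output edge of an adjacent gate does not double-count a fault location, and that idealising the internal edges does not silently discard a location a genuine fault could occupy. Since a fault-free gate has no internal faults by definition, this should reduce to a routine case check over the generating gates (identity, swap, single-qubit Cliffords, CNOT, CZ), but it is the place where a subtle error could hide.
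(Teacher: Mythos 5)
Your proposal is correct and takes essentially the same approach as the paper's proof: translate the gate via \autoref{fig:mappings}, idealise all internal edges of the resulting ZX diagram as fault-free, and match the remaining boundary edge flips one-to-one with the qubit flips on the gate's input/output wires. The paper states this correspondence in three sentences where you spell out the bijection, weight preservation, and equality of the faulted maps explicitly, but the underlying argument is identical.
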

\begin{proof}
 As the gate is assumed to be fault-free, the only faults that can occur on the quantum circuit are qubit flips on the input and output edges. 
 But then, we can translate the gate according to \autoref{fig:mappings}, idealising all internal edges of the ZX diagram as fault-free. 
 Now, the only faults allowed on the corresponding ZX diagram are edge flips on the input and output edges. 
 These correspond exactly to the qubit flips.
 Therefore, the diagram will be fault-equivalent to the fault-free gate. 
\end{proof}

So, for example, we have the following fault equivalences: 
\[\tikzfig{06-fault-equivalence-on-ZX/fault-free-equivalence-examples}\]

\begin{corollary}
  \label{cor:fault-equivalent-unitaries}
 Single-qubit unitaries can be represented by fault-equivalent ZX diagrams without using fault gadgets.
\end{corollary}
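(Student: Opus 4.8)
The plan is to prove the statement in its strong form: for a single-qubit Clifford unitary $U$, its ZX representation under the \emph{full} edge-flip noise model---where every edge, \emph{including internal ones}, may carry a weight-one Pauli---is fault-equivalent to $U$ under circuit-level noise. While \autoref{prop:fault-free} already yields a fault-gadget-free representation once the internal edges are idealised away, the content of the corollary is that for single-qubit unitaries no such idealisation is needed, because internal faults are harmless.

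First I would fix a concrete diagram $D$ for $U$. Since $U$ is a single-qubit Clifford, I can take $D$ to be the Euler-style chain built from \autoref{fig:mappings}: a sequence of degree-two spiders (Z- and X-spiders with phases in $\{0, \pi/2, \pi, 3\pi/2\}$) together with Hadamard boxes, joining the single input edge to the single output edge through some internal edges. The crucial structural observation is that every spider in $D$ has exactly two legs, so pushing a Pauli through it is ordinary single-qubit Clifford conjugation: it sends a single-qubit Pauli to a single-qubit Pauli and never branches or increases weight. This is precisely the Pauli-propagation move illustrated after \autoref{fig:axioms}, specialised to degree-two spiders, with the colour-change rule handling Hadamard boxes.

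Next I would reduce fault equivalence to a bookkeeping statement about effective faults. As $U$ has no measurements, $D$ has no measurement spiders, and since $D$ represents an invertible map one checks it has no non-trivial detecting region (a line of degree-two spiders admits no internal-only Pauli web); hence no fault on either side is detectable, and $w$-fault equivalence reduces to matching the effects and weights of faults. On the circuit side, circuit-level noise on a single-qubit gate consists of qubit flips on the wires together with gate faults on the single output wire, and since input flips propagate through $U$ via $P \mapsto U P U^\dagger$, every circuit fault has the effect of a single output Pauli $P$, realisable with weight $\le 1$. On the ZX side I would use the no-branching observation to show that every edge flip---on the input, output, or an internal edge---propagates to a single boundary Pauli of weight one (decomposing any $Y$ into its $X$ and $Z$ parts, which propagate together to a single output Pauli). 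Thus on both sides the achievable effects form exactly the set $\{P \cdot U : P \in \overline{\mathcal P^1}\}$, all of weight $\le 1$.

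Finally I would verify $w$-fault equivalence for every $w$ by matching weights in both directions: a fault of weight $k \ge 1$ on either side has some effect $P$, realised on the other side by a single boundary edge flip (on the circuit, an output qubit flip) of weight $1 \le k$ when $P \neq I$, and trivially when $P = I$; together with the agreement of the fault-free maps (both $U$) this gives $D \FaultEq C$ under the full model. The hard part will be the internal-edge case: one must rule out an internal flip creating a correlated, higher-weight boundary fault. This is exactly where the degree-two structure is essential---branching via the \TextCopy rule can only occur at spiders of degree $\ge 3$, which do not appear in a single-qubit Clifford chain---so every internal fault collapses to a weight-one boundary Pauli and fault equivalence is preserved without idealising any edge.
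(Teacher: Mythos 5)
Your proof is correct for Clifford unitaries, but it takes a genuinely different and stronger route than the paper, and you should be aware of the scope difference. The paper's proof is a two-step reduction that never propagates anything: since a single-qubit gate has only one output wire, its gate faults under circuit-level noise coincide exactly with the qubit flips already present on that wire, so the gate is fault-equivalent to its fault-free idealisation; \autoref{prop:fault-free} then gives a gadget-free representation in which \emph{all internal edges are idealised} as fault-free. You instead prove fault equivalence under the \emph{full} edge-flip noise model, with internal edges live, using the structural fact that the Euler-form chain contains only degree-two spiders and Hadamard boxes, so Pauli propagation never branches or increases weight, every internal flip collapses to a weight-one boundary Pauli, and (as you correctly check) a chain admits no non-trivial detecting region. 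What each approach buys: your version is stronger where it applies --- the representation needs no purple edges at all, which is convenient if later rewrites touch internal edges --- but it is strictly narrower, because pushing a Pauli through a non-Clifford phase negates the phase rather than producing a fault on the same diagram, i.e.\@ an internal $X$ flip beside a $T$-like spider yields a diagram not of the form $D^F$; so your strong claim is actually false for general single-qubit unitaries, whereas the paper's idealised-internal-edge argument covers them all (ZX universality plus \autoref{prop:fault-free} require no Clifford assumption). In the paper's context the restriction is harmless, since \autoref{cor:fault-equivalent-unitaries} is only invoked for single-qubit Cliffords conjugating Pauli-$Z$ measurements, but you should state explicitly that you are proving a Clifford-only strengthening rather than the corollary as literally stated. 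Your weight bookkeeping is otherwise sound, including the recombination of a $Y$ flip's $X$ and $Z$ components into a single boundary Pauli and the handling of faults whose pushed-out effect is trivial.
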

\begin{proof}
 Idealising gates as fault-free removes atomic faults that act on subsets of their output qubits. 
 As single-qubit gates only have one output qubit, their faults are already accounted for by the corresponding qubit faults on their output wire. 
 But then, single-qubit unitaries are fault-equivalent to their idealised self. 
 Thus, by \autoref{prop:fault-free}, they can be represented by ZX diagrams without requiring fault gadgets. 
\end{proof}

For multi-qubit Pauli measurements, we proposed two levels of idealisation, fault-tolerant measurements and the stronger idealisation of fault-free measurements.
By \autoref{prop:fault-free}, we can easily represent fault-free Pauli measurements. 
However, for fault-tolerant measurements, we additionally allow for measurement flips, modelled by an anticommuting Pauli before and after the measurement, and measurement flips along with single-qubit flips on the outputs.
We have:

\begin{proposition}
  \label{prop:fault-equivalent-ft-pauli}
 For fault-tolerant multi-qubit Pauli-$Z$ measurements, we have:
    \[\tikzfig{06-fault-equivalence-on-ZX/fault-tolerant-measurement}\]
\end{proposition}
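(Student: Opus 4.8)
The plan is to apply the definition of a fault-equivalent representation (\autoref{def:fault-equivalent-repr}) directly: I must show that the displayed ZX diagram under (a submodel of) edge-flip noise is fault-equivalent to the fault-tolerant Pauli-$Z$ measurement under its idealised submodel of circuit-level noise. Since fault equivalence is symmetric, the work splits into two directions, and the heart of the argument is to exhibit a weight-respecting correspondence between the atomic faults on each side. On the circuit side, the fault-tolerant idealisation admits exactly three kinds of atomic fault: single-qubit flips on the (non-idealised) wires, a bare measurement flip (an anticommuting Pauli immediately before and after the box), and a measurement flip combined with a single-qubit flip on exactly one output.

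First I would fix the ZX semantics of each of these. Using the representation of a multi-qubit $Z$-measurement as $Z$-spiders feeding a central phase-$k\pi$ $X$-spider, I would show via \TextCopy and \TextPiCommute that an anticommuting Pauli placed before and after a $Z$-spider collapses (the two $X$-components meeting on the measured wire) to a single $X$ on the edge into the central $X$-spider, which in turn acts as $(k+1)\pi$, i.e.\ precisely a flip of the measurement outcome. This identifies the ZX counterpart of a bare measurement flip as a single edge flip on the internal measurement edge. Single-qubit wire flips correspond directly to edge flips on the matching input/output edges, as in \autoref{prop:fault-free}. The combined \enquote{measurement flip $+$ single output flip} fault is then realised in the ZX diagram by the corresponding fault gadget, so that a single $Z$ edge flip on the gadget reproduces both effects at once; by the fault-gadget semantics this adds no non-trivial faults beyond the intended one.

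With this dictionary in hand, I would verify the two directions of fault equivalence. In each direction I take an undetectable atomic fault (or, for composite faults, a product of atomic faults) of weight below $w$, translate it through the dictionary, and check that its image has the same effect on the underlying linear map and weight no greater than the original; detectable faults are discharged using \autoref{def:detectability} by exhibiting a detecting region with which they anticommute. Compositionality (\autoref{prop:fault-equivalence-compositionality}) lets me reduce arbitrary faults to their single-location constituents, so it suffices to treat the generating atomic faults.

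The main obstacle is the weight bookkeeping for the \enquote{measurement flip $+$ single output flip} fault. In the circuit this is a single atomic event (weight one), whereas a naive ZX encoding would place an $X$ on the internal edge and a separate Pauli on an output edge, giving weight two and breaking the equivalence. The role of the fault gadget is exactly to repackage these two correlated effects as one edge flip, restoring the weight-one count. Dually, I must confirm that no single edge flip of the proposed diagram can produce a multi-qubit output error --- this is what distinguishes the fault-tolerant idealisation from a generic noisy measurement --- which amounts to checking that the only edges whose flips could propagate to more than one output are ruled out by the gadget structure together with the $Z$-spider copy behaviour. Once both the effect-matching and the weight counts are confirmed in each direction, the symmetry of the construction closes the proof.
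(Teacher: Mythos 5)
Your dictionary for the first two fault types (bare measurement flip $\leftrightarrow$ $X$ flip on an internal measurement edge, wire flips $\leftrightarrow$ boundary edge flips) matches the paper's, but the way you handle the combined \enquote{measurement flip $+$ single output flip} fault is where the proof breaks. The proposition asserts fault equivalence for the \emph{plain} ZX representation of the multi-qubit $Z$-measurement (per-qubit $Z$-spiders feeding a central $k\pi$ $X$-spider) --- that is the whole point of the surrounding subsection on fault-equivalent representations with \emph{fewer} fault gadgets. You are not free to insert fault gadgets: that yields a different diagram from the one in the statement, so at best you would prove fault equivalence for a gadget-decorated measurement, not the claimed one.

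Moreover, the premise that drives you to gadgets --- that a naive encoding \enquote{would place an $X$ on the internal edge and a separate Pauli on an output edge, giving weight two and breaking the equivalence} --- is false, and recognizing this is the key idea of the paper's proof. The combined faults are already realised by \emph{single} edge flips on the plain diagram: a $Y$ flip on the internal edge between the $i$-th $Z$-spider and the central $X$-spider decomposes into an $X$ component that fuses into the central spider (turning $k\pi$ into $(k{+}1)\pi$, i.e.\ a measurement flip) and a $Z$ component that commutes through the $Z$-spider onto the $i$-th output, producing measurement flip $+$ $Z$ output flip at weight one. Similarly, an $X$ (resp.\ $Y$) flip on the $i$-th \emph{input} edge copies through the $Z$-spider to give the measurement flip together with an $X$ (resp.\ $Y$) flip on the $i$-th output. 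With these two observations the weight bookkeeping that worried you closes on its own, and the converse direction also needs no gadget structure: input/output edge flips are qubit flips, $X$ internal flips are measurement flips, $Z$ internal flips push out to single-qubit output flips, and $Y$ internal flips are measurement flip $+$ $Z$ output flip. Since every atomic fault on either side corresponds to an atomic (weight-one) fault on the other, the equivalence follows for all generated faults.
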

\begin{proof}
 To show the fault equivalence, we will argue that all atomic faults on either side have natural atomic correspondences on the other side. 

 The measurement flip corresponds to $X$ flips on one of the internal edges of the measurement:
    \[\tikzfig{06-fault-equivalence-on-ZX/measurement-flip}\]
 A measurement flip, along with a $Z$ data flip on one of the output qubits, corresponds to a $Y$ flip on the corresponding measurement edge, pushing the measurement flip through the measurement spider accordingly:
    \[\tikzfig{06-fault-equivalence-on-ZX/measurement-data-flip}\]
 Measurement flips, along with $X$ or $Y$ data flips on one of the outputs, are equivalent to a corresponding $X$ or $Y$ flip on the input, as pushing this fault from the input to the outputs creates the output flip and the measurement flip. 
 But then all atomic faults on fault-tolerant multi-qubit Pauli-$Z$ measurements have corresponding atomic edge flips on the corresponding ZX diagram. 
 Thus, any fault on the idealised circuit, which is some combination of atomic faults, must naturally have a corresponding fault of at most the same weight on the ZX diagram. 
    
 Conversely, all edge flips correspond to some atomic fault of the measurement. 
 Edge flips on the inputs and outputs naturally correspond to qubit flips. 
 An $X$ flip on the internal edges corresponds to a measurement flip (as shown above). 
 A $Z$ flip can be pushed out to become a $Z$ flip on the corresponding output edge
 , and a $Y$ flip corresponds to a measurement flip along with a $Z$ flip (as shown above). 

 But then, we have shown that the two diagrams are fault-equivalent. 
\end{proof}

For all other multi-qubit measurements, we observe that they can be created by conjugating the Pauli-$Z$ measurement with single-qubit Cliffords. 
As single-qubit unitaries have natural fault-equivalent diagrams (\autoref{cor:fault-equivalent-unitaries}) and the composition of fault-equivalent diagrams preserves fault equivalence (\autoref{prop:fault-equivalence-compositionality}), we can represent any multi-qubit Pauli measurement fault-equivalently. 

Beyond translating idealised specifications into ZX diagrams, we also have to extract quantum circuits from our ZX diagrams. 
However, for that, we have to consider entangled faults that act on any of the outputs of the gate. 
We have: 

\begin{proposition}
  \label{prop:CNOT-fault-accounting}
 The CNOT under circuit-level noise is fault-equivalent to the following ZX diagram:
  \[
 \tikzfig{04-zx/fault-accounting/CNOT}
 \quad \FaultEq \quad
 \tikzfig{04-zx/fault-accounting/fault-accounting-cnot}
  \]
\end{proposition}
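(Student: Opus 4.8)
The plan is to follow the same atomic-to-atomic strategy as the proof of \autoref{prop:fault-equivalent-ft-pauli}: show that every atomic fault of the CNOT under circuit-level noise has a weight-one counterpart on the fault-accounting diagram under edge-flip noise, and conversely, with all correspondences weight-preserving. The circuit-level atomic faults of a CNOT come in three kinds: qubit flips on the two input wires, qubit flips on the two output wires, and gate faults, i.e.\@ an arbitrary Pauli acting on any subset of the two output wires. The input and output qubit flips map directly to single edge flips on the corresponding boundary edges of the ZX diagram, so the real content is accounting for the correlated output Paulis produced by a single gate-fault event, each of which must be matched by a \emph{weight-one} edge flip in order to preserve weights.

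My first step is to record how Paulis propagate through the CNOT using \TextCopy and \TextFusion (equivalently, the Clifford tableau): $X$ on the control copies to $X\otimes X$, $Z$ on the target copies to $Z\otimes Z$, while $Z$ on the control and $X$ on the target pass straight through. Reading these relations as equivalences of faults on a fixed diagram shows that many correlated output Paulis are already weight-one in the plain CNOT diagram. For example, a single $X$ flip on the control input edge is equivalent to the gate fault $X\otimes X$ on the outputs, and the remaining input flips similarly realise $Y\otimes X$, $Z\otimes Z$ and $Z\otimes Y$; meanwhile a flip on the internal edge joining the green and red spiders realises further output faults (an $X$ flip there is equivalent to $I\otimes X$, a $Z$ flip to $Z\otimes I$, and a $Y$ flip to the correlated $Z\otimes X$). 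I would tabulate the output effect of each single edge flip of the plain diagram and thereby isolate precisely which of the fifteen non-trivial output Paulis are not yet realised at weight one.

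The genuinely correlated gate faults that remain are exactly what the fault-accounting diagram repairs by attaching fault gadgets: for each such Pauli $P$ the diagram carries a gadget whose single $Z$ flip is equivalent to $P$ on the outputs, turning $P$ into a weight-one fault and matching the weight of the gate-fault event. For the converse direction I would run the correspondence backwards---boundary edge flips give qubit flips, internal-edge flips give the gate faults computed above, and a flip on a gadget gives the correlated gate fault it was built for---invoking the established properties of fault gadgets (an $X$ flip on the gadget's single-legged spider is trivial and a $Y$ flip is equivalent to a $Z$ flip, noting \TextOCM keeps everything edge-local) to guarantee that no spurious faults are introduced. Since every correspondence is weight-one to weight-one, fault equivalence follows. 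The hard part will be the bookkeeping in the middle step: one must determine, via the propagation rules, exactly which correlated output Paulis are already captured at weight one by the plain edges (including propagating input flips forward to the outputs), and then confirm that the gadgets drawn are precisely those still required---neither omitting a necessary correlated fault nor creating an unaccounted-for weight-one fault.
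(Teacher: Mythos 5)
Your proposal is correct and follows essentially the same route as the paper's proof: both enumerate the 15 non-trivial output Paulis of a CNOT gate fault, observe that the 6 single-qubit ones plus the 5 correlated ones ($X\otimes X$, $Y\otimes X$, $Z\otimes Z$, $Z\otimes Y$, $Z\otimes X$) are already realised by single edge flips of the plain diagram, and assign the remaining 4 to fault gadgets. Your write-up is somewhat more explicit than the paper's (which leaves the propagation bookkeeping to a figure and the converse direction implicit), but the decomposition and counting are identical.
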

\begin{proof}
 Under circuit-level noise, a fault event on the CNOT can create any non-trivial fault in $\overline{\mathcal{P}^{2}}$ on the output qubits; there are 15 potential faults.
  
 We observe that faults that only affect one output qubit are naturally accounted for by a corresponding edge flip. 
 Additionally, to these six faults, the following five faults are also accounted for by single edge flips:
  \[\tikzfig{04-zx/fault-accounting/cnot-fault-accounted}\]

 But then, we have accounted for 11 out of the 15 potential faults. 
 The remaining four faults are accounted for by the corresponding fault gadgets. 
\end{proof}

As such, we can not freely translate a CNOT-like ZX diagram under edge-flip noise into a CNOT quantum circuit under circuit-level noise, as the two are not fault-equivalent. 
Compared to the 15 possible faults on the circuit, only requiring to account for four fault gadgets to make the translation fault-equivalent is already a substantial improvement.
However, going a step further, there are composite circuits that include CNOTs that naturally account for all faults of circuit-level noise without requiring fault gadgets:

\begin{proposition}
  \label{prop:CNOT-plus-measurement}
 Let $C$ be the circuit consisting of a CNOT followed by a Z-basis measurement on the second qubit. 
 Then: 
  \[
 \tikzfig{04-zx/fault-accounting/cnot-measurement}
 \quad \FaultEq \quad
 \tikzfig{04-zx/fault-accounting/fault-accounting-cnot-measurement-simplified}
  \]
\end{proposition}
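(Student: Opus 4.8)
The plan is to follow the same atomic-fault-correspondence strategy used in \autoref{prop:fault-equivalent-ft-pauli} and \autoref{prop:CNOT-fault-accounting}: exhibit, for every atomic fault of circuit-level noise on $C$, a matching edge flip of equal or lesser weight on the right-hand diagram, and conversely. The essential observation that makes the four fault gadgets of \autoref{prop:CNOT-fault-accounting} unnecessary here is that the destructive $Z$-measurement on the target removes the target output wire. Concretely, I would first fuse the red measurement spider into the target (red) spider of the CNOT, leaving a diagram with only four fault locations: the control input and output edges, the target input edge, and the internal edge between the green and red spiders. As in \autoref{prop:fault-equivalent-ft-pauli}, an $X$ on the internal edge fuses into the measurement spider and realises a measurement flip.

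For the forward direction I would run through the circuit-level atomic faults. The $15$ nontrivial two-qubit faults $(P_c,P_t)\in\overline{\mathcal{P}^2}$ produced by a single CNOT gate fault collapse once $P_t$ meets the $Z$-measurement: a $Z$ on the target is absorbed and becomes trivial, while an $X$ or $Y$ anticommutes and only flips the outcome. Hence every such fault reduces to one of the eight effective classes $(P_c,\text{flip})$ with $P_c\in\{I,X,Y,Z\}$ and a single outcome bit, and I would check that each is produced by a weight-one edge flip on one of the four remaining edges (for instance $(Z_c,\text{flip})$ by a $Y$ on the internal edge, arising from the weight-one CNOT fault $Z\otimes X$). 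The remaining circuit faults — target qubit flips, measurement flips, and the composite ``measurement flip $+$ gate fault'' events — map onto flips of the target input and internal edges; because the measurement is destructive there is no target output, so the measurement's output-fault events reduce to flips already accounted for.

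For the reverse direction I would verify that each weight-one edge flip ($X$ and $Z$ on each of the four edges) corresponds to a circuit fault of weight at most one: flips on the control and target wires are qubit flips, the $X$ on the internal edge is a measurement flip, and the $Z$ (resp.\ $Y$) on the internal edge matches the weight-one CNOT gate fault $Z\otimes I$ (resp.\ $Z\otimes X$). \autoref{prop:fault-equivalence-compositionality} then assembles these local correspondences into the claimed $\FaultEq$. A more modular alternative is to start from \autoref{prop:CNOT-fault-accounting}, compose with a fault-equivalent representation of the single-qubit $Z$-measurement, and argue that each of the four gadgets becomes redundant once its target-output Pauli box is absorbed by the measurement spider — a $Z$ box vanishing and an $X$ box becoming a measurement flip already covered by the internal edge.

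The main obstacle is the two-directional weight bookkeeping: one must confirm that none of the four two-qubit faults that genuinely required gadgets for the bare CNOT survives as an inequivalent, weight-increasing fault after the measurement collapse, and that the composite measurement faults of circuit-level noise introduce nothing new. Since every effective class turns out to be reachable by a weight-one edge flip, the weights match exactly and the gadgets can be dropped; the difficulty lies in establishing this collapse uniformly rather than in any single hard computation.
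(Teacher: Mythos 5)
Your proposal is correct and takes essentially the same approach as the paper: the paper's (much terser) proof is exactly your ``more modular alternative,'' relying on \autoref{prop:CNOT-fault-accounting} and observing that, because the measurement stabilises $Z$, each of the four faults that previously required a fault gadget now corresponds to a single edge flip. Your from-scratch enumeration — collapsing the fifteen CNOT gate faults into the classes $(P_c,\text{flip})$ and checking both directions of the weight bookkeeping — is a more detailed rendering of the same observation.
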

\begin{proof}
 Due to the fact that the measurement stabilises $Z$, for each fault previously unaccounted for, there is now a corresponding single edge flip that produces the required effect:
  \[
 \tikzfig{04-zx/fault-accounting/cnot-measurement-proof}
  \]
\end{proof}

Similarly, we can state: 
\begin{proposition}
  \label{prop:CNOT-plus-state}
 Let $C$ be the circuit consisting of preparing the first qubit in $\ket{+}$ followed by a CNOT gate controlled by the first qubit.
 Then:
  \[
 \tikzfig{04-zx/fault-accounting/cnot-plus-state}
 \quad \FaultEq \quad
 \tikzfig{04-zx/fault-accounting/fault-accounting-cnot-plus-state}
  \]
\end{proposition}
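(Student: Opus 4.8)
The plan is to mirror the proof of \autoref{prop:CNOT-plus-measurement}, exploiting the fact that preparing the control in $\ket{+}$ makes an $X$ fault on the control input trivial, which collapses exactly the four correlated faults that forced the bare CNOT to carry fault gadgets. Throughout I would work directly on the single composite gadget, so that \autoref{prop:fault-equivalence-compositionality} is not even needed; it suffices to check both directions of fault equivalence on the atomic faults.

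First I would recall from \autoref{prop:CNOT-fault-accounting} the fault budget of the bare CNOT: of the fifteen non-trivial output faults in $\overline{\mathcal{P}^2}$, eleven already possess a single-edge-flip representative (six single-qubit output flips, four from propagating input flips through the CNOT, and one from the internal edge), while the remaining four are those with a non-trivial $X$- or $Y$-component on the control output, namely $\{XY, XZ, YY, YZ\}$. These are precisely the faults handled by the four fault gadgets.

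Next I would introduce the new ingredient. The state $\ket{+}$ is the $+1$-eigenstate of $X$ and is drawn as a phase-free green spider with a single leg, which fuses into the control spider of the CNOT; consequently an $X$ fault on the control input wire is absorbed and is trivial. Propagating this trivial control-input $X$ through the CNOT (where $X$ on the control copies onto the target) shows that the weight-two output fault $XX$ is itself a \emph{trivial} fault of the composite gadget. This is the exact analogue of the phrase \enquote{the measurement stabilises $Z$} used in \autoref{prop:CNOT-plus-measurement}, with the roles of $X/Z$ and input/output exchanged. I would then close the accounting: working modulo the now-trivial $XX$ (and ignoring global phase, as we work in $\overline{\mathcal{P}^2}$), each previously-unaccounted fault becomes equivalent to one of the eleven accounted faults, concretely $XY \sim IZ$, $XZ \sim IY$, $YY \sim ZZ$, and $YZ \sim ZY$. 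Hence each admits a single-edge-flip representative in the displayed diagram and no fault gadget is required, giving one direction of the fault equivalence; moreover the weights match (each is a single atomic event mapped to a single edge flip), so the weight condition of fault equivalence holds.

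I expect the main obstacle to be the converse direction together with a subtle check hidden in the $XX$-absorption step: one must confirm that every edge flip of the displayed diagram still corresponds to some atomic circuit-level fault of $C$, and in particular that fusing $\ket{+}$ into the control spider does not silently create a new \emph{undetectable} edge-flip fault with no circuit counterpart (for instance, an $X$ flip on the fused spider's remaining legs must be shown trivial rather than a spurious new fault). This direction is routine but, unlike the forward direction, it is where the argument could go wrong, so I would verify it case-by-case on the atomic edge flips exactly as in the earlier propositions.
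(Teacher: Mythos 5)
Your proposal is correct and follows essentially the same route as the paper, which simply says to iterate the four previously-unaccounted faults analogously to \autoref{prop:CNOT-plus-measurement} and show each is now produced by a single edge flip; your key mechanism (the $\ket{+}$ preparation stabilises $X$, so the propagated $XX$ output fault is trivial, collapsing $XY \sim IZ$, $XZ \sim IY$, $YY \sim ZZ$, $YZ \sim ZY$) is exactly the intended dual of ``the measurement stabilises $Z$''. In fact you supply more detail than the paper's one-line proof, including the converse direction check that the paper leaves implicit.
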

\begin{proof}
 Analogously to the above, we can iterate all missing faults on the CNOT and show that they are now accounted for by single edge flips.
\end{proof}

These two composite circuits will become very useful when we construct fault-tolerant circuits.
While they might seem ad hoc, we can view them as fault-equivalent implementations of the split and merge operations, as, for example, known from lattice surgery \parencite{debeaudrapZXCalculus2020}.
They are easy components to identify as fault-equivalent under circuit-level noise. 

By \autoref{prop:fault-equivalence-compositionality}, we know that if we can break a larger circuit into fault-equivalent subcomponents, the composite circuit is also fault-equivalent. 
Thus, if we can break a ZX diagram into the components above, as well as single-qubit operations, we know that the corresponding quantum circuit is fault-equivalent. 

\subsection{Fault-Equivalent Rewrites}
\label{subsec:fe-rewrites}
Having translated a circuit description to a ZX diagram, the second step of our proposed circuit synthesis framework is to manipulate the ZX diagram to bring it into the shape of a circuit consisting of implementable gates. 
However, as we want to preserve fault equivalence, we can not simply use any ZX rewrites; we have to restrict ourselves to \emph{fault-equivalent rewrites}, originally introduced by \textcite{rodatzFloquetifyingStabiliser2024}:

\begin{definition}[$w$-fault-equivalent rewrite, fault-equivalent rewrite]
  \label{def:fault-tolerant-rewrites}
 A ZX rewrite $r: D_1 \to D_2$ is a \textit{$w$-fault-equivalent rewrite} if it preserves $w$-fault equivalence of $D_1$ and $D_2$ under edge-flip noise.
 It is \textit{fault-equivalent} if it preserves fault equivalence.
\end{definition}

An example of a fault-equivalent rewrite is the following:
\begin{proposition}
  \label{thm:elim-rewrite}
 The following rewrite is fault-equivalent:
  \begin{gather}
 \tag{$\textsc{Elim}_{\text{fe}}$}\label{elim-rewrite}\refstepcounter{equation}
 \tikzfig{04-zx/fault-equiv-rewrites/elim-rewrite}
  \end{gather}
\end{proposition}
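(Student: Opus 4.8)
The plan is to check the two conditions packaged in \autoref{def:fault-tolerant-rewrites} directly: that the rewrite preserves the underlying linear map (this is just the ordinary \TextElim axiom, so the fault-free, $w=0$ content is already established), and that it preserves $w$-fault equivalence under edge-flip noise for every $w$. Since the rewrite is local, I would first fix the edges of the left- and right-hand diagrams $D_1$ and $D_2$ and separate the edges lying \emph{outside} the rewrite region---which appear verbatim on both sides and on which faults correspond to themselves---from the handful of \emph{internal} edges that \TextElim creates or destroys. All of the genuine work is in matching faults on these internal edges.

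Because the edge-flip noise model is generated by single-edge Paulis, and modulo global phase each such Pauli is a product of an $X$-flip and a $Z$-flip on that edge, it suffices to establish the correspondence on generators. For each internal edge $e$ of $D_1$ and each $P \in \{X, Z\}$, I would compute the faulty diagram $D_1^{P_e}$ and either exhibit a fault $F'$ on $D_2$ with $\mathrm{wt}(F') \le 1$ such that $D_1^{P_e}$ rewrites to $D_2^{F'}$, or else show that $P_e$ is detectable in the sense of \autoref{def:detectability}, i.e.\ that it anticommutes with a detecting region. The standard moves \TextCopy, \TextPiCommute, and \TextFusion are exactly the tools for pushing a Pauli off an internal edge onto the surviving edges: a flip that gets absorbed into a phase-free spider becomes trivial, a flip that slides cleanly onto a surviving edge keeps weight one, and a flip that could only be reproduced on the other side by enlarging its weight must instead be shown to fire a detecting region. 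I would then run the symmetric argument, translating each generating fault of $D_2$ back to a weight-$\le 1$ fault of $D_1$.

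Having matched generators, I would lift to arbitrary weight: a general fault is a product of atomic edge flips, the induced correspondence is multiplicative and weight-non-increasing, and it sends $D_1^{F}$ to $D_2^{F'}$ because it does so on each factor. This gives $w$-fault equivalence for all $w$, hence fault equivalence. Equivalently, one can appeal to the compositionality of fault equivalence (\autoref{prop:fault-equivalence-compositionality}) by decomposing the diagram into the rewrite region and its unchanged context, so that only the local equivalence of the two sides needs to be verified by hand.

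The step I expect to be the main obstacle is the internal-edge bookkeeping: confirming that \TextElim neither deletes a fault location carrying a nontrivial, undetectable fault with no equal-weight image on the other side, nor introduces such a location. Concretely, I must verify that every internal edge removed by the rewrite hosts only faults that are trivial, detectable, or equivalent to a weight-$\le 1$ fault on a surviving edge, and I must watch the $X$-copying behaviour through any spider of degree three or more, since that is the mechanism by which a single-edge flip (including the $X$-component of a $Y$-flip) can branch and threaten to double its weight when re-expressed on the surviving edges. Once this accounting is complete and the bound $\mathrm{wt}(F') \le \mathrm{wt}(F)$ is confirmed on every generator in both directions, fault equivalence follows immediately.
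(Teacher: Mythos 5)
Your overall strategy---an explicit, weight-non-increasing correspondence between edge flips on the two sides, lifted to arbitrary faults---is exactly the paper's approach. Its proof is simply the statement of that correspondence: a flip $P$ on the identity wire corresponds to the same flip $P$ on either leg of the spider, and any pair of flips $P_1, P_2$ on the two legs corresponds to the single flip $P_1 P_2$ on the wire. Neither detecting regions nor a boundary push-out argument is needed here, because the \TextElim spider is phase-free and of degree two, so every Pauli passes through it unchanged; your detectability escape hatch and your worry about $X$-copying through degree-$\geq 3$ spiders are vacuous for this rewrite.

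There is, however, one step of your plan that is not valid as stated: the reduction to generators $P \in \{X, Z\}$ followed by multiplicative lifting. Under the edge-flip noise model (\autoref{def:edge-flip-noise}) the atomic faults are \emph{all} single-edge Paulis, so $Y_e$ has weight one, and the weight of a fault is the number of edges in its support, not its word length in the $\{X, Z\}$ generating set. Consequently, knowing that the images of $X_e$ and of $Z_e$ each have weight at most one only bounds the image of $Y_e$ by weight two: if a correspondence pushed the $X$-flip and the $Z$-flip of the same edge onto \emph{different} surviving edges, the atomic weight-one fault $Y_e$ would be sent to a weight-two fault, and fault equivalence would fail even though every $\{X,Z\}$-generator check passed. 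To close this, you must either treat $Y$ as a separate generator case (as the paper does explicitly in the proof of \autoref{thm:fuse-rewrite}, and again for \autoref{thm:four-legged-spider}) or record that your correspondence sends the $X$- and $Z$-flips of each edge to the same edge, so that their product remains a single-edge flip. For \TextElim this is immediate---all three Paulis on the wire correspond to the same Paulis on one fixed leg---so your proof goes through once this observation is added; but as a general recipe for certifying fault-equivalent rewrites, the $\{X,Z\}$-generator reduction is unsound without it.
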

\begin{proof}
 Any edge flip on the left diagram corresponds directly to an edge flip of the same type on either of the legs of the right diagram.
 Similarly, any combination of edge flips on the right diagram of types $P_1, P_2$ corresponds to an edge flip of type $P_1 P_2$ on the identity wire.
\end{proof}

A slightly more interesting fault-equivalent rewrite for spiders with at least one output leg is the following:
\begin{proposition}
  \label{thm:fuse-rewrite}
 The following rewrite is fault-equivalent:
  \begin{gather}
 \tag{$\textsc{Fuse-1}_{\text{fe}}$}\label{fuse-rewrite}\refstepcounter{equation}
 \tikzfig{04-zx/fault-equiv-rewrites/fuse-rewrite}
  \end{gather}
\end{proposition}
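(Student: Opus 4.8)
The plan is to prove fault equivalence directly from \autoref{def:fault-tolerant-rewrites} by exhibiting, in both directions, a weight-preserving correspondence between the atomic edge-flip faults of the two diagrams that produces identical ZX diagrams. Since the edge-flip noise model has only weight-one atomic faults and every fault is a product of these, it suffices to match single edge flips: if each atomic fault on one side is reproduced by an atomic fault of weight at most one on the other side (or is trivial), then $w$-fault equivalence follows for every $w$ by multiplicativity of weight, exactly as in the argument used for \autoref{thm:elim-rewrite}. Because fault equivalence is symmetric, I would run the correspondence for the fused (right-hand) diagram and its unfused (left-hand) counterpart separately, but the two analyses share the same ingredients.

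The two workhorses are precisely the Pauli-propagation moves illustrated in \autoref{sec:zx-rewrites}: a $Z$ flip on a leg of a Z-spider, being a $\pi$-phase spider on that edge, is absorbed by \TextFusion (so all $Z$ flips on the legs of a single spider lie in one equivalence class), while an $X$ flip on a leg is copied through the spider onto its remaining legs via \TextCopy, negating the spider phase. The \emph{easy part} is the set of edges that survive the fusion, i.e.\@ the external legs shared by both diagrams. For these I would check that a $Z$ flip absorbs to the same total phase on both sides and that an $X$ flip copies to the same set of legs on both sides; the $Y$ case then follows by decomposing $Y$ into its $X$ and $Z$ components, which is unambiguous up to global phase. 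Each such fault has weight one on both sides, so weight is trivially preserved.

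The crux, and the step I expect to be the main obstacle, is the single edge contracted by the fusion: it is a fault location on the unfused side but disappears on the fused side. Its $Z$ flip is unproblematic, as absorbing the induced $\pi$ phase shows it equals a weight-one $Z$ flip on an external leg of the fused spider. The delicate case is the internal $X$ flip. Propagating it through \TextCopy turns one $X$ into $X$'s on the spider's other legs, and for an \emph{arbitrary} fusion this would land on many legs, yielding a fault whose minimal weight on the fused diagram exceeds one — which is exactly why unrestricted spider fusion is \emph{not} fault-equivalent. I would therefore show that, in the restricted local leg-structure depicted in the rule, the internal $X$ flip copies onto a single external edge and leaves the phase bookkeeping trivial, so that it corresponds to a weight-one $X$ flip on the fused side and conversely. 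Verifying this weight-one-to-weight-one matching for the contracted edge, together with confirming that no further nontrivial faults are introduced, is the heart of the argument; the remaining bookkeeping is routine.
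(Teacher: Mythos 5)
Your proposal is correct and follows essentially the same route as the paper: the paper's proof likewise reduces everything to the three Pauli flips on the single fused (internal) edge --- invoking the boundary push-out lemma (\autoref{lem:fault-equiv-boundaries}) rather than your atomic-fault matching, which amounts to the same thing --- and then checks diagrammatically that each such flip is either trivial or pushes to a boundary edge at weight at most one. Your identification of the internal $X$ flip as the crux (with $Z$ handled by absorption into the spider phase and $Y$ by decomposition into $X$ and $Z$ components) is exactly the case analysis carried out in the paper's figure.
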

\begin{proof}
 We observe that a sufficient condition for fault equivalence is that faults on internal edges can be pushed to boundary edges without increasing their weight. 
 As the boundary edges have natural correspondences between the two diagrams, if that condition is satisfied, we have found an equivalent fault on the other diagram with at most the same weight.
 This idea is more formally stated and proven in \autoref{lem:fault-equiv-boundaries}.

 As only one of the diagrams has an internal edge, there are three faults that could be problematic: an $X, Y$ or $Z$ flip of that internal edge. 
 However, we observe: 
  \[\tikzfig{04-zx/fault-equiv-rewrites/fuse-rewrite-proof}\]
 As these faults are either trivial or can be pushed out to the boundary edges without increasing in weight, they have a natural correspondence on the other side. 
 Thus, we have shown that all faults on either side have an equivalent fault of at most the same weight on the other side, and, thus, the rewrite is fault-equivalent.
\end{proof}

The fault equivalence of the above rewrites is due to the correspondence in faults, meaning that any fault on the left has a corresponding fault of at most the same weight on the right and vice versa. 
However, for faults to satisfy the fault equivalence relationship, alternatively, they are acceptable if they are detected. 
Thus, an alternative way to achieve fault equivalence is by introducing detecting regions to detect problematic faults.

\begin{proposition}
  \label{thm:four-legged-spider}
 The following rewrite is fault-equivalent:
  \begin{gather}
 \tag{$\textsc{Fuse-4}_{\text{fe}}$}\label{four-legged-rewrite}\refstepcounter{equation}
 \tikzfig{04-zx/fault-equiv-rewrites/four-legged}
  \end{gather}
\end{proposition}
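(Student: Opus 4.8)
The plan is to reuse the template of the proof of \autoref{thm:fuse-rewrite}, but to lean on the second mechanism flagged in the paragraph preceding the statement: a fault that is \emph{detectable} needs no corresponding fault on the other diagram at all. The starting point is the strengthened sufficient condition behind \autoref{lem:fault-equiv-boundaries}: a fault on an internal edge must be matched by a weight-non-increasing fault on the other side \emph{only when it is undetectable}, while faults that can be pushed to boundary edges (without increasing weight) or that are detectable require no further work. Since the two diagrams share an obvious bijection on their boundary legs, it suffices to analyse the flips supported on the internal edge(s) introduced by the four-legged fusion and to classify each as (i) trivial, (ii) pushable to the boundary at equal or lower weight, or (iii) detectable.

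First I would treat the ``easy'' internal flips exactly as in \autoref{thm:fuse-rewrite}. A flip whose colour matches the adjacent spiders commutes with them: by \TextFusion it is absorbed into a spider phase (and is thus trivial) or, by \TextCopy, it propagates cleanly onto a single boundary leg, where it is matched by one boundary flip on the fused side. These cases give a direct weight-preserving correspondence and amount to a short diagrammatic calculation of the same kind displayed in the proof of \autoref{thm:fuse-rewrite}.

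The genuinely new ingredient --- and the step I expect to be the crux --- is the remaining family of internal flips that, when propagated through the fusion, would \emph{copy onto several legs} and hence increase in weight; for these no weight-admissible correspondent exists, so the boundary-pushing argument breaks down. For each such flip I would instead exhibit an explicit detecting region, namely a Pauli web in the sense of \autoref{def:pauli-web} that highlights no input or output edge, with which the flip anticommutes; by \autoref{def:detectability} the flip is then detectable (and, via \autoref{thm:detecting-region}, sends the faulted diagram to zero on the offending parities), so fault equivalence imposes no obligation on it. The obstacle is to construct this web and verify it on two fronts simultaneously: that it satisfies the even/opposite-colour leg conditions of \autoref{def:pauli-web} at \emph{every} spider it meets and closes up strictly inside the diagram (so it never reaches a boundary and is a legitimate detecting region), and that it anticommutes with \emph{exactly} the weight-increasing flips, so that the pushable flips of the previous paragraph are not spuriously claimed to be detected.

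Finally I would close by symmetry and weight bookkeeping. In the reverse direction, every edge flip on the side without the extra internal edge maps to a single edge flip on the four-legged side through the same boundary bijection, so no weight is gained either way. Because each problematic internal flip is shown detectable independently of weight, the construction yields $w$-fault equivalence for every $w$, i.e.\@ full fault equivalence in the sense of \autoref{def:fault-tolerant-rewrites}. Should one wish to avoid re-deriving the commuting cases, an alternative is to factor the four-legged fusion through \autoref{thm:elim-rewrite} and \autoref{thm:fuse-rewrite} and invoke compositionality (\autoref{prop:fault-equivalence-compositionality}) together with transitivity, isolating the detecting-region verification as the sole genuinely new lemma.
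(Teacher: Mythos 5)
Your overall strategy is the right one and matches the paper's: split internal faults into those that can be pushed to the boundary at non-increasing weight and those that are detectable, handle the problematic $X$ flips via the detecting region formed by the Z-spiders, and conclude with \autoref{lem:fault-equiv-boundaries}. However, there is a genuine gap in how you organise the case analysis: you classify \emph{individual} edge flips as trivial, pushable, or detectable, and detectability is not closed under products. In the unfused diagram, a single $X$ flip on one internal edge anticommutes with the green detecting region and is indeed detectable (this is exactly the paper's first step), but the product of \emph{two} such flips on different internal edges commutes with that region and is therefore undetectable --- yet it is not covered by your scheme, since each of its factors lands in your class (iii) and you draw no conclusion about the composite. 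The push-out property in \autoref{lem:fault-equiv-boundaries} quantifies over all undetectable internal faults, so these even-weight $X$ faults must be shown to push to the boundary without weight increase; the paper devotes an explicit, separate step to precisely this (two adjacent internal $X$ flips merge through the spider between them into a single boundary flip). Without that step your argument does not establish the hypothesis of the lemma, so the proof as proposed is incomplete.

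Two smaller points. Your worry that the detecting region must anticommute with \emph{exactly} the weight-increasing flips, and not ``spuriously'' detect the pushable ones, is unnecessary: fault equivalence is a disjunction (detectable \emph{or} has a weight-non-increasing correspondent), so a fault being both detectable and pushable causes no harm --- the real danger runs the other way, namely undetectable products of detectable flips, which is exactly the gap above. Finally, the proposed shortcut of factoring the four-legged fusion through \autoref{thm:elim-rewrite} and \autoref{thm:fuse-rewrite} plus compositionality is unlikely to go through: the unfused side contains a cycle of spiders (this cycle is what creates the detecting region in the first place), the intermediate fusions one would need are not instances of those two small rewrites, and a single internal $X$ flip in the cycle provably has no boundary-only equivalent (it anticommutes with a region that touches no boundary edge), so each intermediate step would need its own detecting-region argument anyway --- which is the content of the paper's proof, not a way around it.
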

\begin{proof}
 The four Z-spiders form a detecting region, and therefore, any odd-weight $X$ fault in the internal edges is detectable.  
  \[
 \tikzfig{04-zx/fault-equiv-rewrites/four-legged-detecting-region}
  \]
 Any $X$ fault with an even weight can be pushed out without increasing the weight of the fault:
  \[
 \tikzfig{04-zx/fault-equiv-rewrites/four-legged-X-flips}
  \]
 As the diagram only consists of Z-spiders, all $Z$ faults can be pushed out without increasing the weight of the fault.
 Similarly, $Y$ faults can be decomposed into their $X$ and $Z$ components and pushed out correspondingly.
 But then, by \autoref{lem:fault-equiv-boundaries}, the two diagrams are fault-equivalent.
\end{proof}

Next, we mention a family of fault-equivalent rewrites whose preservation of fault equivalence is slightly more complicated to prove:
\begin{restatable}{proposition}{inductiveDistRewrite}
  \label{thm:inductive-plus-dist-rewrite}
 The following rewrite for $2n$-legged spiders is fault-equivalent:
  \begin{gather}
 \tag{$\textsc{Fuse-n}_{\text{fe}}$}\label{n-legged-rewrite}\refstepcounter{equation}
 \tikzfig{04-zx/fault-equiv-rewrites/recursive-spider}
  \end{gather}
\end{restatable}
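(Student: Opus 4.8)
The plan is to prove this by induction on $n$, bootstrapping from the base rewrites $\textsc{Fuse-1}_{\text{fe}}$ and $\textsc{Fuse-4}_{\text{fe}}$ and combining them with the compositionality of fault equivalence (\autoref{prop:fault-equivalence-compositionality}), the transitivity of fault-equivalent rewrites, and the boundary-pushing criterion of \autoref{lem:fault-equiv-boundaries}. For the base cases (small $n$, namely the two- and four-legged spiders), the statement reduces directly to \eqref{fuse-rewrite} and \eqref{four-legged-rewrite}, which are already established as fault-equivalent. So the work lies entirely in the inductive step.

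For the inductive step I would read the recursive right-hand side as peeling off one \enquote{layer}: the $2n$-legged decomposition is obtained from a strictly smaller (say $2(n-1)$-legged) decomposition by attaching a bounded gadget consisting of a few Z-spiders joined by a single internal edge, precisely of the shape appearing in \eqref{four-legged-rewrite}. Concretely, I would (i) apply $\textsc{Fuse-4}_{\text{fe}}$ (or a short chain of $\textsc{Fuse-1}_{\text{fe}}$ steps) to split the outermost layer off the $2n$-legged spider, leaving a spider with fewer legs together with the new gadget; (ii) invoke the inductive hypothesis on the smaller spider; and (iii) stitch the two fault-equivalent sub-rewrites back together using \autoref{prop:fault-equivalence-compositionality} and transitivity. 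Since each constituent rewrite is fault-equivalent for all weights $w$, the $\min(w_1,w_2)$ arising from compositionality never degrades the conclusion, so the composite remains fault-equivalent in the strong, weight-unbounded sense.

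The hard part will be the fault accounting at the seam between layers, which is exactly what compositionality alone does not see. Compositionality handles faults confined to a single layer, but I must separately control faults that straddle the freshly-introduced internal edge and the shared boundary edge between the two sub-rewrites: I need to check that any odd-weight $X$ fault on the new internal edges is caught by the detecting region of the new layer (as in \eqref{four-legged-rewrite}), that even-weight $X$ faults can be pushed onto the already-matched boundary edges without increasing their weight, and crucially that the detecting regions of distinct layers do not interfere in a way that lets a genuinely multi-qubit boundary fault escape detection at lower weight than on the left-hand side. Once this seam analysis is in place, the remaining faults are routine: because the diagram consists only of Z-spiders, all $Z$ faults push freely to the boundary, and $Y$ faults decompose into their $X$ and $Z$ components and are handled by the two preceding cases, so \autoref{lem:fault-equiv-boundaries} closes the argument.
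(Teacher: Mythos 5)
There is a genuine gap: the inductive step you propose cannot be set up, because neither side of the rewrite admits the decomposition your induction needs. To invoke \autoref{prop:fault-equivalence-compositionality} you must cut both diagrams into standalone pieces that are themselves fault-equivalent. On the left, ``peeling off a layer'' means unfusing the $2n$-legged spider into a $(2n-2)$-legged spider joined by a fresh wire to a small spider carrying two boundary legs --- and that unfusing is \emph{not} fault-equivalent: an $X$ flip on the fresh internal edge is undetectable (a tree of Z-spiders with boundary legs supports no detecting region), has weight one, yet its cheapest equivalent on the fused side is $X\otimes X$ on the two peeled legs, weight two. The rewrites \eqref{fuse-rewrite} and \eqref{four-legged-rewrite} evade exactly this failure by special structure: in \eqref{fuse-rewrite} one factor has a \emph{single} boundary leg, so the flip pushes out at weight one, and in \eqref{four-legged-rewrite} the four spiders close a cycle, creating a detecting region that catches the odd faults. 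Neither mechanism is available for your peeling step. On the right, the $2n$-legged decomposition is not the $2(n-1)$-legged one composed with a local gadget --- the two central spiders have $n$ legs rather than $n-1$ --- and, worse, every detecting region of the right-hand side runs through \emph{two} of the $n$ pair-paths and through both central spiders. Your worry that regions of distinct layers might ``interfere'' has it backwards: the regions are intrinsically two-layer objects, so any cut isolating one layer destroys precisely the regions that make the rewrite sound, leaving pieces that are not fault-equivalent to anything your induction hypothesis covers.

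Relatedly, what you defer as ``seam analysis'' is not a boundary effect between consecutive layers; it is the actual content of the proof, and it is irreducibly global. The paper argues directly via \autoref{lem:fault-equiv-boundaries}: only $X$ flips matter, the $n-1$ independent pairwise detecting regions define a parity-check matrix, an explicit basis of its $(2n+1)$-dimensional null space of undetectable faults is exhibited, and the weight accounting is done for arbitrary \emph{sums} of basis elements. The delicate case is a fault containing the trivial weight-$n$ fault obtained by firing a central spider: pushing its companions to the boundary one at a time can yield a boundary fault of weight up to $2n$, strictly larger than the original, and the repair is to multiply by the all-$X$ stabiliser of the $2n$-legged spider, flipping all $2n$ outputs and bringing the weight back below $n$. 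That fault touches every layer simultaneously, so no layer-by-layer induction with local checks can see it. A correct proof either follows this global route or formulates a substantially strengthened induction hypothesis; as written, your skeleton fails at its very first rewrite.
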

\begin{proof}
 See \autoref{subsec:fault-equivalent-rewrite-proofs}.
\end{proof}

So far, all rewrites we presented preserve fault equivalence, i.e.\@ $w$-fault equivalence for a $w \in \mathbb N$.
However, in some contexts, it is sufficient and potentially cheaper to preserve $w$-fault equivalence. 
Therefore, we present one more rewrite:
\begin{restatable}{proposition}{inductiveDistRewriteW}
  \label{thm:inductive-plus-dist-rewrite-w}
 The following rewrite for $2n$-legged spiders is $w$-fault-equivalent:
  \begin{gather}
 \tag{$\textsc{Fuse-n}_{w\text{-fe}}$}\label{n-legged-rewrite-w}\refstepcounter{equation}
 \tikzfig{04-zx/fault-equiv-rewrites/recursive-spider-w}
  \end{gather}
\end{restatable}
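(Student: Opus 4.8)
The plan is to invoke the boundary-pushing criterion of \autoref{lem:fault-equiv-boundaries}, but to demand it only for faults of weight strictly less than $w$. Since the two diagrams share the same boundary edges, it suffices to show that every fault supported on the internal edges of the right-hand diagram, of weight below $w$, is either detectable in the sense of \autoref{def:detectability} or can be pushed out to the boundary without increasing its weight; the analogous statement on the left-hand side is governed by the internal structure already treated in the fully fault-equivalent rewrite \autoref{thm:inductive-plus-dist-rewrite}. As the right-hand diagram is built entirely from Z-spiders, I would first clear away the easy cases: a $Z$-type fault commutes through the green spiders and pushes straight to a boundary leg with no weight change, and a $Y$-type fault splits into an $X$ and a $Z$ component, so the whole problem collapses to controlling $X$-type faults on the internal edges.

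For the $X$-type faults I would argue inductively on $n$, using the recursive shape of the diagram. At the top level the construction carries a cluster of Z-spiders forming a detecting region, exactly as in \autoref{thm:four-legged-spider}, so any $X$-fault pattern of odd parity across that region is detectable and can be discarded. An even-parity $X$-fault pattern instead distributes over the two recursively defined sub-blocks, each a smaller instance of the same rewrite, and here I would apply the induction hypothesis together with the compositionality of fault equivalence (\autoref{prop:fault-equivalence-compositionality}) to conclude that each sub-contribution either fires a detecting region or pushes out cleanly. The weight bound $w$ enters --- and this is where the argument departs from that of \autoref{thm:inductive-plus-dist-rewrite} --- in that the recursion may be truncated, or the detecting regions thinned, once a remaining block is large enough that any internal $X$-fault evading its detecting regions would necessarily carry weight at least $w$. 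Such faults fall outside the range we must account for, which is precisely the economy that makes the $w$-version cheaper than the full one.

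The step I expect to be the main obstacle is the $X$-fault bookkeeping across the recursive boundaries: one must verify that an even-parity $X$-fault of weight below $w$ that slips past every detecting region genuinely pushes out to the boundary \emph{without} its weight growing, bearing in mind that an $X$ on an internal edge copies through each Z-spider onto all of that spider's remaining legs and can therefore fan out. The crux is a combined parity-and-weight count establishing that, in the truncated structure, the detecting regions catch every fan-out whose boundary image would have strictly larger weight, up to the cutoff $w$. With this counting argument in hand, the remaining assembly --- gluing the sub-blocks via \autoref{prop:fault-equivalence-compositionality} and running the symmetric check on the left-hand diagram --- is routine.
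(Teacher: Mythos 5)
There is a genuine gap. Your setup is sound --- invoking \autoref{lem:fault-equiv-boundaries} restricted to faults of weight less than $w$, and disposing of $Z$- and $Y$-type flips so that only $X$-type faults on internal edges remain --- but the core argument is left open (you yourself flag it as ``the crux''), and the scaffolding you propose to deliver it would not work. The right-hand diagram is not recursively assembled from smaller instances of the same rewrite: it is a flat structure, two hub spiders $s_0, s'_0$ joined by $w$ three-edge paths $(s_0,s_i),(s_i,s'_i),(s'_i,s'_0)$, and its detecting regions are precisely the collections of \emph{evenly many} such paths. Undetectability is therefore a global condition pairing paths arbitrarily far apart, so any decomposition into sub-blocks severs cross-block regions: a fault with one $X$ flip in each of two paths lying in different sub-blocks is globally detectable, yet looks undetectable (and cannot be pushed out without weight growth) inside each block. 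Moreover, \autoref{prop:fault-equivalence-compositionality} applies to disjoint circuits composed sequentially or in parallel, whereas your sub-blocks would have to share the hub spiders, so the gluing step you call routine is not available. Finally, the ``truncation/thinning'' remark gestures at the right phenomenon but does not supply a mechanism; the dangerous faults --- e.g.\@ one $X$ flip on every edge $(s_0,s_i)$, which commutes with every detecting region but can only be pushed to the boundary by firing $s_0$, producing weight $n > w$ --- are ruled out by an exact count, not by thinning regions.

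The paper's proof supplies exactly that count, by pigeonhole rather than induction. A fault of weight less than $w$ cannot touch all $w$ paths, so at least one path is entirely clean; pairing the clean path with any other path yields a detecting region, so undetectability forces \emph{every} path to have even overlap with the fault; and a path with even overlap (two flips among its three edges, or none) pushes to its two boundary legs by firing $s_i$ and/or $s'_i$ with non-increasing weight, after which \autoref{lem:fault-equiv-boundaries} concludes. This short global argument is the entire content of the proposition; without it, or an equivalent counting argument, your outline restates what must be shown rather than proving it.
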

\begin{proof}
 See \autoref{subsec:fault-equivalent-rewrite-proofs}.
\end{proof}

\section{Examples: Fault-Tolerant Circuit Synthesis and Optimisation}
\label{sec:examples}

This section applies our framework to demonstrate its utility in fault-tolerant circuit design.
In particular, we follow the methodology outlined in the previous section to find fault-tolerant implementations for cat state preparation and syndrome extraction circuits; (1) we first translate an idealized quantum circuit into a fault-equivalent ZX diagram, (2) then we rewrite the diagram using fault-equivalent rewrites so that only elements remain that are fault-equivalent to implementable components of a quantum circuit, (3) enabling circuit extraction.
As each step involved in this method preserves fault equivalence, we obtain a fault-equivalent implementation of the given specification and thus, depending on the starting specification, circuits that fault-tolerantly implement the desired gadgets.

This section first discusses the fault-tolerant cat state preparation.
We then proceed to study Shor~\parencite{shorFaulttolerantQuantum1996} and Steane-style~\parencite{steaneActiveStabilization1997} syndrome measurements and extractions.
Our framework can not only verify these methods but also optimise them and provide alternatives.

\subsection{Fault-Tolerant Cat State Preparation}

Fault-tolerant cat state preparation (up to some weight $w$) is the task of finding a circuit that prepares the cat state such that undetectable faults (of weight less than $w$) do not propagate out to create more faults on the data qubits. As previously defined, we have:

\ftCatStatePrep*

Before providing a family of circuits that prepare arbitrary-legged cat states, we first look at the cat state on four qubits:
\begin{proposition}
  \label{prop:four-legged-cat}
 We can fault-tolerantly prepare a cat state on four qubits as follows:
  \[\tikzfig{07-examples/cat-state/four-legged-statement}\]
\end{proposition}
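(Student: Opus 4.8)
The plan is to follow the three-step synthesis methodology of \autoref{sec:examples} rather than attempt to verify the required fault equivalence directly, which by the hardness result would be intractable. By \autoref{def:ft-cat-state-prep}, the goal is to exhibit the circuit on the right as being $w$-fault-equivalent (under circuit-level noise) to the idealised, fault-free cat-state specification on the left, whose only admissible faults are Pauli flips on the four data qubits. I would establish this by constructing a chain of fault-equivalences linking the two and appealing to transitivity.

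First I would translate the idealised specification into ZX. By \autoref{fig:mappings} the four-qubit cat state is a single phase-free Z-spider with four output legs, and since the idealised noise model permits only flips on the four output wires, this single-spider diagram is already a fault-equivalent representation of the specification (the edge-flip faults on its four boundary edges correspond exactly to the allowed data errors, cf.\ \autoref{prop:fault-free}). The heart of the argument is then to apply the \emph{fault-equivalent} rewrite of \autoref{thm:four-legged-spider} to split this four-legged spider, introducing internal edges across which the four Z-spiders form a detecting region. This is what guarantees fault tolerance: by the detecting-region argument of that proposition, any odd-weight $X$ fault on the internal edges is detectable, while even-weight $X$ faults and all $Z$ faults push out to the boundary without increasing weight, so no single internal fault can propagate to a weight-two data error undetected.

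Next I would extract an implementable circuit from the rewritten diagram. Here the key is to recognise the sub-patterns of the rewritten diagram as the \emph{composite} fault-equivalent representations established earlier, so that circuit-level noise on the extracted gates is fully captured by edge-flip noise and no residual fault gadgets are needed. Concretely, I expect the CNOTs feeding the cat state to match the $\ket{+}$-prepared-control pattern of \autoref{prop:CNOT-plus-state}, and the measurement realising the detecting region (the flag) to match the CNOT-followed-by-$Z$-measurement pattern of \autoref{prop:CNOT-plus-measurement}. Because each of these composite pieces is fault-equivalent under circuit-level noise, compositionality of fault equivalence (\autoref{prop:fault-equivalence-compositionality}) lets me assemble them into a fault-equivalent representation of the whole extracted circuit, and transitivity then chains specification $\FaultEq$ single-spider $\FaultEq$ rewritten diagram $\FaultEq$ extracted circuit.

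The main obstacle is precisely this final matching step. Everything hinges on the rewritten diagram decomposing \emph{entirely} into the ``nice'' composite patterns of \autoref{prop:CNOT-plus-state} and \autoref{prop:CNOT-plus-measurement}: any CNOT left uncovered by such a pattern would, under bare circuit-level noise, require the explicit four fault gadgets of \autoref{prop:CNOT-fault-accounting}, and an unaccounted entangling fault there could reintroduce exactly the weight-two data error that makes the naive preparation of \autoref{sec:zx-rules} fail. So the crucial verification is that the $\ket{+}$ states and the flag measurement are positioned so that every circuit-level fault location on every CNOT is absorbed, and that the detecting region of the rewrite coincides with the flag's detecting set. Once this positioning is confirmed, the fault equivalence to the specification — and hence fault-tolerant preparation in the sense of \autoref{def:ft-cat-state-prep} — follows immediately from the preservation properties already proven.
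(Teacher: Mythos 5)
Your proposal is correct and takes essentially the same route as the paper's own proof: translate the idealised specification to the single four-legged Z-spider (fault-equivalent by \autoref{prop:fault-free}), split it with the fault-equivalent rewrite of \autoref{thm:four-legged-spider}, and extract the circuit by matching the resulting diagram to the composite patterns of \autoref{prop:CNOT-plus-state} and \autoref{prop:CNOT-plus-measurement}, chained together via compositionality (\autoref{prop:fault-equivalence-compositionality}) and transitivity. The only detail the paper makes explicit that you leave implicit is that the measurement outcome introduced by the rewrite is post-selected to zero, which is harmless because that measurement closes a detecting region, so its outcome is predetermined in the fault-free case and the diagram remains total.
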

\begin{proof}
 We have:
  \[
 \tikzfig{07-examples/cat-state/four-legged-proof}
  \]
 The first step of our synthesis procedure is to translate the idealised description of the desired map into a fault-equivalent ZX diagram.
 By \autoref{prop:fault-free}, the map from \autoref{fig:mappings} is already fault-equivalent, as any edge flip on the ZX diagram exactly corresponds to a flip on the data qubits.
 Then, once considering ZX diagrams under edge-flip noise, in the second step, we can use fault-equivalent rewrites to manipulate the diagram.
 Using a sequence of rewrites from the previous section, we get a new ZX diagram that looks like a quantum circuit.
 To extract the final circuit, we can observe that the circuit is built out of the component from \autoref{prop:CNOT-plus-measurement} and \autoref{prop:CNOT-plus-state}.
 We mark the respective fault-equivalent components in the ZX diagram and the quantum circuit using dotted boxes.
 Therefore, by the compositionality of fault equivalence (\autoref{prop:fault-equivalence-compositionality}), we know that the ZX diagram is fault-equivalent to the final circuit.
\end{proof}

Thus, we have found a (well-known) implementation of the cat state on four qubits.
We can read the circuit as instantiating a flag qubit which gets entangled with the first qubit before preparing the cat state using the non-fault-tolerant method, previously described in \autoref{sec:zx-rewrites}.
By then entangling the flag with the last qubit before measuring it out, the flag catches the faults that would otherwise propagate badly.

The presented fault-tolerant circuit synthesis protocol is fundamentally different from other protocols in the literature.
While other flag-based approaches~\parencite{chamberlandFlagFaulttolerant2018, pehamAutomatedSynthesis2025} iterate all possible faults up to some weight, identify detrimental ones, and construct corresponding flags to catch them, we have synthesised a circuit that is correct by construction.
The necessary flag to make the circuit fault-tolerant is a consequence of our construction. 
We are guaranteed that any fault on the final circuit must either be detectable or correspond to a fault of at most the same weight on the idealised cat state.
As we only used fault equivalences, i.e.\@ no $w$-fault equivalences, we are guaranteed that this must hold for any fault, no matter the weight.

Using this fault-tolerant circuit synthesis method, we can now construct an arbitrary legged cat state at arbitrary distances:
\begin{proposition}
  \label{prop:arbitrary-legged-cat}
 We can construct a fault-tolerant $2n$-legged cat state from two fault-tolerant $n$-legged cat states by transversally, fault-tolerantly measuring $n$ $ZZ$ parity checks between the two $n$-legged cat states:
  \[
 \tikzfig{07-examples/cat-state/arbitrary-legged-statement}
  \]
\end{proposition}
\begin{proof}
 We have:
  \[\resizebox{\textwidth}{!}{\tikzfig{07-examples/cat-state/arbitrary-legged-proof}}\]
\end{proof}

While the proposed procedure only works for cat states on $n = 2^k$ qubits, using~\eqref{fuse-rewrite}, it can be adapted to arbitrary-legged cat states.

We have presented a family of circuits that fault-tolerantly prepare arbitrary legged cat states at arbitrary distances, meaning that any fault, no matter the weight, is either detectable or creates at most as many data errors.
However, while the circuits above prepare the cat state fault-equivalently, in many contexts we only care about fault equivalence up to some weight, i.e.\@ $w$-fault equivalence.
This enables a cat state preparation with significantly fewer gates, as for example shown by  \textcite{prabhuFaulttolerantSyndrome2023} at distance three.
Using \autoref{thm:inductive-plus-dist-rewrite-w}, we have:

\begin{proposition}
  \label{prop:arbitrary-legged-cat-w}
 We can construct a $w$-fault-tolerant $2n$-legged cat state from two fault-tolerant $n$-legged cat states by transversally, fault-tolerantly measuring $w$ $ZZ$ parity checks between the first $w$ legs of the two $n$-legged cat states:  \[
 \tikzfig{07-examples/cat-state/arbitary-legged-w-statement}
  \]
\end{proposition}
\begin{proof}
 We have:
  \[\tikzfig{07-examples/cat-state/arbitary-legged-w-proof}\]
 for $k_1 = k_2 = \dots = k_{w} = k$.
\end{proof}

This new protocol only requires $2w$ CNOTs and $w$ measurements instead of $2n$ CNOTs and $n$ measurements, potentially leading to significantly smaller circuits.

Thus, we have used fault equivalence and fault-equivalent rewrites in the ZX calculus to recover existing (\autoref{prop:four-legged-cat}) and find new (\autoref{prop:arbitrary-legged-cat}, \autoref{prop:arbitrary-legged-cat-w}) circuits for cat state preparation whose fault equivalence is satisfied by construction.

\subsection{Fault-Tolerant Syndrome Extraction}
For a given QEC code with distance $d$, fault-tolerant syndrome extraction is about perfectly measuring all the stabilisers of that code without any faults occurring in between:
\ftSyndromeExtraction*

In this section, we will verify the correctness of two protocols that solve this task; Shor-style syndrome extraction~\parencite{shorFaulttolerantQuantum1996} and Steane-style syndrome extraction~\parencite{steaneActiveStabilization1997},
and show how these protocols can be optimised and altered.

\subsubsection{Shor-style Syndrome Extraction}\label{sec:shor-example}
\textcite{shorFaulttolerantQuantum1996} separates the problem of syndrome extraction into two steps, first, he gets rid of the idealised edges and measurements by repeating the measurement schedule, then he implements the individual syndrome measurements fault-tolerantly.

The first step of Shor-style syndrome extraction builds on the following observation:
\begin{proposition}
  \label{prop:repeating-measurement}
 Let $S = [S_1, \dots, S_m]$ be the stabilisers of some stabiliser code with distance $d$.
 Then, repeating the measurement schedule with imperfect, yet fault-tolerant stabiliser measurement $d$ many times, $d$-fault-equivalently implements the syndrome extraction:
  \[
 \tikzfig{07-examples/shor/claim-repeat-stabilisers}
  \]
\end{proposition}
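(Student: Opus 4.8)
The plan is to verify the two directions of the $d$-fault equivalence between the repeated-measurement circuit (call it $C$, the left-hand side) and the idealised syndrome-extraction specification of \autoref{def:ft-syndrome-extraction} (the right-hand side), reducing the problem to a combinatorial analysis of detecting regions. Before comparing faults I would first settle the fault-free behaviour: since measuring a fixed commuting set of stabilisers is an idempotent projection, the $d$ rounds implement the same linear map as a single perfect measurement, and the repeated outcome bits $s_{i,j}$ (for stabiliser $S_i$ in round $j$) are pinned together by constraints so that the diagram stays total. This fixes the correspondence between the boundary edges of the two sides and shows they agree when no fault occurs.

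Next I would simplify $C$ using the machinery already in place. By the assumption that each stabiliser measurement is fault-tolerant (\autoref{def:ft-syndrome-measurement}) together with the compositionality of fault equivalence (\autoref{prop:fault-equivalence-compositionality}), $C$ is $d$-fault-equivalent to the circuit in which every measurement is replaced by its idealised representation (\autoref{prop:fault-equivalent-ft-pauli}): a perfect projective measurement whose only admissible faults are measurement flips and single-qubit data flips on its support. On this idealised repeated circuit, the atomic faults split cleanly into (i) Pauli data errors living on the data wires between rounds and on the input/output boundaries, and (ii) measurement flips of individual round outcomes $s_{i,j}$, with total weight bounded by $wt(F)$.

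The heart of the argument is the detecting-region analysis. For each stabiliser $S_i$ and each pair of consecutive rounds $j,j+1$, comparing the two outcomes of $S_i$ yields a detecting region, so undetectability of $F$ means every one of these comparisons is unfired. I would then prove the key lemma: any undetectable fault of weight strictly less than $d$ is equivalent to a Pauli data error supported only on the input and output boundaries, of weight at most $wt(F)$. Intuitively, a measurement flip or a bulk data error anticommuting with some $S_i$ fires the corresponding consecutive-round detector, so to remain undetectable any such discrepancy must be cancelled; the only way to cancel a discrepancy in $S_i$ appearing at round $j$ without creating a new one is to propagate a matching flip forward through every subsequent round, which amounts to pushing a genuine bulk data error all the way to the output boundary. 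A purely syndrome-level discrepancy carrying no data error can only be hidden by flipping all $d$ copies of a stabiliser outcome, costing weight $d$. Hence at weight $<d$ every undetectable fault is either a boundary data error outright or a bulk data error whose cover-up propagates it to the output boundary; in both cases---because each individual measurement is fault-tolerant, so data errors crossing rounds are not amplified---the resulting boundary error has weight at most $wt(F)$ and is exactly a fault admissible on the right-hand side of \autoref{def:ft-syndrome-extraction}, yielding $C^{F} = \mathrm{spec}^{F_2}$. The converse direction is immediate: a boundary data error on the specification is realised verbatim as the identical boundary data error on $C$, of equal weight.

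The step I expect to be the main obstacle is this key lemma, i.e.\@ controlling precisely how measurement flips and bulk data errors interact across the $d$ rounds. The delicate points are (a) ruling out undetectable configurations of weight $<d$ whose net data effect exceeds $wt(F)$, which is exactly where fault tolerance of the individual measurements is indispensable, and (b) confirming the bound is tight at weight $d$, which explains why only $d$-fault equivalence rather than full fault equivalence can be claimed: flipping all $d$ copies of a single stabiliser is an undetectable weight-$d$ fault producing a consistent syndrome change with no corresponding low-weight boundary data error.
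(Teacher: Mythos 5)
Your setup is sound and your ``key lemma'' is exactly the right target --- the paper's proof is also, in essence, a push-to-boundary argument --- but your proposal leaves unproven precisely the step that constitutes the paper's entire proof, and the mechanism you sketch for it does not work as stated. The missing idea is a pigeonhole argument: since the schedule is repeated $d$ times, any fault of weight less than $d$ must leave at least one round \emph{completely fault-free}. The paper pushes every qubit-flip component \emph{away} from that round (faults before it go to the inputs, faults after it go to the outputs), noting that pushing a data error past a measurement may create measurement flips but never increases the number of qubit flips. Because the fault-free round is never touched by this procedure, its outcomes are anchored to the true syndrome values; undetectability then forces all remaining measurement flips --- the original ones plus those created by the pushing --- to cancel, leaving a pure boundary fault of weight at most $wt(F)$. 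Your sketch has no substitute for this anchoring: asserting that a discrepancy ``must be cancelled by propagating a matching flip forward through every subsequent round'' and that pure syndrome discrepancies cost weight $d$ does not control mixed faults, where the pushing itself creates new flips, and it does not tell you how to choose push directions so that everything cancels. Indeed, forward-only propagation fails on the simplest mixed example: a measurement flip of $S_i$ in round $1$ together with a single data error immediately after it that anticommutes with $S_i$ (and commutes with the other stabilisers) is undetectable and, in the paper's noise model, can be a single atomic fault of weight one. The correct move is to push the data error \emph{backward}, where it cancels the flip and becomes an input error of weight one; pushing it forward instead produces flips of $S_i$ in \emph{every} round plus an output error, which is not a boundary fault of the specification at all.

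A secondary problem: your tightness remark (b) is unjustified and false for some codes. Flipping all $d$ copies of $S_i$ is equivalent to the boundary fault consisting of $P$ on the input and $P$ on the output, for any Pauli $P$ that anticommutes with $S_i$ and commutes with the remaining stabilisers; for the Steane code, $P = Z_1$ anticommutes only with the stabiliser $X_1X_3X_5X_7$, giving a weight-two boundary equivalent of that weight-three fault. Whether weight $d$ is genuinely the breaking point depends on the code (it is, e.g., for the toric code, where every Pauli anticommutes with an even number of plaquettes, so no boundary equivalent exists). This does not affect the proposition, which only claims $d$-fault equivalence, but your stated explanation for why the claim stops at $d$ is not correct in general.
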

\begin{proof}
 The proof proceeds by showing that all non-detectable internal faults (weight $<d$) can be pushed to the boundaries without increasing their weight.

 As we idealised the measurements as fault-tolerant, each atomic fault can create at most one qubit flip and/or a measurement flip.
 Since there are $d$ iterations of the measurement schedule, a fault of weight less than $d$ must leave at least one iteration fault-free.
 Consider all faults to the left of this fault-free iteration.
 We can push all faults that have a qubit flip component to the input wires.
 For all the measurements they anticommute with, this introduces a measurement flip of that measurement.
 Eventually, all qubit flips are pushed to the inputs, possibly creating additional flips of some measurements.
 We can analogously push all qubit flips to the right of the perfect round to the outputs.

 At this point, we have found an equivalent fault that consists of qubit flips on the inputs and outputs, as well as measurement flips.
 While the total number of atomic faults may have increased, we observe that the number of qubit flips does not increase when pushing the qubit flips past measurements.
 Therefore, the number of qubit flips is at most the weight of the original fault.
 The only potential faults remaining within the $d$ repetitions are measurement flips.
 However, since one round of measurements is fault-free, and we assumed the fault to be undetectable, all other iterations of the measurements must have the same outcomes as the fault-free one (otherwise, differing repeated measurements of the same stabiliser would give different values, making the fault detectable).
 Therefore, all measurement flips must cancel each other out.
 But then, the fault we got from pushing out all the edge flips only acts on the inputs and outputs and, in particular, is at most of the same weight as the original fault.
\end{proof}
When we express this proposition using ZX diagrams, we can think of it as a method for removing idealised, fault-free edges.
In the original circuit, the fault-equivalent ZX diagram assumes that all internal edges are idealised.
However, by \autoref{prop:fault-equivalent-ft-pauli}, the repeated circuit on the right can be represented as a fault-equivalent ZX diagram that contains no fault-free edges.
This is because fault-tolerant measurements can be expressed as fault-equivalent ZX diagrams without idealised edges.

The second step of Shor-style syndrome extraction is to fault-tolerantly implement the individual, non-perfect syndrome measurements.

\paragraph{Verifying Shor-style syndrome measurements}
Verifying the correctness of existing protocols boils down to checking whether proposed circuits are fault-equivalent to the idealised specification. 
In practice, it is easier to start with the complicated protocol and try to reduce it to the often simpler specification. 
However, in this case, we will start with the specification and show that we can rewrite it to Shor-style syndrome extraction. 

Shor-style syndrome extraction uses the $H\ket{cat}$ state, i.e.\@ $\ket{cat}$-state in the $X$ basis $\ket{++++} + \ket{----}$. 
To measure an $n$ qubit $Z$ stabiliser, Shor-style syndrome extraction fault-tolerantly prepares a $H\ket{cat}$ state on ancillary qubits, entangles it with the data qubits, and measures the ancillary qubits.

Assuming the fault-tolerant preparation of the $H\ket{cat}$ state, our framework can be used to verify the correctness of this procedure as follows:
\begin{proposition}
  \label{prop:shor-ft}
 Shor-style syndrome measurements are fault-tolerant:
  \[
 \tikzfig{07-examples/shor/shor-ft-statement}
  \]
\end{proposition}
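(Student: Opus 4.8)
The plan is to follow the paper's three-step synthesis framework and lean on compositionality (\autoref{prop:fault-equivalence-compositionality}) so that the verification reduces to recognising already-established fault-equivalent building blocks. Concretely, I would read the Shor-style syndrome measurement as the sequential composition of three parts: the fault-tolerant cat-state preparation on the ancillae, a transversal layer of CNOTs coupling each ancilla to its corresponding data qubit, and a destructive measurement of each ancilla. Since the cat-state preparation is assumed fault-tolerant, I would first replace it by its idealised fault-equivalent ZX diagram --- a single cat spider whose edge flips correspond exactly to data errors --- which is justified by \autoref{prop:four-legged-cat} and its generalisation \autoref{prop:arbitrary-legged-cat}.

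Next I would translate the transversal coupling-and-measurement layer. Each ancilla is prepared (as a leg of the cat state), fed through a CNOT with a data qubit, and then measured; this is exactly the composite handled by \autoref{prop:CNOT-plus-measurement} (up to the single-qubit Clifford basis change that turns the ancilla measurement into the required type, which is fault-equivalent by \autoref{cor:fault-equivalent-unitaries}). Applying \autoref{prop:CNOT-plus-measurement} transversally and invoking compositionality yields a ZX diagram, containing no fault gadgets, that is fault-equivalent under circuit-level noise to the whole Shor circuit.

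It then remains to show that this composed ZX diagram is fault-equivalent, under edge-flip noise, to the idealised fault-equivalent representation of the fault-tolerant Pauli measurement from \autoref{prop:fault-equivalent-ft-pauli}. Here I would fuse the cat spider together with the measurement spiders of the transversal layer and apply the fault-equivalent spider-splitting rewrites \eqref{four-legged-rewrite}/\eqref{n-legged-rewrite} in reverse, collapsing the cat-state tree back into a single measurement spider attached to each data qubit. Because each of these rewrites preserves fault equivalence, transitivity of fault equivalence then gives the claimed equivalence between the Shor circuit and the specification of \autoref{def:ft-syndrome-measurement}.

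The main obstacle is this final rewriting step, specifically matching the \emph{fault-tolerant} idealisation of the target measurement rather than a fault-free one. I would need to check that every internal edge flip of the collapsed diagram is either detectable through the detecting region formed by the cat-state spiders or can be pushed to a boundary edge as a single data error together with at most a measurement flip --- precisely the atomic faults permitted by \autoref{prop:fault-equivalent-ft-pauli} --- so that no single fault yields two independent data errors. This is exactly the boundary-pushing criterion of \autoref{lem:fault-equiv-boundaries}, and verifying it for the cat-state tree (i.e.\@ that odd-weight internal $X$-type faults fire a detecting region while even-weight ones push out without increasing in weight) is the crux of the argument.
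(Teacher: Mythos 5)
Your proposal is correct and follows essentially the same route as the paper's proof, just read in the opposite direction: the paper starts from the idealised measurement (translated via \autoref{prop:fault-equivalent-ft-pauli}), deforms it with fault-equivalent rewrites into the cat-state-plus-CNOTs form while introducing outcomes $k_1, \dots, k_4$ constrained by $k = k_1 \oplus k_2 \oplus k_3 \oplus k_4$, and then recognises the result as the Shor circuit via \autoref{prop:CNOT-plus-measurement} and the assumed fault-tolerant cat-state preparation. Since fault equivalence is symmetric and the ingredients you invoke --- compositionality, the \textsc{Fuse}-family rewrites with their detecting-region/push-out soundness (\autoref{lem:fault-equiv-boundaries}), and the two translation propositions --- are exactly those the paper relies on, your argument is the same proof.
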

\begin{proof}
 We have:
  \[
 \tikzfig{07-examples/shor/shor-ft-proof}
  \]
\end{proof}

While the above circuit only performs Pauli-$Z$ measurements, we can implement arbitrary Pauli measurements by conjugating the measurement using single-qubit Cliffords.
Thus, using \autoref{prop:repeating-measurement} and \autoref{prop:shor-ft}, we can do the syndrome extraction by repeating Shor-style syndrome measurements $d$ many times.

We observe that Shor-style syndrome extraction uses four measurements to implement one stabiliser measurement. 
In this case, we can derive the outcome of the stabiliser measurement by taking the parity of the four destructive measurements.
In \autoref{appendix:measurement-outcomes}, we elaborate more generally on how to translate between measurement outcomes of equivalent Clifford circuits.

\paragraph{Optimising Shor-style syndrome measurements}

An interesting aspect of our fault tolerance proof for the Shor-style syndrome measurement is that it does not rely on transversality, the usual argument for its correctness.
This implies that we can also verify non-transversal constructions.
In particular, we can take Shor-style syndrome extraction and optimise it:

\begin{proposition}
  \label{prop:shor-optimised}
 We can optimise Shor-style syndrome measurements as follows:
  \[\tikzfig{07-examples/shor/shor-optimised-complete-statement}\]
\end{proposition}
\begin{proof}
 We have:
  \[\tikzfig{07-examples/shor/shor-optimised-proof}\]
\end{proof}

For Shor-style syndrome extraction, we have to fault-tolerantly prepare the $H\ket{cat}$ state. 
For this optimised procedure, we have to fault-tolerantly implement an $H\ket{cat}$-like linear map:
\[\tikzfig{07-examples/shor/cat-state-vs-cat-like-map}\]

We can observe that to fault-tolerantly prepare the $H\ket{cat}$, we require one auxiliary qubit and five CNOTs~(\autoref{prop:four-legged-cat}).
Here, we propose a fault-tolerant implementation of the $H\ket{cat}$-like linear map that only requires four CNOTs and one auxiliary qubit. 
Thus, overall, the optimised method only requires three auxiliary qubits, which is two fewer than Shor's original syndrome extraction. 
More generally, we can apply this optimisation to arbitrary weight Shor-style syndrome measurements to save on the qubit count. 

The main trade-off of this new method is that the $H\ket{cat}$ state in the original Shor method can be prepared asynchronously; if detectable faults occur during preparation, the state can be discarded, and a new state can be prepared.
In our construction, the $H\ket{cat}$-like linear map gets entangled with the data qubits during its implementation, so faulty implementations cannot be discarded and must instead be handled by the outer code.
However, if the original diagram had some distance $d$, our use of fault-equivalent rewrites preserves circuit distance by \autoref{cor:fe-preserves-dist}, meaning the resulting diagram still has distance $d$.
Thus, by \autoref{prop:fault-equivalence-compositionality}, if the overall number of faults (including those within the $H\ket{cat}$-like map) is at most $\frac{d-1}{2}$, they remain correctable.

\paragraph{Alternative Shor-like syndrome measurements}
Beyond optimising Shor-style syndrome measurements, we can also come up with completely new ones.
Inspired by \textcite{rodatzFloquetifyingStabiliser2024}, we have:
\begin{proposition}
  \label{prop:shor-alternative}
 The following circuit fault-tolerantly performs a syndrome measurement:
  \[\tikzfig{07-examples/shor/shor-rodatzetal-statement}\]
\end{proposition}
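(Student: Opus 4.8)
The plan is to follow the exact same three-step synthesis methodology established throughout Section~\ref{sec:examples}: translate an idealised specification into a fault-equivalent ZX diagram, rewrite using only fault-equivalent rewrites, and extract an implementable circuit. Since the statement claims that a specific circuit fault-tolerantly performs a syndrome measurement, by \autoref{def:ft-syndrome-measurement} I need to show that the proposed circuit is fault-equivalent (under circuit-level noise) to the idealised specification of a single Pauli-$Z$ measurement that is multi-qubit fault-free. The whole argument will be a chain of fault-equivalent rewrites displayed as a single equation array, exhibited ``for'' some choice of measurement-outcome parameters, exactly as in the preceding propositions (\autoref{prop:shor-optimised}, \autoref{prop:cat-like-implementation}).

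\emph{First} I would translate the idealised fault-tolerant syndrome measurement into its fault-equivalent ZX diagram, invoking \autoref{prop:fault-equivalent-ft-pauli} to replace the idealised measurement spider-structure with the diagram that is fault-equivalent to the multi-qubit fault-free Pauli-$Z$ measurement. \emph{Next}, the core of the derivation is a sequence of fault-equivalent rewrites --- principally the spider-fusion family (\eqref{elim-rewrite}, \eqref{fuse-rewrite}, \eqref{four-legged-rewrite}, or the general \eqref{n-legged-rewrite}) together with \TextOCM --- that reshapes the measurement diagram into the form of the proposed circuit. Since this construction is explicitly ``inspired by \textcite{rodatzFloquetifyingStabiliser2024},'' I expect the rewrite to reorganise the syndrome measurement into an alternating pattern of measurements and corrections analogous to the cat-like construction, introducing fresh measurement-outcome variables $k_i$ whose parity reproduces the outcome of the idealised measurement. \emph{Finally}, for circuit extraction, I would identify the fault-equivalent composite components --- the CNOT-plus-measurement gadget (\autoref{prop:CNOT-plus-measurement}) and the state-plus-CNOT gadget (\autoref{prop:CNOT-plus-state}) --- marking them with dotted boxes in both the ZX diagram and the extracted circuit. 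By compositionality of fault equivalence (\autoref{prop:fault-equivalence-compositionality}), the whole circuit is then fault-equivalent to the diagram, and by transitivity fault-equivalent to the original specification.

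\emph{The subtle point} I would have to verify is totality: the final diagram introduces measurement outcomes that must be constrained, and I would need to argue --- as in \autoref{prop:shor-optimised} and \autoref{prop:cat-like-implementation} --- that the relevant outcomes lie within a common detecting region, so that their values are predetermined in the fault-free case and no genuine post-selection is required. The display would therefore be annotated with a side condition such as ``for $k_1 = k_2 = 0$ and $x_1 = x_2$,'' with the accompanying remark that these outcomes close detecting regions and hence the diagram remains total. I would also need to confirm that any correction that appears ``before'' its triggering measurement (the non-implementability issue flagged for the cat-like map) can, in this syndrome-measurement context, be pushed into the future of the measurement so that the extracted circuit is genuinely implementable.

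\emph{The main obstacle} I anticipate is not any single rewrite step --- each is justified by an already-proven fault-equivalent rewrite --- but rather correctly bookkeeping the measurement-outcome parameters and their constraints across the full chain of rewrites, ensuring both that the parity condition reproduces the idealised syndrome outcome and that every introduced outcome belongs to a detecting region so totality is preserved. In other words, the linear-map correctness and the fault-equivalence of each local move are routine given the machinery of Section~\ref{subsec:fe-rewrites}; the care lies in packaging the global outcome constraints so the extracted circuit is both total and implementable.
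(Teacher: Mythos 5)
Your proposal follows essentially the same route as the paper's proof: the paper simply exhibits the chain of fault-equivalent rewrites from the idealised measurement (via \autoref{prop:fault-equivalent-ft-pauli}) to the extracted circuit, annotated with the side conditions $k = k_1 \oplus k_2$ and $x_1 = x_2$ --- precisely the structure you describe, including the parity constraint reproducing the idealised outcome and the deferred corrections. Your anticipated bookkeeping of outcome constraints matches the paper's actual conditions, so there is no substantive gap.
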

\begin{proof}
 We have:
  \[
 \tikzfig{07-examples/shor/shor-rodatzetal-proof}
  \]
\end{proof}

This alternative implementation of a syndrome measurement is equally efficient as our optimisation of Shor-style syndrome measurements.
Additionally, it moves data qubits around the chip in a teleportation-like fashion; the two auxiliary qubits end up carrying data while two of the data qubits are freed up.
This can lead to dynamic implementations of syndrome measurement circuits akin to \textcite{eickbuschDemonstratingDynamic2024}, which can aid with routing on chips with limited connectivity or with leakage reduction.

\subsubsection{Steane-style Syndrome Extraction}
An alternative method for fault-tolerant syndrome extraction is proposed by \textcite{steaneActiveStabilization1997}.
Once again, the underlying question is to fault-tolerantly implement the specification of perfectly measuring all stabilisers without intermediate faults.

\paragraph{Verifying Steane-style syndrome extraction}
We focus on Steane-style syndrome extraction for CSS codes, where we can assume that the stabilisers are separable into  $X$- and $Z$-stabilisers.
We have:
\begin{proposition}
  \label{prop:steane-correctness}
 Let $S = [S_1, \dots, S_m]$ be the stabilisers of some CSS stabiliser code with distance $d$.
 Then:
  \[
 \tikzfig{07-examples/steane/steane-statement}
  \]
\end{proposition}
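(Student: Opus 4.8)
The plan is to apply the three-step synthesis methodology used throughout this section. The idealised specification --- perfectly measuring a generating set of stabilisers $S_1, \dots, S_m$ with no faults in between, as in \autoref{def:ft-syndrome-extraction} --- is first translated into a fault-equivalent ZX diagram under edge-flip noise. Since the code is CSS, its generators partition into $X$-type and $Z$-type checks, so the specification diagram decomposes into an $X$-check component and a $Z$-check component; by compositionality (\autoref{prop:fault-equivalence-compositionality}) it suffices to treat each separately.

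For the $Z$-type checks (the $X$-type case following by the colour-inverse), I would introduce an encoded logical ancilla --- the encoder of the CSS code applied to a logical $\ket{0}$ --- and couple it to the data block transversally. In the ZX calculus, each multi-qubit Pauli measurement spider of the specification is rewritten by pushing the measurement through transversal connections into the ancilla block, using the fault-equivalent spider-fusion rewrites of \autoref{thm:inductive-plus-dist-rewrite} together with \eqref{fuse-rewrite}. The encoder of the CSS code, drawn as a ZX diagram, shares the same spider structure as the collection of stabiliser-measurement spiders, so these rewrites align the measurement pattern with the transversal CNOTs connecting ancilla and data. As in the Shor-style case, auxiliary measurement outcomes are introduced whose parities reproduce the syndrome of the idealised specification, and detecting regions closing on correlated measurements keep the diagram total.

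The final, circuit-extraction step recognises the rewritten diagram as a composition of components already shown fault-equivalent under circuit-level noise: transversal CNOTs followed by measurements (\autoref{prop:CNOT-plus-measurement}), state preparations followed by CNOTs (\autoref{prop:CNOT-plus-state}), and a fault-tolerant ancilla preparation. By compositionality, the entire circuit is then $d$-fault-equivalent to the specification.

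I expect the main obstacle to be controlling fault propagation through the transversal coupling between the encoded ancilla and the data. A single fault on the ancilla block can propagate along the transversal CNOTs and spread to several data qubits; it is the CSS structure together with the code distance $d$ that guarantees any such propagated fault is either a stabiliser of the code (hence trivial on the encoded data) or is caught by a detecting region built from the ancilla measurements. Making this precise --- verifying that the rewrites preserve exactly the detecting regions needed to catch all undetectable ancilla faults of weight below $d$, and that the ancilla preparation is itself fault-tolerant so it does not silently inject correlated errors --- is the technical heart of the argument.
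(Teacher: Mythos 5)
Your overall route does match the paper's: split the generators into $X$- and $Z$-type (handling one case and taking the colour inverse for the other), rewrite the idealised measurement spiders into a logical-ancilla block transversally coupled to the data and measured qubit-wise, relate the new outcomes to the ideal syndrome bits by linear parity constraints, and finish compositionally, leaving fault-tolerant preparation of the logical ancilla as the residual problem. The genuine gap sits exactly at the step you yourself flag as the technical heart, and it cannot be closed the way you suggest. The rewrites you invoke, \eqref{fuse-rewrite} and \autoref{thm:inductive-plus-dist-rewrite}, are \emph{unconditionally} fault-equivalent, but no sequence of unconditional fault-equivalent rewrites can produce the Steane circuit from the specification: that passage replaces idealised (fault-free) measurement structure by an ancilla block with \emph{non-idealised} internal edges, i.e.\@ genuinely new undetectable fault locations, and their harmlessness is not a consequence of any local rule. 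The paper is explicit about this: after fusing the idealised data-qubit spiders with \autoref{lem:fe-perfect-fuse}, the key unfusing steps are fault-equivalent \emph{only if the input state is stabilised by the code's stabilisers}. It is this promise on the input --- not the rewrite system --- that makes the relevant ancilla-measurement parities into detecting regions, so that any fault of weight less than $d$ on the new edges can either be pushed to the inputs or is detectable. Without identifying this conditional fault equivalence, the \enquote{correct by construction} conclusion does not follow; for an arbitrary (non-code-state) input, a single flip of one ancilla readout changes the reported syndrome without touching the data, and has no weight-one correspondent on the specification side.

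Relatedly, the criterion you propose for the propagation analysis --- every propagated fault is \enquote{a stabiliser of the code (hence trivial)} or \enquote{caught by a detecting region} --- is not the condition fault equivalence asks for, and it is false for the Steane circuit. Fault equivalence to the specification (\autoref{def:ft-syndrome-extraction}) requires only that an undetectable fault of weight less than $d$ be equivalent to a fault the specification admits, namely qubit flips on the boundary of at most the same weight. The generic case --- say, a single fault late in the circuit that propagates to one data error --- is neither trivial nor detectable; it is simply pushed to the output boundary, which is exactly the paper's \enquote{pushed to the inputs or detectable} formulation. Finally, a smaller slip: you pair the $Z$-type checks with an ancilla prepared as $enc\ket{0}$. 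The paper (and standard Steane extraction) pairs the $X$-stabiliser measurements with the logical $\ket{0}$ block measured transversally in the $X$ basis, and the $Z$-stabiliser measurements with the logical $\ket{+}$ block measured in the $Z$ basis; with a logical-$\ket{0}$ ancilla coupled by data-to-ancilla CNOTs and read out in $Z$, the readout also reveals the data's logical $Z$ value, so the resulting circuit would not even implement the specification's linear map, let alone be fault-equivalent to it.
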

\begin{proof}
 As $S$ is a CSS code, we can separate the stabilisers into $X$ stabilisers $[S_{1}, \dots, S_{i}]$ and $Z$ stabilisers $[S_{i+1}, \dots, S_{{m}}]$, we get:
  \[\tikzfig{07-examples/steane/steane-proof}\]
 where we make use of the fact that $X$ stabilisers are the colour inverse of $Z$ stabilisers.
 The exact connectivity of the stabilisers is dependent on the specific choice of code.
 To account for different codes more formally, Steane-style syndrome extraction can be treated using the scalable ZX-calculus~\parencite{caretteSZXCalculusScalable2019}, as illustrated by \textcite{KissingerWetering2024Book}, however, without considering fault equivalence.
 To avoid introducing more notation, we instead depict this by randomly omitting some of the edges for each stabiliser.

 The first step of the correctness proof uses \autoref{lem:fe-perfect-fuse} to fuse the spiders on the data qubits.
 Then, using \autoref{lem:fe-perfect-fuse} a second time, we unfuse the merged spiders.
 The next step is $d$-fault-equivalent only if, in the fault-free case, the input state lives in the code space.
 In particular, we observe that if the input state is in the code space, then any fault occurring on the newly introduced non-idealised edges of weight less than $d$ can either be pushed to the outputs or must be detectable, detecting problematic faults that would otherwise spread unfavourably.
 For more details, see \autoref{appendix:steane-proof}.
 After having removed some of the idealised edges, similar to Shor-style syndrome extraction, we can use \autoref{lem:fe-pi-copy}, \eqref{fuse-rewrite} and \TextOCM to get the final circuit.
 Thus, we have proven the correctness of Steane-style syndrome extraction.
\end{proof}
Here, recovering this correctness proof through the lens of fault equivalence allows for an interesting insight: Steane-style syndrome extraction does not eliminate idealised edges through repetition, like Shor's method.
Instead, it transforms the problem: fault-tolerantly implementing the initial auxiliary logical states becomes the central challenge.

\paragraph{Optimising Steane-style syndrome extraction}
Similar to how we optimised Shor-style syndrome measurements, we can optimise Steane-style syndrome extraction by bending some of the legs around.
We have:
\begin{proposition}
  \label{prop:optimised-steane}
 We can optimise Steane-style syndrome extraction as follows:
  \[\tikzfig{07-examples/steane/steane-optimisation-statement}\]
 where
  \[\tikzfig{07-examples/steane/def-logical-state-like}\]
\end{proposition}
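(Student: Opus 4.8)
The plan is to adapt the optimisation of Shor-style syndrome measurements (\autoref{prop:shor-optimised}), where bending legs around --- rather than preparing a full ancilla block --- halved the ancilla overhead. Starting from the verified Steane-style extraction of \autoref{prop:steane-correctness}, I would bend the appropriate legs of the transversal CNOTs. Since \TextOCM is free under edge-flip noise, this bending preserves fault equivalence, and it folds the \enquote{logical-state-preparation followed by transversal CNOTs} structure into the leaner logical-state-like map defined in the where-clause.

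Concretely, I would first establish a core rewrite analogous to \eqref{eq:cat-like}: bending the control legs converts the auxiliary logical $\ket{0}$ preparation composed with the transversal CNOTs into the logical-state-like building block. This rewrite is justified by \TextOCM together with the fault-equivalent fuse rewrites (\eqref{fuse-rewrite} and \eqref{four-legged-rewrite}), while the fault-accounting of the resulting CNOT-and-measurement fragments is supplied by \autoref{prop:CNOT-plus-measurement}. Chaining this rewrite across all $X$-stabilisers reproduces the $X$-component of the right-hand side, and by the colour-inverse symmetry already exploited in \autoref{prop:steane-correctness} the $Z$-component follows by swapping colours, so I would spell out only the $X$-case. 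As in the Shor optimisation, some measurement outcomes become freely choosable (post-selected to $0$) while the diagram stays total; I would justify totality by exhibiting the detecting region that these outcomes close, exactly as for the cat-like map.

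The main obstacle will be verifying that fault equivalence genuinely survives the bending in the CSS setting. The bent transversal CNOTs now connect the data block to a full encoded auxiliary block, so I must confirm that every circuit-level fault remains accounted for without introducing fault gadgets. This relies on the same input-stabilisation hypothesis as \autoref{prop:steane-correctness}: under that assumption, any internal fault of weight below $d$ is either caught by a detecting region or can be pushed to a boundary edge without increasing its weight. The crux is to show that the logical-state-like map inherits precisely those detecting regions that made the unfolded construction fault-tolerant, so that the fault-pushing argument of \autoref{prop:steane-correctness} carries over to the folded diagram.
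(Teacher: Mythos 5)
Your proposal is correct and takes essentially the same route as the paper: the paper's proof likewise first establishes the helper rewrite you describe (its \eqref{eq:steane-helper}, the direct analogue of \eqref{eq:cat-like}), folding the idealised logical-state preparation and transversal CNOTs into the $\ket{0}$-like map, and then applies it across the extraction circuit with the outcomes $k_1 = \dots = k_{\frac{n}{2}}$ post-selected to $0$ while totality is preserved by the detecting regions. Your final worry is less of an obstacle than you suggest, since in the optimised statement the $\ket{0}$-like map remains idealised (fault-free), so the bending is \TextOCM on fault-free edges and the only non-idealised fragments are the CNOT-and-measurement pieces handled by \autoref{prop:CNOT-plus-measurement}; the genuine fault-tolerance burden is deferred to implementing the $\ket{0}$-like map, which the paper treats separately (\autoref{prop:optimised-0-like}).
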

\begin{proof}
 First, we observe:
  \begin{align}
    \label{eq:steane-helper}
 \scalebox{0.9}{\tikzfig{07-examples/steane/state-like-proof}}
  \end{align}
 But then we have:
  \[\tikzfig{07-examples/steane/steane-syndrome-optimisation}\]
\end{proof}

Thus, we have bent around half the legs such that we only need half as many auxiliary qubits and measurements.
Similar to how the optimised Shor-style syndrome extraction required a fault-tolerant implementation of the $H\ket{cat}$-like linear map, we now require a fault-tolerant implementation of the $\ket{0}$-like linear map.
Only if this linear map is sufficiently efficient to implement does the overall optimisation reduce overhead.
While we do not provide a general procedure to implement $\ket{0}$-like linear maps, we will consider the Steane code~\parencite{steaneMultipleparticleInterference1997} as one example where we can implement the $\ket{0}$-like linear map efficiently.

\paragraph{Example: Steane-style syndrome extraction for the Steane code}
Efficiently preparing a logical state, and similarly efficiently executing a $\ket{0}$-like linear map, are hard problems.
Therefore, instead of starting from scratch, to implement the $\ket{0}$-like linear map for the Steane code, we will take an efficient logical state preparation for the Steane code and adapt it for our purposes.
By \textcite{gotoMinimizingResource2016}, we know that for the Steane code, we have:
\begin{proposition}
 For the Steane code, the logical $\ket{0}$ state can be $3$-fault-tolerantly prepared as follows:
  \[
 \tikzfig{07-examples/steane/steane-code-logical-state}
  \]
\end{proposition}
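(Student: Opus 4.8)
The plan is to follow the three-step synthesis framework established above: translate the idealised logical $\ket{0}$ specification into a fault-equivalent ZX diagram, rewrite it with (possibly $w$-)fault-equivalent rewrites, and extract Goto's circuit from the result. Since the Steane code is the distance-three $\code{7,1,3}$ code, only $3$-fault-equivalence is required, so I would work with the cheaper $w$-fault-equivalent rewrites of \autoref{thm:inductive-plus-dist-rewrite-w} at $w = 3$ rather than the fully fault-equivalent ones.

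First I would translate the specification. By the definition of fault-tolerant logical state preparation, the idealised right-hand side is the encoder applied to $\ket{0}$ with both idealised as fault-free, so its only admissible faults are data errors on the seven output wires. The Steane code logical $\ket{0}$, like any CSS state, has a phase-free ZX representation---a web of Z- and X-spiders reflecting the stabiliser structure inherited from the classical $[7,4,3]$ Hamming code. By \autoref{prop:fault-free}, translating this fault-free diagram yields a ZX diagram that, under edge-flip noise restricted to the output wires, is fault-equivalent to the specification.

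Next I would rewrite the diagram toward circuit form. Working under edge-flip noise, I would apply \TextFusion together with \eqref{n-legged-rewrite-w} to break the high-arity data spiders into short chains, retaining only the $w = 3$ parity checks that the $w$-fault-equivalent rewrite permits. The aim is to expose subdiagrams matching the composite fault-equivalent components of \autoref{prop:CNOT-plus-measurement} and \autoref{prop:CNOT-plus-state}, together with a flag-like detecting region analogous to \autoref{prop:four-legged-cat}. Marking these components with dotted boxes and invoking the compositionality of fault equivalence (\autoref{prop:fault-equivalence-compositionality}) then lets us read off a circuit that is $3$-fault-equivalent to the specification, which is precisely Goto's construction.

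The hard part will be matching the exact structure of Goto's circuit---the position of the single flag qubit and the particular CNOT schedule---rather than merely producing some $3$-fault-tolerant preparation. Because the $\code{7,1,3}$ encoder's connectivity is pinned down by the Hamming code, one must arrange the arity-splitting rewrites so that every internal edge flip of weight less than three is either anticommuting with the flag's detecting region (hence detectable by \autoref{def:detectability}) or pushable to the output wires without increasing its weight. Carrying out this case analysis for the specific circuit of \textcite{gotoMinimizingResource2016}, and confirming that three checks genuinely suffice at $w = 3$, is where the main diagrammatic effort lies.
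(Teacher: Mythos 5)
The first thing to note is that the paper does not prove this proposition at all: it imports the circuit and its $3$-fault-tolerance directly from \textcite{gotoMinimizingResource2016} (``By Goto, we know that for the Steane code, we have:\dots''), and the surrounding text only adds two interpretive remarks --- that weight-$3$ fault equivalence suffices because the Steane code has distance three, and that the fault equivalence between the final ZX diagram and the extracted circuit holds only under circuit-level \emph{CSS} noise, a weakening the authors explicitly decline to formalise. Your attempt to re-derive Goto's circuit inside the synthesis framework is therefore strictly more ambitious than what the paper does, which would be welcome if it worked.

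However, there is a concrete gap at your extraction step. You propose to cover the circuit by the composite fault-equivalent components of \autoref{prop:CNOT-plus-measurement} and \autoref{prop:CNOT-plus-state} and then invoke compositionality (\autoref{prop:fault-equivalence-compositionality}). But Goto's preparation circuit contains bare CNOTs between data qubits (the encoding network), which are neither followed by a $Z$-measurement on their target nor preceded by a $\ket{+}$ preparation on their control. By \autoref{prop:CNOT-fault-accounting}, such a CNOT under full circuit-level noise is fault-equivalent to its ZX representation only after adding four fault gadgets; the four unaccounted correlated faults are genuine atomic faults, and they are precisely the kind of fault that makes naive encoding circuits non-fault-tolerant, so they cannot simply be ignored. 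The paper's escape hatch is to weaken the noise model to circuit-level CSS noise, under which bare CNOTs become naturally fault-equivalent to their ZX diagrams --- a caveat your proposal never invokes, and without which your compositionality argument cannot conclude $3$-fault-equivalence to the specification. Beyond this, your sketch defers the entire rewrite sequence and case analysis (``where the main diagrammatic effort lies''), which is where the actual content of the claim resides; so even granting the CSS restriction, what you have is a plan for a proof rather than a proof.
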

The circuit is only fault-equivalent up to weight three.
However, as the Steane code only has distance three, larger fault equivalences do not add any value.
Additionally, the final ZX diagram is fault-equivalent to the extracted circuit only under circuit-level CSS noise, i.e.\@ considering circuit-level $X$ faults and $Z$ faults separately.
For CSS codes, circuit-level CSS noise is a sufficient approximation of circuit-level noise.
Under circuit-level CSS noise, in the ZX calculus, CNOT gates without any idealised edges are naturally fault-equivalent to CNOT gates.
We note that CSS noise does not neatly fit into our current definition of a noise model, as our noise model considers the entire group generated by the atomic faults.

Using this efficient implementation of the logical state, we can now derive an efficient implementation of the $\ket{0}$-like linear map.
As it does not matter which edges we bend around, we pick ones that are particularly easy for us:
\begin{proposition}
  \label{prop:optimised-0-like}
 For the Steane code, the $\ket{0}$-like linear map can be $3$-fault-tolerantly prepared as follows:
  \[\tikzfig{07-examples/steane/state-like-derivation-statement}\]
\end{proposition}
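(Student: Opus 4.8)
The plan is to mirror the derivation of the cat-like linear map (\autoref{prop:cat-like-implementation}), but starting from the $3$-fault-tolerant logical $\ket{0}$ preparation of \textcite{gotoMinimizingResource2016} in place of a cat-state preparation. Recall that the $\ket{0}$-like linear map was obtained by bending half of the output legs of the logical $\ket{0}$ state around to become inputs; the key freedom, already flagged in the surrounding text, is that \emph{which} legs we bend is immaterial for the linear map, so I would choose precisely the legs that make the subsequent rewriting cheapest for the Steane code. The overall goal is to exhibit a sequence of fault-equivalent rewrites turning the Goto preparation (with its legs bent) into an implementable circuit for the $\ket{0}$-like map.

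First I would translate the Goto circuit into its fault-equivalent ZX diagram, which, as the preceding proposition states, is $3$-fault-equivalent to an ideal logical $\ket{0}$ preparation under circuit-level CSS noise. Next I would bend the chosen legs around: because the edge-flip noise model of \autoref{def:edge-flip-noise} is time-agnostic --- it does not matter how time flows through the diagram --- reinterpreting outputs as inputs by bending legs alters neither the underlying linear map nor the set of edge-flip faults, so this step preserves fault equivalence at the ZX level for free. I would then apply the helper identity \eqref{eq:steane-helper}, together with the fault-equivalent rewrites of \autoref{subsec:fe-rewrites}, to push the diagram into a shape in which each highlighted block is an implementable, fault-equivalent component, namely the CNOT-with-measurement and CNOT-with-state gadgets of \autoref{prop:CNOT-plus-measurement} and \autoref{prop:CNOT-plus-state}. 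By compositionality (\autoref{prop:fault-equivalence-compositionality}) and transitivity of fault equivalence, the extracted circuit is then $3$-fault-equivalent to the $\ket{0}$-like map. Throughout, I would carry the measurement outcomes and record the linear constraints relating them, checking that the constrained measurements close detecting regions so the resulting diagram remains total, exactly as in \autoref{prop:cat-like-implementation}.

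The bound is only $3$ rather than full fault equivalence for two compounding reasons that I would state explicitly: the Goto preparation is itself merely $3$-fault-equivalent, and the Steane code has distance $3$, so larger fault equivalences would be vacuous in any case. I expect the main obstacle to be the same phenomenon that arose for the cat-like map rather than the verification of any individual rewrite (each of which is already known to be fault-equivalent): the $\ket{0}$-like map genuinely requires Pauli corrections conditioned on the measurement outcomes, and once the convenient legs are bent, some of these corrections become acausal \emph{within} the gadget. The real work is therefore to argue that these corrections can be recorded as deferred Pauli byproducts which do not spoil totality and which can be commuted into the future once the gadget is placed inside the larger Steane-style syndrome-extraction circuit, so that the gadget remains implementable in context.
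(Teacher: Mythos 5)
Your proposal is correct and takes essentially the same approach as the paper: its proof is exactly the diagrammatic derivation you outline --- translate Goto's $3$-fault-tolerant logical $\ket{0}$ preparation into a fault-equivalent ZX diagram (under circuit-level CSS noise), bend a conveniently chosen subset of legs, and rewrite fault-equivalently until the diagram decomposes into implementable fault-equivalent components, with the weight-$3$ bound coming from Goto's circuit and the code's distance. The one mismatch is your anticipated obstacle of acausal Pauli corrections: unlike the cat-like map of \autoref{prop:cat-like-implementation}, the paper's derivation exploits the stated freedom in choosing which legs to bend precisely so that no such deferred corrections arise, so that extra argument is not needed here.
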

\begin{proof}
  \[\resizebox{\textwidth}{!}{\tikzfig{07-examples/steane/state-like-derivation}}\]
\end{proof}

Using this efficient implementation of the $\ket{0}$-like linear map, we can complete the optimised version of Steane-style syndrome extraction:
\begin{proposition}
  \label{prop:floq-steane}
 We can optimise Steane-style syndrome extraction for the Steane code as follows:
  \[\tikzfig{07-examples/steane/steane-code-optimised-statement}\]
\end{proposition}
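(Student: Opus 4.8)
The plan is to assemble the final circuit compositionally from components whose fault equivalence has already been established, rather than deriving it from the idealised specification in one sweep. The natural starting point is the optimised Steane-style syndrome extraction proved earlier, which re-expresses the extraction for a CSS code as an $X$-component --- a $\ket{0}$-like linear map transversally entangled with the data via CNOTs and then measured --- together with its colour inverse, a $Z$-component built from a $\ket{+}$-like linear map. The goal is then to plug the efficient Steane-code implementations of these two maps into that template.

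First I would invoke \autoref{prop:optimised-0-like}, which supplies a $3$-fault-equivalent implementation of the $\ket{0}$-like linear map for the Steane code obtained by adapting the \textcite{gotoMinimizingResource2016} logical-state preparation. Applying the colour-change rule to this implementation yields the $\ket{+}$-like map needed for the $Z$-component: since the Steane code is self-dual and the two halves of the extraction are colour inverses of one another, a single derivation does double duty. I would then glue both implementations into the optimised template using the compositionality of fault equivalence (\autoref{prop:fault-equivalence-compositionality}); because each substituted sub-diagram is fault-equivalent to the corresponding idealised sub-map, the composite is fault-equivalent to the full idealised syndrome extraction of \autoref{prop:steane-correctness}.

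Two bookkeeping points need to be made explicit. Since the Steane code has distance three, it suffices to track $3$-fault equivalence throughout, so the weight-three ceiling inherited from \autoref{prop:optimised-0-like} costs nothing. Moreover, the relevant noise model is circuit-level CSS noise, under which bare CNOTs are already fault-equivalent to CNOTs and the $X$- and $Z$-sectors can be analysed separately; this is what licenses treating the two colour-inverse halves independently.

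The hard part will be \emph{implementability} rather than fault equivalence per se. Exactly as in the optimised Shor construction, the $\ket{0}$-like and $\ket{+}$-like maps are non-deterministic and carry Pauli corrections conditioned on auxiliary measurement outcomes, and some of those corrections are nominally scheduled before the measurements that determine them. The crux is to show that, once the maps are entangled with the data and embedded in the full extraction circuit, every such correction can be commuted past the measurements whose outcomes are already fixed, or absorbed into the classical side-processing, so that the extracted circuit is genuinely runnable on hardware. I would discharge this by identifying the detecting regions that pin the auxiliary outcomes --- confirming the diagram stays total and no post-selection is needed in the fault-free case --- and then checking that each residual correction acts only on wires surviving past its controlling measurement.
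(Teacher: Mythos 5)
Your proposal takes essentially the same route as the paper: the paper's proof is exactly the diagrammatic substitution of the \autoref{prop:optimised-0-like} implementation of the $\ket{0}$-like linear map into the previously proven optimised Steane-extraction template, with the composite's fault equivalence following from \autoref{prop:fault-equivalence-compositionality}, and with the same caveats you note (3-fault equivalence suffices since the Steane code has distance three, the noise model is circuit-level CSS noise, and the $Z$-half is obtained by colour inversion). Your extra care about conditional corrections and totality is consistent with the paper's methodology and does not change the argument.
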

\begin{proof}
  \[\tikzfig{07-examples/steane/steane-code-optimised}\]
\end{proof}

Thus, we have a method to perform Steane-like syndrome extraction for the Steane code using five qubits instead of the previous optimum of eight (seven auxiliary qubits + one flag qubit).
We have also reduced the number of CNOTs required from 18 to 15 and the number of measurements from eight to five.

\paragraph{Simulations}
To validate the performance of our optimised protocol, we conducted numerical simulations using Stim~\cite{gidneyStimFast2021}.
We compared the logical error rate of our optimised Steane method, which uses five ancilla qubits, against the standard Steane-style extraction method requiring eight ancillae.
As these two methods are fault-equivalent, we expect them to perform similarly. 
However, as our optimised method has fewer spacetime locations, we anticipate a slightly lower error rate. 
For details on decoding the optimised Steane-like protocol, see \autoref{appendix:stean-decoding}.

In a simulation run for a single cycle while varying the physical error rate ($\pPhis$), our method shows a consistently lower logical error probability; see \autoref{fig:le-per-pe}.
For this simulation, the noise model assumed $\pTwo = \pSPAM = \pPhis$ and a projected memory error of $\pMem = 0.1 \cdot \pPhis$.
To ensure a consistent number of logical errors and a stable confidence interval, we take $15 / \pPhis^2$ samples for each point.

\begin{figure}[h!]
  \centering
  \includegraphics[width=\columnwidth]{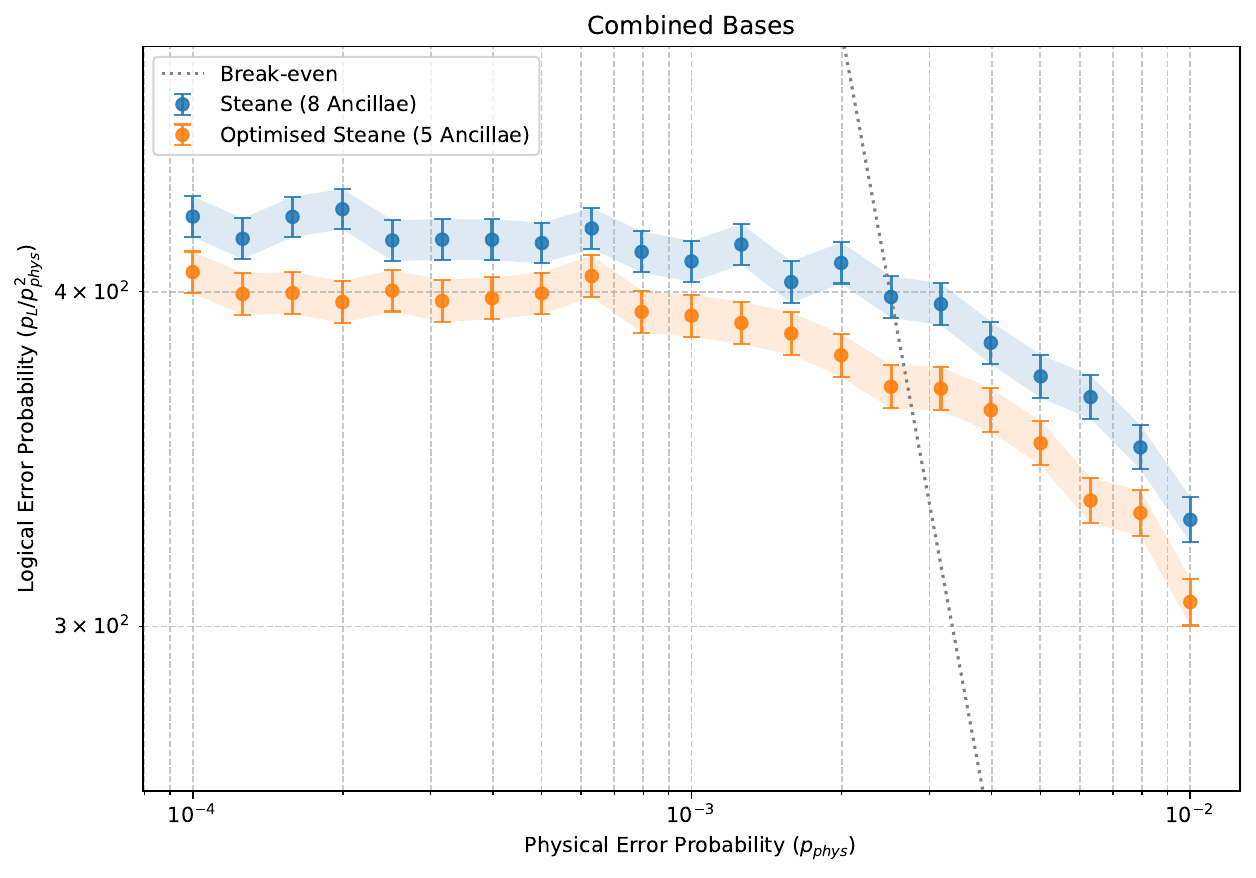}
  \caption{Comparison of the logical error probability ($p_L$) against the physical error probability ($\pPhis$) for a single syndrome extraction cycle.}
  \label{fig:le-per-pe}
\end{figure}

We ran a second simulation with a realistic error model for near-term hardware ($\pTwo = \pSPAM = 10^{-3}$ and $\pMem = 10^{-4}$) while repeating the syndrome measurement and correction cycle ($N$).
The results, shown in \autoref{fig:le-per-cycles}, demonstrate that our optimised method maintains its superior performance while increasing the number of cycles.
Each data point is averaged over $\frac{\pPhis^2}{15N}$ samples to obtain a consistent number of logical errors and a stable confidence interval for each data point.

\begin{figure}[h!]
  \centering
  \includegraphics[width=\columnwidth]{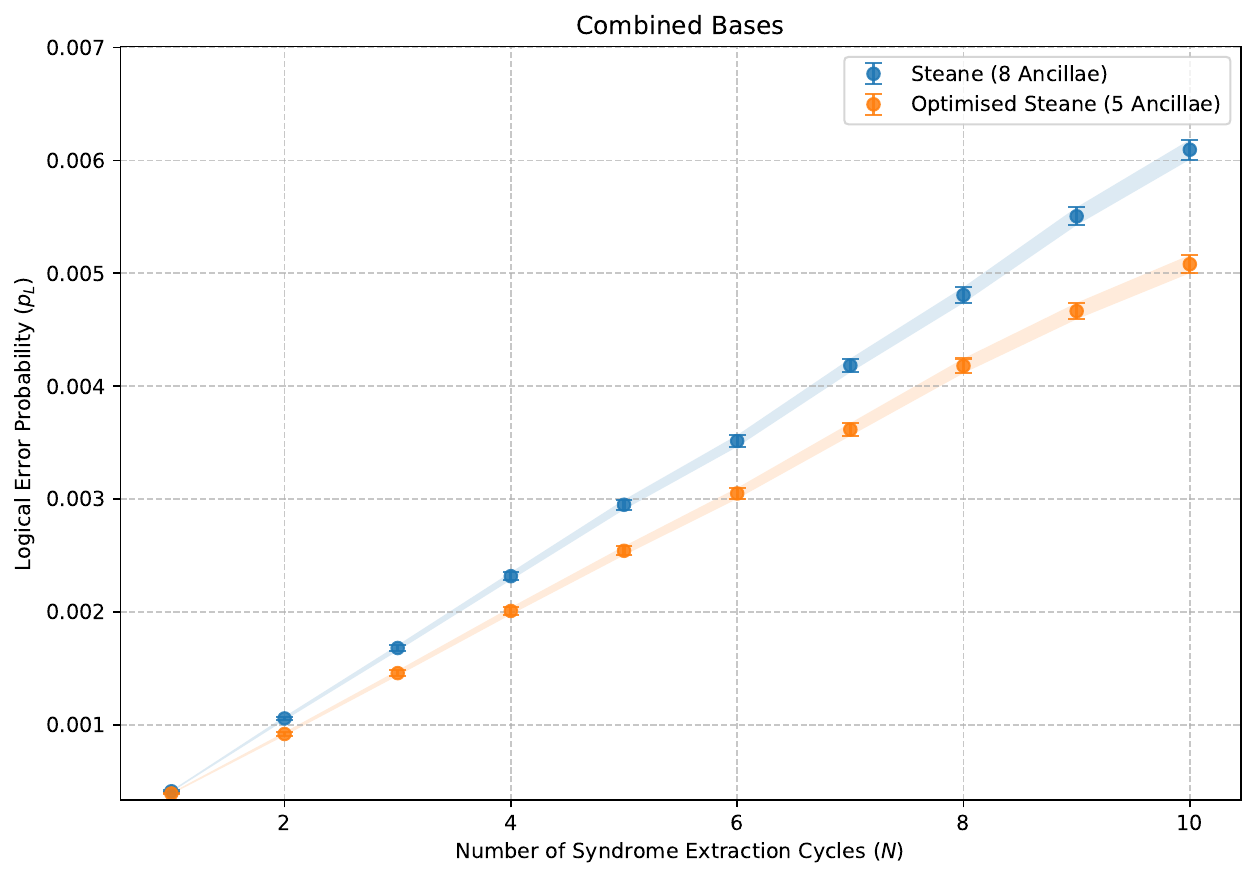}
  \caption{Comparison of the logical error probability against the number of syndrome extraction cycles.}
  \label{fig:le-per-cycles}
\end{figure}

These two simulations exactly match our prediction; asymptotically, both methods scale the same, but our optimised method performs better by a constant offset due to having fewer spacetime locations.
As such, this is a first indication of how fault equivalent rewrites can be used to optimise circuits.

\section{Conclusion}

In this work, we introduced a comprehensive framework for the specification, synthesis, and optimization of fault-tolerant quantum circuits, providing a method to design circuits that are fault-tolerant by construction.
Our approach is centred around \emph{fault equivalence}, a formal concept we introduced to define when two circuits exhibit the same behaviour under the influence of noise.
We demonstrated that this is a foundational and unifying idea, capable of recovering and formalizing many implicit design goals found throughout the literature.

To make this concept practical and computationally tractable, we adapted the ZX calculus.
We developed a methodology for constructing \emph{fault-equivalent representations} that faithfully map noisy quantum circuits into the graphical language.
We identified a set of \emph{fault-equivalent rewrite rules} that, unlike standard ZX rules, provably preserve a circuit's behaviour under noise.
This gives rise to a new paradigm of fault-tolerant circuit design where correctness is guaranteed by construction, avoiding the need for complex, resource-intensive verification after the synthesis is complete.

We demonstrated the practical applicability of this framework through several concrete examples.
We presented simple, intuitive graphical proofs for the correctness of both Shor- and Steane-style syndrome extraction.
Furthermore, we used our fault-equivalent rewrites to synthesize a novel cat state preparation protocol applicable to codes of any distance and to optimize existing syndrome extraction circuits.
Notably, our method yielded a Shor-style implementation that halves the number of auxiliary qubits required and an optimized circuit for the Steane code that reduces the ancilla qubit count from eight to five during syndrome extraction.
In STIM simulation~\parencite{gidneyStimFast2021}, we verified the improved performance of the optimised circuits. 
These novel circuits, achieved through the compositional application of our rewrite rules, are fault-tolerant despite the lack of immediate structural indicators, overcoming the need for transversal constructions.

Looking forward, we anticipate that this perspective can capture and unify many different approaches in fault-tolerant quantum computing.
The language may help identify new structural indicators of fault tolerance beyond established ideas like transversality.
Moreover, it enables the exploration of entirely new circuit implementations with practical hardware benefits;
for instance, our framework was used to derive a teleportation-like syndrome measurement circuit that could aid with routing on chips with limited connectivity.
Ultimately, this work paves the way toward an end-to-end circuit compilation framework, enabling the systematic translation of idealized algorithms into practical, resource-efficient, and verifiably fault-tolerant quantum circuits.

This requires a variety of future work. 
Firstly, all the proofs presented in this work were manually derived.
Automating the rewrite process would be necessary to scale this approach to broader families of circuits.
Secondly, this work does worst-case analyses of circuits, considering the worst-case behaviour of circuits if at most $w$ faults occur.
This is reflected in the fact that we use adversarial noise models, which do not attribute probabilities to faults. 
However, in practice, we are interested in the average-case behaviour of circuits under stochastic noise.
Therefore, in future work, we want to generalise this approach to account for probability distributions over faults. 
Additionally, fault-equivalent rewrites can introduce or remove detecting regions, changing the decoding problem.
An interesting research direction would be to explore the preservation of efficient decodability under (some) fault-equivalent rewrites.
Finally, this framework, like most prior work on spacetime codes~\parencite{baconSparseQuantum2017, gottesmanOpportunitiesChallenges2022,delfosseSpacetimeCodes2023}, only considers Clifford circuits.
While this is an important circuit fragment for fault-tolerant quantum computing and is sufficiently expressive for many important problems, in the long term, we would like to go beyond Clifford circuits.

\section*{Acknowledgements}
We would like to thank Alex Townsend-Teague, Șerban Cercelescu, Razin Shaikh, Daniel Miller, Eric Kuehnke, Simon Burton and Maximilian Rüsch for the useful discussions and their generous feedback on earlier drafts of this paper.
We also thank the Leibniz-Zentrum f\"ur Informatik GmbH for hosting the seminar ``Quantum Error Correction Meets ZX-Calculus''\footnote{\url{https://www.dagstuhl.de/en/seminars/seminar-calendar/seminar-details/25382}} at Schloss Dagstuhl, where we had many useful interactions that helped develop many of the ideas that appear in this paper.
BR thanks Simon Harrison for his generous support for the Wolfson Harrison UK Research Council Quantum Foundation Scholarship.
BP and AK are supported by the Engineering and Physical Sciences Research Council grant number EP/Z002230/1, \enquote{(De)constructing quantum software (DeQS)}.
BR and BP are employed part-time by Quantinuum.
The wording in some sections of this paper has been refined using LLMs.

\printbibliography

\newpage
\appendix

\section{Measurement Outcomes and Fault-equivalence}
\label{appendix:measurement-outcomes}

The idea of equivalence or fault-equivalence is that we can replace one diagram/circuit with an equivalent one and still get the same behaviour. For unitary circuits, the ``same behaviour'' means simply that the LHS and RHS implement the same unitary. However, circuits that include measurements represent non-deterministic processes, where multiple outcomes each correspond to a distinct linear map, or \textit{branch} in which each measurement has resulted in a projection onto the $+1$ or $-1$ eigenspace of the associated Pauli. In this case, we need a way to relate measurement outcomes on the LHS to those on the RHS and establish (fault-)equivalence for all branches of the computation.

In this Appendix, we will show that this is always possible for Clifford circuits and ZX-diagrams without ever making explicit reference to measurement outcomes. That is, (fault-)equivalence on a single branch of a non-deterministic computation implies (fault-)equivalence on all branches. For the sake of brevity, we will demonstrate how this works just using ZX-diagrams, but the principle for circuits is identical.

We noted in Section~\ref{sec:spacetime-locs} that we can treat measurements in circuits as projectors onto the $+1$ eigenspace of a Pauli ``by default'', and if we want to obtain the other projector, we can introduce an anti-commuting Pauli around the measurement. We can treat the presence or absence of such a Pauli as the measurement outcome itself. For example, a single-qubit $Z$-measurement can be written simply as this diagram:
\ctikzfig{appendix/zx-z-effect}
Whereas normally we would interpret this ZX-diagram as post-selection on to $\<0|$, we can treat this as a proper measurement with outcome $k \in \{ 0, 1 \}$ by introducing an anti-commuting Pauli $X$ onto the wire directly before the measurement iff $k = 1$:
\ctikzfig{appendix/zx-z-measure}

For a multi-qubit Pauli measurement, we can flip the measurement outcome by applying an anti-commuting Pauli before and after the projector. In the case of a Pauli $ZZZZ$ measurement, we can introduce a Pauli $X$ on any of the 4 qubits, as we did in the circuit example in Section~\ref{sec:spacetime-locs}:
\ctikzfig{appendix/zx-zzzz-measure}
Equivalently, we can place the Pauli $X$ on one of the internal wires involved in the measurement projector:
\begin{equation}\label{eq:zzzz-meas}
\tikzfig{appendix/zx-zzzz-measure-edge}
\end{equation}

In Section~\ref{sec:shor-example}, the main part of showing the correctness of Shor-style syndrome extraction we establishing the following fault-equivalence:
\begin{equation}\label{eq:shor-fault-equiv}
\tikzfig{appendix/shor-fault-equiv}
\end{equation}
The ZX-diagram on the LHS is interpreted as a circuit with one measurement, whereas the diagram on the RHS has four measurements. So, in order to interpret these two diagrams as equivalent computations, we should have a recipe for interpreting measurement outcomes on the LHS as measurement outcomes on the RHS and vice-versa.

Let's first see how to do this for the example above, then give the general recipe. First, we can decorate the diagrams on the LHS and RHS of \eqref{eq:shor-fault-equiv} with Paulis that represent every possible measurement outcome:
\ctikzfig{appendix/shor-equiv-dec}
We then push those Paulis from the interior wires on to the boundary. We already did this for the LHS in equation~\eqref{eq:zzzz-meas}, so we just need to do it for the RHS:
\ctikzfig{appendix/shor-fault-equiv-rhs}
Hence, we can see that on the RHS, even though there are four measurements yielding 16 possible outcomes, there are only two genuinely different branches of the computation, corresponding to the parity $k = k_1 \oplus k_2 \oplus k_3 \oplus k_4$. These are exactly the two possibilities on the LHS, so we can treat these two computations as equivalent.

Also, note that we only proved fault-equivalence for a single measurement outcome, notably the all-$0$ outcome. However, once we push outcomes to the boundary, it is clear that this suffices to provide fault-equivalence for all outcomes:
\ctikzfig{appendix/shor-fault-equiv-dec}

More formally, for any Clifford linear map $L$, we can define a \textit{branch} of $L$ as a pair of Pauli operators $(\vec P, \vec Q)$ up to the equivalence: $(\vec P, \vec Q) \sim (\vec P', \vec Q')$ iff there exists a Pauli correction $\vec C$ such that $\vec Q L \vec P = \vec C \vec Q' L \vec P'$.
Since the set of branches only depends on the linear map of a given diagram and not a particular representation, any fault-equivalence gives a correspondence between the measurement outcomes on the LHS and the RHS, up to a possible Pauli correction on the outputs.

For any Clifford ZX-diagram with decorated with measurement outcomes, we can efficiently compute the branch corresponding to a given outcome by pushing the associated Pauli operator to the boundary. For any set of measurement outcomes that happen with non-zero probability, we can always do this thanks to the following proposition.

\begin{proposition}
For a diagram $D$ and Pauli $P$, then if $D^P \neq 0$ then there exist Paulis $P_1, P_2$ such that $D^P = P_2 D^I P_1$.
\end{proposition}

\begin{proof}
We can find $P_1$ and $P_2$ constructively as follows. Fix a generating set of Pauli webs on the diagram $D$ such that the restriction of each web with with support on the boundary forms an independent set of Paulis operators. This restriction to the boundary of these Pauli webs (paying attention to signs) forms a complete set of spacetime stabilisers $\mathcal S$ for $D^I$. That is, they form a maximal set of Pauli operators $Q_1 \otimes Q_2$ such that $Q_2 D^I Q_1 = D^I$. Now, the Pauli $P$ defined over the wires of $D$ either commutes or anti-commutes with each element in the generating set of Pauli webs, which corresponds to flipping the sign its associated spacetime stabiliser in $\mathcal S$, obtaining a new set of spacetime stabilisers $\mathcal S'$ for $D^P$. Since $D^P$ is non-zero, it is uniquely fixed by $\mathcal S'$, and since we chose the elements in $\mathcal S$ to be independent, there must exist a Pauli $P_1 \otimes P_2$ supported on the boundary of $D$ that sends $\mathcal S$ to $\mathcal S'$ under conjugation. This implies that $D^P$ and $P_2 D^I P_1$ both have $\mathcal S'$ as a maximal set of spacetime stabilisers, hence $D^P = P_2 D^I P_1$.
\end{proof}

In the Shor-style syndrome extraction example, we could take the final Pauli correction to be trivial. However, in general we could obtain non-trivial Pauli corrections. For example, consider this implementation of a $ZZZZ$ measurement with a pair of $XX$ measurments on ancilla qubits, as we did in the proof of Proposition~\ref{prop:shor-optimised}:
\begin{equation}
\tikzfig{appendix/shor-optimised-app}
\end{equation}
We already identified branches on the LHS. For the RHS, we can decorate each of the four measurements with outcomes $k_1, k_2, k_3, k_4$:
\ctikzfig{appendix/shor-optimised-app-outcomes1}
If $k_3 \oplus k_4 \neq 0$, this diagram goes to $0$, due to the fact that $k_3$ and $k_4$ lie in the same detecting region. If $k_3 \oplus k_4 = 0$, then we can push the outcomes out to the boundary as follows:
\ctikzfig{appendix/shor-optimised-app-outcomes}
As we can see, the branch is fixed by the parity $k_1 \oplus k_2$ of the two $Z$ measurements and there is a Pauli correction on the outputs, which depends on the outcomes of the two $XX$ measurements.

This ``pushing out'' of outcomes, as well as computing equivalence on the boundary, up to Pauli corrections, amounts to solving systems of $\mathbb F_2$-linear equations, so it can be done efficiently for arbitrary Clifford circuits or ZX-diagrams.

\section{Proofs of ZX rewrites}

\subsection{General propositions}
In this section, we will prove some helper propositions for ZX rewrites. 
Note that these rewrites only preserve the underlying linear map, not fault equivalence.
\begin{proposition}
  \label{lem:pi-state}
  \[
 \tikzfig{appendix/zx-rewrites/pi-stabilized-state}
  \]
\end{proposition}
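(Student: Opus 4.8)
The plan is to prove this as a routine identity of the Clifford ZX calculus, using only rules that preserve the underlying linear map (those collected in \autoref{fig:axioms}), since the appendix explicitly flags these helper propositions as \emph{not} fault-equivalent. As the statement concerns how a $\pi$ phase interacts with a spider state, I would argue graphically where possible and keep the spider definitions from \autoref{sec:zx-diagrams} in reserve as a fallback.

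First I would locate the $\pi$ phase(s) and the spider(s) they touch. If a $\pi$ sits on a leg of a same-coloured spider, \TextFusion absorbs it directly, adding $\pi$ to that spider's phase. If instead it meets an opposite-coloured spider, I would push it through with the $\pi$-copy rule (\TextPiCommute): moving an $X$-type $\pi$ past a $Z$-spider copies it onto the remaining legs and toggles that spider's phase, and dually under colour inversion (\TextColour). Iterating this, all the $\pi$'s can be gathered onto a single spider and then combined or cancelled with \TextFusion, using that two $\pi$ phases on the same spider sum to $0 \pmod{2\pi}$, so the accumulated phase returns to the value that recovers the right-hand side.

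The one point that genuinely needs care is the global scalar. Because the rules of \autoref{fig:axioms} hold only up to scalars, and because shuffling $\pi$ phases can introduce a sign, I would track the $\pm 1$ factor explicitly and verify that the surviving phase is exactly the one claimed, so that the target state is reproduced and not merely its signed variant. Should the purely diagrammatic sign bookkeeping become delicate, the fallback is a direct computation on the computational-basis expansion of the state via the spider definitions, where the $\pi$'s act as a plain bit-flip or phase-flip and the equality is immediate. Either way, this scalar/sign bookkeeping is the main—and essentially the only—obstacle; the structural part of the argument is a short sequence of copy and fusion steps.
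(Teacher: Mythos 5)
Your proposal is correct and follows essentially the same route as the paper: the paper's proof is a short graphical derivation using exactly the tools you name — pushing the $\pi$ through the opposite-coloured spider via \TextPiCommute, cancelling paired $\pi$ phases by \TextFusion, with equality holding up to the global scalars that the axioms of \autoref{fig:axioms} already absorb. The scalar bookkeeping you flag as the main concern is handled the same way in the paper, so no fallback to the concrete spider definitions is needed.
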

\begin{proof}
  \[
 \tikzfig{appendix/zx-rewrites/pi-stabilized-state-proof}
  \]
\end{proof}

\begin{proposition}
  \label{lem:fire-spider}
  \[
 \tikzfig{appendix/zx-rewrites/fire-spider}
  \]
\end{proposition}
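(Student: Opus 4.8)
The plan is to prove \autoref{lem:fire-spider} by the same ``spider firing'' mechanism that underlies the proof of \autoref{thm:detecting-region}: the statement asserts that surrounding a single spider with the Pauli operators prescribed by a locally valid Pauli-web configuration (\autoref{def:pauli-web}) returns the spider unchanged up to a global sign. Since every rule of the Clifford ZX calculus holds in its colour inverse, I would first reduce to the case of a Z-spider, recovering the X-spider case by conjugating every leg with Hadamards and invoking the \TextColour rule, so that no separate argument is needed for red spiders.

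For a Z-spider I would argue directly from the definition $\ket{0}^{\otimes n}\bra{0}^{\otimes m} + e^{ik\frac{\pi}{2}}\ket{1}^{\otimes n}\bra{1}^{\otimes m}$, computing the action of the surrounding Paulis case by case. A $Z$ placed on a single leg sends the $\ket{1\cdots 1}$ branch to its negative, i.e.\@ it adds $\pi$ to the phase; hence an even number of same-colour ($Z$) highlightings leaves the spider invariant, matching the first clause of \autoref{def:pauli-web}. An $X$ on every leg swaps the two branches, returning the spider with global factor $+1$ when the phase is $0$ and $-1$ when the phase is $\pi$, which is precisely the claim that the all-opposite-colour configuration is a $+1$- or $-1$-stabiliser. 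Diagrammatically, the same facts follow from \TextPiCommute (copying an $X$ through the spider) together with \TextFusion, and I would present the final proof in this graphical form rather than in bra-ket notation to keep it inside the calculus.

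The case that requires genuine care, and which I expect to be the main obstacle, is the $\pm\frac{\pi}{2}$ spider. Here the admissible configuration couples an odd number of own-colour legs with all legs of the opposite colour, so the bookkeeping mixes $X$- and $Z$-type actions. Using $Y = iXZ$ to decompose the mixed highlightings, one must check that the phase of $\pm\frac{\pi}{2}$ conspires with the accumulated factors of $i$ so that the net global scalar is again exactly $\pm 1$ rather than an off-axis phase. I would verify this by pushing the $X$'s through the spider (which conjugates $\pm\frac{\pi}{2}$ to $\mp\frac{\pi}{2}$) and checking that the residual $Z$'s together with the factors of $i$ collapse to the claimed sign; each $Y$-highlighted edge contributes a factor $i^2 = -1$, exactly the accounting already used in the proof of \autoref{thm:detecting-region}, so I would reuse that reasoning verbatim here.

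Once all cases are settled, the global claim follows by observing that the underlying linear map is unchanged up to the computed scalar. Since we work throughout modulo global phase, the identity holds as stated, with the sign data retained only where it feeds into the detecting-region arguments of the main text.
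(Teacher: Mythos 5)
Your proposal is correct and, in its final graphical form, is essentially the paper's own proof: the paper establishes this lemma by exactly the short diagrammatic chain you describe, pushing the opposite-colour $\pi$'s through the spider with \TextPiCommute and cancelling the resulting copies pairwise with \TextFusion (the other colour following since every rule holds in its colour inverse), rather than by the bra-ket computation you use as scaffolding. One caution: the lemma itself is just this single firing identity for a $k\pi$-phase spider, so your $\pm\frac{\pi}{2}$ and $Y$-highlighting analysis is extra, and there the accounting of \autoref{thm:detecting-region} cannot be reused verbatim as you suggest --- for a single spider each $Y$-highlighted leg contributes one factor of $i$ (so the resulting sign depends on the number of such legs modulo $4$), not the $i^2 = -1$ per edge that arises in that proof from the two spiders incident to each edge.
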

\begin{proof}
  \[
 \tikzfig{appendix/zx-rewrites/fire-spider-proof}
  \]
\end{proof}

\begin{proposition}
  \label{lem:hh}
  \[
 \tikzfig{appendix/zx-rewrites/hadamard-hadamard}
  \]
\end{proposition}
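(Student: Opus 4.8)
The plan is to establish the Hadamard self-cancellation $H \circ H = I$ by descending from the Hadamard box into the spider calculus and simplifying. Since these helper propositions need only preserve the underlying linear map, and the Clifford axioms of \autoref{fig:axioms} hold up to global scalars, I can work freely up to scalar throughout and check the normalisation separately at the end.

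First I would apply the \TextEuler rule to each of the two Hadamard boxes, rewriting each as a chain of three $\frac{\pi}{2}$-phase spiders in the colour pattern $Z, X, Z$. Composing the two chains then gives a single chain of six spiders, all carrying phase $\frac{\pi}{2}$, whose colours read $Z, X, Z, Z, X, Z$ along the wire.

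Next I would collapse this chain step by step using the spider rules. The two adjacent middle $Z$-spiders fuse by \TextFusion, their phases adding to give a single $Z_\pi$, so the chain becomes $Z_{\pi/2}\,X_{\pi/2}\,Z_{\pi}\,X_{\pi/2}\,Z_{\pi/2}$. I would then commute the central $Z_\pi$ past the neighbouring $X_{\pi/2}$ with \TextPiCommute, which negates the $X$-phase to $-\frac{\pi}{2}$ and leaves the $\pi$ on the far side. The two now-adjacent $X$-spiders carry opposite phases $\frac{\pi}{2}$ and $-\frac{\pi}{2}$, so they fuse to a phase-free $X$-spider that is removed by \TextElim (or identity removal). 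What remains is a chain of three $Z$-spiders with phases $\frac{\pi}{2}, \pi, \frac{\pi}{2}$ summing to $2\pi \equiv 0$, which by \TextFusion is a single phase-free $Z$-spider, i.e.\ the identity wire.

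The only obstacle is bookkeeping: correctly tracking the phase modulo $2\pi$ and, in particular, the sign flip introduced by \TextPiCommute, together with the scalar factors accumulated across the Euler decomposition and the fusions. None of this is conceptually hard, but it must be done carefully so that the final phase lands on $0$ rather than $\pi$. As a self-contained sanity check, one can instead expand directly from the definition $H = \ket{+}\bra{0} + \ket{-}\bra{1}$ and compute $H \circ H = \ket{+}\bra{+} + \ket{-}\bra{-} = I$, using $\braket{0|+} = \braket{0|-} = \braket{1|+} = \tfrac{1}{\sqrt 2}$ and $\braket{1|-} = -\tfrac{1}{\sqrt 2}$, which recovers the identity exactly and fixes the normalisation.
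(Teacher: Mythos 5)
Your derivation is correct: unpacking both boxes with \TextEuler, fusing the two middle Z-spiders to $Z_\pi$, pushing that $\pi$ through the adjacent $X_{\pi/2}$ by (the colour inverse of) \TextPiCommute so that the two X-spiders fuse to phase $0$ and vanish by \TextElim, and finally fusing $Z_{\pi/2}, Z_\pi, Z_{\pi/2}$ into a phase-free spider is a valid rewrite sequence within the axioms of \autoref{fig:axioms}, and since those axioms hold only up to scalar, your concluding direct computation $H \circ H = \ket{+}\!\bra{+} + \ket{-}\!\bra{-} = I$ is exactly the right way to pin down the normalisation (the paper's H-box is defined as $\ket{+}\!\bra{0} + \ket{-}\!\bra{1}$, so the equality holds on the nose). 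Note, however, that there is a substantially shorter route in this rule set, which is what a one-figure diagrammatic proof most naturally uses: insert a phase-free two-legged Z-spider between the two Hadamard boxes (reverse \TextElim); that spider now carries a Hadamard on each of its legs, so the \TextColour rule converts it into a phase-free X-spider while consuming both boxes; a final \TextElim leaves the bare wire. That three-step argument involves no phase arithmetic at all, whereas your route requires the modular bookkeeping and the \TextPiCommute sign flip that you yourself flag as the main hazard. What your route buys in exchange is that it exercises only the Euler/fusion/$\pi$-commutation fragment, i.e.\ it is the general normal-form computation for products of single-qubit Cliffords, and so it generalises beyond this particular identity. Either way, the proposition holds and your proof is complete.
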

\begin{proof}
  \[
 \tikzfig{appendix/zx-rewrites/hadamard-hadamard-proof}
  \]
\end{proof}

\subsection{Fault-Equivalent Rewrites}\label{subsec:fault-equivalent-rewrite-proofs}

In this section, we will prove some additional fault-equivalent rewrites.
For this, we first prove the following:

\begin{proposition}[Fault Equivalence via Pushing to Boundaries]
  \label{lem:fault-equiv-boundaries}
 Let $r \colon D_1 \to D_2$ be a rewrite rule.
 Let the respective edge sets of the involved diagrams be $E(D_1) = E_{I,1} \sqcup E_{B,1}$ and $E(D_2) = E_{I,2} \sqcup E_{B,2}$, that is, the disjoint set of internal and boundary edges for both.
 Since both diagrams must have the same number of inputs and outputs, there exists a unique bijection $\phi \colon E_{B,1} \to E_{B,2}$ between their boundaries.

 We say a diagram has the \emph{$w$-fault push-out property} if any undetectable fault of weight less than $w$ on its internal edges can be replaced by a logically equivalent fault on its boundaries with non-increasing weight.

 The rewrite $r$ is fault-equivalent if both $D_1$ and $D_2$ have the boundary push-out property.
\end{proposition}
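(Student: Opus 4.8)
The plan is to derive both directions of the $w$-fault-equivalence condition directly from the two push-out hypotheses, leaning on the fact that an ordinary ZX rewrite preserves the underlying linear map (so $D_1 = D_2$ as maps) and that $\phi$ identifies their boundaries. First I would record two preliminary observations. (i) Under edge-flip noise the weight of a fault equals the size of its support, so a fault split as $F = F_I F_B$ across the disjoint internal and boundary edge sets satisfies $wt(F) = wt(F_I) + wt(F_B)$. (ii) By \autoref{def:pauli-web} and the definition of a detecting region, every detecting region highlights no boundary edge; hence any purely boundary fault commutes with every detecting region, and consequently $F = F_I F_B$ is detectable (in the sense of \autoref{def:detectability}) if and only if its internal part $F_I$ is detectable. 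Observation (ii) is the bridge that lets an undetectable \emph{full} fault feed the push-out hypothesis, which is stated only for faults supported on internal edges.

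For the forward direction I would take an undetectable $F_1 \in \langle \mathcal F_1 \rangle$ on $D_1$ with $wt(F_1) < w$ and write $F_1 = F_{I,1} F_{B,1}$. By (ii), $F_{I,1}$ is an undetectable fault on the internal edges with $wt(F_{I,1}) \le wt(F_1) < w$, so the $w$-fault push-out property of $D_1$ supplies a boundary fault $F'_{B,1}$ with $D_1^{F_{I,1}} = D_1^{F'_{B,1}}$ and $wt(F'_{B,1}) \le wt(F_{I,1})$. Multiplying both of these equal maps by the remaining boundary part $F_{B,1}$ gives $D_1^{F_1} = D_1^{G_1}$, where $G_1 := F'_{B,1} F_{B,1}$ is supported entirely on boundary edges and, by subadditivity of support size, $wt(G_1) \le wt(F'_{B,1}) + wt(F_{B,1}) \le wt(F_{I,1}) + wt(F_{B,1}) = wt(F_1)$.

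The final step is to transport $G_1$ across the rewrite. I would set $F_2 := \phi(G_1)$, placing the same Paulis on the corresponding boundary edges of $D_2$. Since $\phi$ is a bijection, $wt(F_2) = wt(G_1) \le wt(F_1)$; and since $D_1 = D_2$ as linear maps while a boundary fault only multiplies the shared input/output wires by Paulis, applying the same boundary Pauli to the two equal maps yields $D_1^{G_1} = D_2^{F_2}$, hence $D_1^{F_1} = D_2^{F_2}$ with $wt(F_2) \le wt(F_1)$. This verifies the defining condition of $w$-fault equivalence for $D_1$. The reverse direction is identical with the roles of $D_1$ and $D_2$ swapped and $\phi$ replaced by $\phi^{-1}$, using the push-out property of $D_2$; since the argument holds for every $w$, letting $w$ range over all of $\mathbb N$ upgrades the conclusion to full fault equivalence, so $D_1 \FaultEq D_2$.

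The main obstacle will be making the two hinges fully rigorous rather than the arithmetic: precisely justifying that detectability localises to the internal component (observation (ii), which requires detecting regions to avoid the boundary) and that a boundary fault transfers cleanly across the rewrite (which requires both $D_1 = D_2$ as maps and that applying a boundary Pauli commutes with taking the linear-map interpretation). Once these are pinned down, the weight accounting is routine subadditivity and the rest follows mechanically.
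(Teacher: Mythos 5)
Your proof is correct and follows essentially the same route as the paper's: decompose the fault into internal and boundary parts, apply the push-out property to the internal part, recombine with the remaining boundary part, and transport the resulting purely-boundary fault across the rewrite via $\phi$, using the fact that the rewrite preserves the underlying linear map. Your observation (ii) --- that detectability localises to the internal component because detecting regions highlight no boundary edges --- is a useful refinement rather than a different approach: the paper applies the push-out hypothesis to $F_{I,1}$ without explicitly verifying that this internal part is itself undetectable, a gap your argument closes.
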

\begin{proof}
 To prove $w$-fault equivalence, we must show for both directions that an undetectable fault of weight less than $w$ on one diagram corresponds to an equivalent fault on the other with non-increasing weight.
 The argument is symmetric; we show it explicitly for the direction $D_1 \to D_2$.

 Let $F_1$ be an arbitrary undetectable fault on $D_1$ with $wt(F_1) < w$.
 Our goal is to construct a corresponding fault $F_2$ on $D_2$ such that $wt(F_2) \leq wt(F_1)$.
 This is achieved in two stages: first, pushing the fault to the boundary of $D_1$, and then mapping it to $D_2$.

  \paragraph{Pushing the fault to the boundary of $D_1$.}
 We decompose the initial fault into its internal and boundary parts, $F_1 = F_{I,1} \cdot F_{B,1}$.
 The proposition's premise states that $D_1$ has the boundary push-out property.
 Applying this to the internal part $F_{I,1}$, there must exist an equivalent fault $F'_{B,1}$ with support only on the boundary $E_{B,1}$ and with $wt(F'_{B,1}) \le wt(F_{I,1})$.

 Therefore, the original fault $F_1$ is logically equivalent to a new fault on $D_1$ that is purely boundary-based:
  \[
 {D_1^{F_1}} = {D_1^{F_{I,1} \cdot F_{B,1}}} = {D_1^{F'_{B,1} \cdot F_{B,1}}}.
  \]
 Let this total boundary fault on $D_1$ be $F'_{1} = F'_{B,1} \cdot F_{B,1}$.
 Its weight is bounded by $wt(F'_{1}) \le wt(F_1)$.

  \paragraph{Mapping the boundary fault to $D_2$.}
 We have found a fault $F'_{1}$ that is equivalent to our original fault $F_1$ but is supported only on the boundary of $D_1$.
 Since the boundaries of $D_1$ and $D_2$ are isomorphic via $\phi$, we can map this boundary fault to $D_2$ to get a logically equivalent operation.
 We define our target fault on $D_2$ as the image of $F'_{1}$ under this map, $F_2 \coloneqq \phi(F'_{1})$.
 By this construction, the interpretation of $D_1^{F_1}$ is the same as that of $D_2^{F_2}$.
 The weight of this new fault is $wt(F_2) = wt(\phi(F'_{1})) = wt(F'_{1})$, which we have already shown to be less than or equal to $wt(F_1)$.

 We have thus constructed a fault $F_2$ on $D_2$ that is logically equivalent to $F_1$ with non-increasing weight.
 The argument for the reverse direction ($D_2 \to D_1$) is identical, using the premise that $D_2$ also has the boundary push-out property.
 Therefore, the rewrite is fault-equivalent.
\end{proof}

\begin{proposition}
  \label{lem:fe-pi-copy}
 The \TextPiCommute rule is a fault-equivalent rewrite:
  \begin{gather}
 \tag{$\textsc{Pi-Copy}_{\text{fe}}$}\label{fe-pi-copy}\refstepcounter{equation}
 \tikzfig{appendix/zx-rewrites/FE-pi-copy}
  \end{gather}
\end{proposition}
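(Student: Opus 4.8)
The plan is to reduce the claim to \autoref{lem:fault-equiv-boundaries} by checking that both sides of the \TextPiCommute rule have the boundary push-out property. The underlying linear-map equality is just the standard \TextPiCommute axiom (including the phase negation $\alpha \mapsto -\alpha$), so $r$ is a legitimate rewrite and the only real work is the fault analysis. First I would fix the internal edges: on the left-hand diagram the single internal edge is the one joining the $\pi$ (red) spider to the $\alpha$-spider, whereas on the right-hand diagram the internal edges are the $n$ edges joining the phase-negated spider to the copied $\pi$ spiders.

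For the left-hand diagram there is only one internal edge, so I need only push an $X$, $Y$, or $Z$ flip on it to the boundary. The key observation is to push each flip through the spider of its \emph{own} colour, where it merely slides between legs and hence preserves weight: an $X$ flip commutes through the two-legged red $\pi$ spider onto the input wire, and a $Z$ flip slides through the green $\alpha$ spider onto a single output wire (equivalently it pi-copies through the two-legged red spider onto the input). A $Y$ flip decomposes into its $X$ and $Z$ parts, each handled as above. In every case the resulting fault lives on a boundary edge and has weight at most one, so the push-out property holds.

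For the right-hand diagram I would treat each of the $n$ internal edges independently, since faults on disjoint edges have additive weight. On internal edge $i$, an $X$ flip slides through the adjacent two-legged red $\pi$ spider onto output $i$, while a $Z$ flip slides through the central green spider onto the input wire (or pi-copies through the red spider onto output $i$); a $Y$ flip is again the product of the two. Routing the faults this way gives a boundary fault of non-increasing weight for every internal fault, establishing the push-out property for this diagram as well, after which \autoref{lem:fault-equiv-boundaries} yields fault equivalence directly. Note that no detecting region is required here, in contrast to the situation in \autoref{thm:four-legged-spider}.

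The main obstacle is precisely this bookkeeping of push-out directions. The \TextPiCommute geometry contains both colours of spider, and naively pushing a flip through the \emph{opposite}-colour spider triggers the copy behaviour: pushing a red $\pi$ (an $X$ flip) through the central green spider would copy it onto all $n$ legs and blow up the weight. The content of the argument is therefore to verify that every internal fault can be routed through a same-colour, hence weight-preserving, spider so as to reach a boundary edge; once this is checked for both diagrams the conclusion is immediate.
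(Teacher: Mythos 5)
Your proposal is correct and follows essentially the same route as the paper: both reduce the claim to \autoref{lem:fault-equiv-boundaries} and then show that every Pauli flip on an internal edge can be pushed to the boundary without increasing weight. The only cosmetic difference is that the paper's proof routes every flip uniformly through the 2-legged $\pi$ spider adjacent to each internal edge, whereas you route $X$ flips through the red $\pi$ spider and $Z$ flips through the green spider (while noting the uniform alternative yourself), and you correctly flag the one pitfall --- pushing a flip through the opposite-colour many-legged spider --- that the uniform routing avoids.
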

\begin{proof}
 By \autoref{lem:fault-equiv-boundaries}, we only have to show that any fault on internal edges can be pushed out.

 The diagram on the left has a single internal edge, while the one on the right has two.
 Each of these internal edges is connected to a boundary edge by a 2-legged $\pi$ spider.
 Consequently, to prove the fault equivalence, we only need to show that an arbitrary edge flip can be propagated through such a spider, as shown below:
  \[
 \tikzfig{appendix/zx-rewrites/FE-pi-copy-proof}
  \]
\end{proof}

\begin{proposition}
  \label{lem:fe-phase-unfuse}
  \begin{gather}
 \tag{$\textsc{Unfuse}_{\text{fe}}$}\label{phase-unfuse-fe}\refstepcounter{equation}
 \tikzfig{appendix/zx-rewrites/FE-phase-unfuse}
  \end{gather}
\end{proposition}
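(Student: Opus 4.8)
The plan is to reduce the claim to the boundary push-out criterion of \autoref{lem:fault-equiv-boundaries}, exactly as in the proof of \autoref{lem:fe-pi-copy}. That lemma tells us it suffices to check, for each of the two diagrams in \eqref{phase-unfuse-fe}, that every edge flip on an internal edge is logically equivalent to a fault on the boundary edges of non-increasing weight. Since unfusing only redistributes legs and the phase, the boundary edges of the two sides are in canonical bijection, so once both sides enjoy the push-out property the rewrite is automatically fault-equivalent and no weights need to be compared across the two diagrams directly.

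First I would identify the internal edges. The fused (left) side is a single spider and so has no internal edge, making its push-out property vacuous. The unfused (right) side has one new internal edge joining the phase-carrying spider to the remainder of the diagram through a two-legged spider. Thus the entire argument reduces to propagating an arbitrary weight-one fault through a two-legged spider carrying a Clifford phase $\alpha \in \{0, \tfrac{\pi}{2}, \pi, \tfrac{3\pi}{2}\}$ and checking that it emerges on the adjacent boundary edge as a weight-one fault. I would fix the phase spider to be a Z-spider and note that the X-spider case is identical by colour symmetry.

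Next I would enumerate the three Pauli types on the internal edge, reading off the admissible propagations directly from the single-spider cases of \autoref{def:pauli-web}. A same-colour flip (a $Z$ on the green phase spider) commutes through unchanged and moves to the other leg as a $Z$; this is the \emph{even number of own-colour legs} branch of the web definition. An opposite-colour flip (an $X$) behaves according to the phase: for $\alpha \in \{0, \pi\}$ it copies through as an $X$ on the other leg, whereas for $\alpha \in \{\tfrac{\pi}{2}, \tfrac{3\pi}{2}\}$ the \emph{odd own-colour, all opposite-colour legs} branch forces it to emerge as a $Y$ on the other leg. A $Y$ flip is handled by decomposing it into its $X$ and $Z$ components and propagating each. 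In every case a single internal edge flip is carried to a single boundary edge flip, so the weight is preserved.

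The one point needing care — and the only place the phase genuinely matters — is the $\pm\tfrac{\pi}{2}$ case, where an opposite-colour fault does not pass through inertly but acquires a same-colour component (the type change $X \mapsto Y$), reflecting the $\pi$-copy-like behaviour encoded by \TextPiCommute. The crucial observation is that on a \emph{two}-legged spider this type change still lands as a single Pauli on the one adjacent edge, so the weight remains one and the push-out property is not violated; this is precisely the content of the $\pm\tfrac{\pi}{2}$ clause of \autoref{def:pauli-web}. With the (trivial) left side and the checked right side both having the push-out property, \autoref{lem:fault-equiv-boundaries} delivers fault equivalence.
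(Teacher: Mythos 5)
Your proof is correct and follows essentially the same route as the paper's: both reduce the claim to the boundary push-out criterion of \autoref{lem:fault-equiv-boundaries} and then verify that an arbitrary edge flip on the single internal edge of the unfused side propagates through the 2-legged $k\frac{\pi}{2}$-phased spider to the adjacent boundary edge as a single Pauli (with the $X \mapsto Y$ type change in the $\pm\frac{\pi}{2}$ case), preserving weight. Your explicit case enumeration simply spells out what the paper's displayed rewrite shows graphically.
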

\begin{proof}
 By \autoref{lem:fault-equiv-boundaries}, it suffices to consider the case where the fault is on an internal edge.
 The diagram for the proposition contains a single internal edge connected to a boundary edge via a 2-legged $k \frac{\pi}{2}$-phased spider.
 Fault-equivalence is proven by showing that an arbitrary edge flip can be propagated through such a spider, as shown in the following rewrite:
  \[
 \tikzfig{appendix/zx-rewrites/FE-phase-unfuse-proof}
  \]
\end{proof}

\begin{proposition}
  \label{lem:fe-pi-pi-id}
  \[
 \tikzfig{appendix/zx-rewrites/FE-pi-pi-id}
  \]
\end{proposition}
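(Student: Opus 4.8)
The plan is to follow exactly the strategy established for \autoref{lem:fe-pi-copy} and \autoref{lem:fe-phase-unfuse}, namely to invoke \autoref{lem:fault-equiv-boundaries} and thereby reduce the claim of fault equivalence to verifying the boundary push-out property on each side of the rewrite. One side of the equation is (apart from the two $\pi$-spiders) essentially a bare wire and so contributes no internal edges; it satisfies the push-out property vacuously. Consequently all of the work concentrates on the side carrying the two $2$-legged $\pi$-phase spiders, which is where the internal edge lives.

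First I would identify the internal edge sitting between the two $\pi$-phase spiders. By \autoref{lem:fault-equiv-boundaries} it then suffices to show that an arbitrary edge flip placed on this internal edge can be propagated to a boundary edge without increasing its weight. This is precisely the single-spider propagation fact already used to close the proofs of the two preceding propositions: a Pauli flip passes through a $2$-legged $\pi$-phase spider and re-emerges as a Pauli flip of the same weight on the opposite leg. Concretely, a $Z$ flip commutes through a Z-type $\pi$-spider unchanged, an $X$ flip passes through while interacting with the $\pi$ phase, and a $Y$ flip is handled by decomposing it into its $X$ and $Z$ components and pushing each out in turn. Because both internal-adjacent spiders are $2$-legged, a single diagrammatic push-through — identical to the one displayed in the proofs of \autoref{lem:fe-pi-copy} and \autoref{lem:fe-phase-unfuse} — carries the fault all the way out to the boundary, and the boundary bijection $\phi$ of \autoref{lem:fault-equiv-boundaries} then produces the matching fault on the other diagram.

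The main point requiring care is the sign bookkeeping generated by the $\pi$ phases: propagating an $X$ flip through a Z-type $\pi$-spider (or vice versa) introduces a global factor of $-1$ from the anticommutation of $X$ and $Z$. This is not a genuine obstacle, however, since all faults are taken in $\overline{\mathcal P^{|E|}}$, i.e.\@ modulo the centre $\{\pm I, \pm iI\}$, so such signs are invisible to the equivalence. The only substantive check, then, is confirming that both $\pi$-phase spiders appearing in the figure are indeed $2$-legged so that the propagation step applies verbatim; granting this, the same one-line push-through closes both directions of the equivalence and establishes that the rewrite is fault-equivalent.
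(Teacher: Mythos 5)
Your proposal is correct and follows essentially the same route as the paper: the paper's proof likewise defers to the reasoning of \autoref{lem:fe-pi-copy} (reduction via \autoref{lem:fault-equiv-boundaries} to the push-out property, then propagating an arbitrary edge flip through a 2-legged green $\pi$-spider), merely exhibiting the single push-through rewrite as a figure. Your additional remarks --- the vacuous check on the bare-wire side and the sign bookkeeping being absorbed by working in $\overline{\mathcal{P}^{|E|}}$ --- are consistent with the paper's conventions and simply make explicit what the paper leaves implicit.
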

\begin{proof}
 The proof follows by the same reasoning as in \autoref{lem:fe-pi-copy}, simply substituting the red spider with a green $\ pi$-phased spider.
 The necessary rewrite is given as follows:
  \[
 \tikzfig{appendix/zx-rewrites/FE-pi-pi-id-proof}
  \]
\end{proof}

\begin{proposition}
  \label{lem:fe-perfect-fuse}
  \[\tikzfig{appendix/zx-rewrites/FE-idealised-fuse}\]
\end{proposition}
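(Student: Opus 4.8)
The plan is to reduce everything to the boundary push-out criterion of \autoref{lem:fault-equiv-boundaries}. That lemma tells us a rewrite is fault-equivalent as soon as both of its diagrams enjoy the push-out property, i.e.\@ every undetectable internal fault can be traded for a boundary fault of non-increasing weight. Since the underlying rewrite is just ordinary spider fusion, the two diagrams already represent the same linear map and their external legs are matched by the obvious bijection $\phi$; so the entire content of the proposition is the push-out property on the two sides.

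First I would dispatch the fused side, which is a single spider: it has no internal edges, so the push-out property is vacuously true for every weight $w$. Then I would turn to the unfused side, where the spiders being merged are joined by edges that are idealised as fault-free (drawn in purple). Under the relevant submodel of edge-flip noise, all atomic faults supported on those edges have been removed, so there are no internal faults on this side either. Thus the push-out property again holds vacuously, and \autoref{lem:fault-equiv-boundaries} immediately yields fault equivalence at all weights. Note that possible $k\frac{\pi}{2}$ phases on the spiders play no role here: fusion adds them while preserving the linear map, and phases create no new fault locations, so the fault analysis is unaffected.

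The point I expect to require the most care is conceptual rather than computational: the rule earns its name \emph{perfect} (or idealised) fusion precisely because it relies on the connecting edges carrying no fault locations. This is what distinguishes it from the genuinely noisy fusions \eqref{fuse-rewrite}, \eqref{four-legged-rewrite} and \eqref{n-legged-rewrite}, which must push real internal faults to the boundary or absorb them into a detecting region. When the rule is invoked --- for example in the opening step of \autoref{prop:steane-correctness} --- one must check that the fused edges are indeed idealised; there they are segments of the fault-free data wires, so the vacuity argument genuinely applies and the fusion is legitimate.
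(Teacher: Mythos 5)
Your proposal is correct and takes essentially the same approach as the paper: the paper's proof is the one-sentence observation that neither diagram allows faults on internal edges (the fused edges being idealised as fault-free), so \autoref{lem:fault-equiv-boundaries} applies vacuously. Your elaboration—treating the two sides separately and noting the contrast with the genuinely noisy fusion rules—is a faithful expansion of exactly that argument.
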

\begin{proof}
 As neither diagram allows faults on internal edges, by \autoref{lem:fault-equiv-boundaries}, the two diagrams are fault-equivalent.
\end{proof}

\begin{proposition}
  \label{lem:fe-copy}
 The following rule is a fault-equivalent rewrite:
  \[
 \tikzfig{appendix/zx-rewrites/FE-copy}
  \]
\end{proposition}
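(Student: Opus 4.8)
The plan is to prove \TextCopy fault-equivalent in the same style as the neighbouring rewrites, by reducing to \autoref{lem:fault-equiv-boundaries} and checking the boundary push-out property on each side. First I would fix the boundary bijection $\phi$ and split every edge set into its internal and boundary parts. On the side where the copied Pauli has already been distributed, each resulting node sits directly on a boundary leg, so that diagram carries essentially no problematic internal structure and its push-out obligation is immediate. All the real work therefore lives on the side containing the short internal edge joining the copied (red) Pauli node to the (green) spider through which it is copied, together with whatever gadget edges the copy introduces.

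On that internal edge I would dispatch the benign faults first. A $Z$-flip either commutes through the green spider and re-emerges as a weight-one $Z$ on one of its boundary legs, or is absorbed by the red node, which is a $Z$-eigenstate up to a global phase; either way the weight is non-increasing. A $Y$-flip I would decompose as $iXZ$ — unambiguous up to global phase, using the same convention as in \autoref{thm:detecting-region} — and treat its $Z$-part as above. This isolates the genuine difficulty: an $X$-flip on the internal edge, which the \TextCopy interaction propagates simultaneously onto \emph{all} legs of the multi-legged spider, so that naively pushing it to the boundary would turn a weight-one internal fault into a weight-$n$ boundary fault and break the push-out property.

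The key step, and the step I expect to be the main obstacle, is showing this copied $X$-fault is nevertheless accounted for, which I would do by the detecting-region argument used for \autoref{thm:four-legged-spider} rather than by naive push-out. The spiders of the copy gadget bound a detecting region, i.e.\ a Pauli web highlighting no boundary edge; any \emph{odd}-weight $X$-configuration on the internal edges anticommutes with it and is hence detectable by \autoref{def:detectability}, so the offending single internal $X$-flip is discharged as \emph{detectable} instead of being matched, while any \emph{even}-weight internal $X$-configuration closes up and pushes to the boundary in cancelling pairs without increasing its weight. Once $Z$-faults and even-$X$-faults are pushed out and odd-$X$-faults are detected, the hypotheses of \autoref{lem:fault-equiv-boundaries} are met on both sides. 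The crux I would spend the most care on is verifying that such a detecting region genuinely exists for the specific copy configuration in the figure: unlike the purely degree-two push-throughs of \autoref{lem:fe-pi-copy} and \autoref{lem:fe-phase-unfuse}, here the multi-legged spider makes weight preservation fail for the bare fault, so the detecting region — not a push-through — is doing the essential work.
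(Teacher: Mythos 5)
Your proposal has a genuine gap, and it stems from misreading what the rule actually is. The copy rule in the figure has its internal edges idealised as fault-free; that is the entire content of the paper's one-line proof: since neither side admits any fault on an internal edge, the push-out property of \autoref{lem:fault-equiv-boundaries} holds vacuously, and the boundary faults correspond under the bijection $\phi$. Your proof instead treats the internal edge joining the red node to the green spider as a genuine fault location and tries to discharge the copied $X$-flip by a detecting-region argument. You correctly flagged the existence of that detecting region as the crux to verify --- but it fails: the copy diagram is acyclic, and a diagram of this shape supports no detecting region at all. Concretely, a web highlighting the single internal edge in green puts one leg of the green spider in its own colour (odd, violating \autoref{def:pauli-web}), while highlighting it in red puts one out of $n{+}1$ legs of the green spider in the opposite colour, which is neither all nor none. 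The detecting region in \autoref{thm:four-legged-spider} exists only because the fused spiders there are joined by four parallel internal edges forming cycles; there is no analogue here.

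Moreover, the statement you are trying to rescue is simply false for the non-idealised rule: a single $X$-flip on the internal edge copies through the green spider to an $X^{\otimes n}$ fault on the boundary, is undetectable (there being no detecting regions), and has no counterpart of weight at most one on the other side once $n \ge 2$. This is precisely the counterexample the paper gives in \autoref{sec:circuit-fault-equivalence} for why the naive cat-state preparation circuit is not fault-equivalent to the idealised one. So no argument could close your gap; the idealisation of the internal edges is not a drawing convention but the hypothesis that makes the proposition true.
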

\begin{proof}
 As neither diagram allows faults on internal edges, by \autoref{lem:fault-equiv-boundaries}, the two diagrams are fault-equivalent.
\end{proof}

\begin{proposition}
  \label{lem:fe-cat-xs}
  \[
 \tikzfig{appendix/zx-rewrites/FE-cat-Xs}
  \]
\end{proposition}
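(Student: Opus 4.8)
The plan is to mirror the proof of the colour-dual rewrite \autoref{thm:four-legged-spider}, but with every spider now an X-spider (red) rather than a Z-spider, and with the roles of $X$ and $Z$ faults interchanged. As with every rewrite in this subsection, the overarching strategy is to invoke \autoref{lem:fault-equiv-boundaries}: it suffices to verify, on each side of the equation, that every undetectable fault supported on the internal edges can be replaced by a logically equivalent fault on the boundary edges without increasing its weight. The boundary edges are in canonical bijection between the two diagrams, so establishing this push-out property on both sides immediately yields fault equivalence. I would split the internal faults into their $X$, $Z$, and $Y$ components and treat each in turn.

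First I would dispose of the $X$-type faults, which is now the easy direction. Since every spider is an X-spider, $X$ is its \emph{own} colour: an $X$ on any single leg of an X-spider equals an $X$ on any other single leg (both send $\ket{+}^{\otimes n} + \ket{-}^{\otimes n} \mapsto \ket{+}^{\otimes n} - \ket{-}^{\otimes n}$ up to the usual scalar), and two $X$s on two legs of the same spider cancel. Hence any $X$ fault on an internal edge can be slid freely through the adjacent spider and routed out to a boundary edge without changing its weight, exactly as $Z$ faults behaved in the Z-spider case. The detecting region then handles the $Z$-type faults: the X-spiders together with their internal edges form a closed cluster on which highlighting every internal edge in red (the spiders' own colour) is a valid Pauli web by \autoref{def:pauli-web} — each spider carries an even number of own-colour legs and none of the opposite colour — and this web touches no input or output edge, so it is a detecting region. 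Because $Z$ anticommutes with $X$, any odd-weight $Z$ fault on the internal edges anticommutes with this region and is therefore detectable by \autoref{def:detectability}. An even-weight $Z$ fault is undetectable, but even-weight $Z$s push out: pairing two $Z$s sharing a (degree-three) X-spider and using the local identity $Z_e \equiv \prod_{e' \neq e} Z_{e'}$ — a $Z$ on one leg equals a $Z$ on all the others, since $Z^{\otimes}$ on all legs stabilises the X-spider — collapses each such pair to a single boundary $Z$. Finally, $Y$ faults decompose into their $X$ and $Z$ parts and are absorbed by the two cases above, the $i$ global phases being irrelevant as we work modulo global phase.

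The step I expect to be the main obstacle is the even-weight $Z$ case. Unlike a single $Z$ through a Z-spider, a $Z$ on one leg of an X-spider does not migrate as a single $Z$, so I must argue carefully that every undetectable (hence even-weight) $Z$ configuration on the internal edges reduces to boundary $Z$s \emph{without} weight blow-up, rather than spreading into a proliferation of new internal $Z$s; the pairing argument above works cleanly precisely because the corner spiders have low degree, and I would want to confirm this for the exact connectivity in the statement. The other point requiring care is checking that the chosen red highlighting genuinely avoids every boundary edge, so that it is a bona fide detecting region and the odd-weight $Z$ faults really are detected rather than merely claimed to push out — and this is exactly where the cyclic, closed connectivity of the X-spiders (as opposed to a tree, on which no such internal-only web exists) is indispensable.
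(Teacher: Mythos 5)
Your proposal proves the wrong statement. You have reconstructed the colour inverse of \autoref{thm:four-legged-spider} --- a spider unfused into a closed ring of X-spiders, where an internal-only detecting region handles odd-weight opposite-colour faults --- but the diagram in \autoref{lem:fe-cat-xs} has a different shape entirely: it is an $n$-legged cat state in which each leg carries a $2$-legged $\pi$-phased spider between the internal edge and the boundary. There are $n$ internal edges, they form a \emph{star} (a tree), not a cycle, and consequently no internal-only Pauli web exists on them --- a point you yourself flag at the end of your proposal as the place where cyclic connectivity is ``indispensable.'' On the actual diagram that indispensable ingredient is absent, so your detecting-region argument for the odd-weight case cannot even get started, and your even-weight pairing argument, which routes faults around a cycle through degree-three corner spiders, has no cycle and no degree-three spiders to work with.

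The paper's proof is far simpler and needs none of this machinery: since every internal edge is separated from the boundary only by a pendant $2$-legged $\pi$-spider, it suffices to show that an arbitrary single edge flip propagates through such a spider without increasing its weight --- Paulis conjugate Paulis up to a sign, which is a global phase and hence irrelevant --- and this is exactly the rewrite already established in the proof of \autoref{lem:fe-pi-copy}. Applying it to each leg independently pushes any internal fault to the boundary with non-increasing weight, and \autoref{lem:fault-equiv-boundaries} then yields fault equivalence. Your instinct to invoke \autoref{lem:fault-equiv-boundaries} and to decompose $Y$ faults into $X$ and $Z$ components is the right framework, but the case analysis you built on top of it answers a question the proposition does not ask; every fault here, of any type and any weight, pushes out in one step, with no detectability argument required.
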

\begin{proof}
 For an $n$-legged cat state, the diagram features $n$ internal edges.
 To prove the identity, we must show that a fault on any of these internal edges can be propagated past the $2$-legged $\pi$-spider that connects it to the boundary.
 This process can be applied to each leg independently.
 Therefore, the required rewrite for each instance is precisely the one provided in the proof of \autoref{lem:fe-pi-copy}.
\end{proof}

\inductiveDistRewrite*
\begin{proof}
 By \autoref{lem:fault-equiv-boundaries}, to prove fault equivalence, we only have to show that any undetectable fault on the internal edges of the diagram on the right-hand side can be pushed to the boundary.

 The proof strategy is to analyze the set of all undetectable internal faults.
 Such faults are identified as the null space of a parity-check matrix $P$ derived from the diagram's detecting regions.
 We first construct an explicit basis for this null space, representing all fundamental undetectable faults.
 Then, we show that each basis vector can be pushed to the boundary without increasing its weight.
 Finally, we argue that an arbitrary undetectable fault, which is just a linear combination of these basis vectors, can also be pushed to the boundary without an increase in weight.
 This directly satisfies the conditions for fault equivalence, ensuring that no undetectable logical error is introduced or removed by the rewrite.

  \paragraph{Defining the parity-check matrix}
 In total, there are $3n$ internal edges, and thus, potential independent edge flip locations to consider.
 As $Z$ edge flips commute with all the spiders in the diagram and can therefore be freely pushed out, and we decompose a $Y$ edge flip as the product of an $X$ and a $Z$ edge flip, we only have to consider $X$ edge flips.

 A green detecting region has to include an even number of the $3$-legged spider pairs at the boundary.
 We can construct a basis of size $n - 1$ for all green detecting regions by considering the regions that include the first outer spider and the $i$-th outer spider for $i \in [2, n]$:
  \[
 \tikzfig{appendix/zx-rewrites/detecting-regions}
  \]

 Let us construct an $X$ parity-check matrix $P_X \in F_2^{(n-1) \times 3n}$, where each of the $n-1$ rows corresponds to a detecting region and each of the $3n$ columns corresponds to an internal edge.
 Then, for a fault vector $\vec v \in F_2^{3n}$, the syndrome is calculated as $P_X \vec v$.

 As there are $n - 1$ detecting regions forming the parity-check matrix $P_X$, it has a null space with dimension $3n - \text{rank}(P_X) = 3n - \left( n - 1 \right) = 2n + 1$.
 We now give a basis for this null space and prove that any basis vector can be pushed out without increasing its weight.

  \paragraph{Giving a basis for the null space of $P_X$}
 First, we consider the fault $\vec{r_0}$:
  \[
 \tikzfig{appendix/zx-rewrites/error-2}
  \]
 Since each detecting region overlaps with this fault on an even number of legs, it is not detectable.
 However, this fault is trivial, as firing $s$ removes all edge flips.

 The next set of faults $\vec{r_1}, \dots, \vec{r_n}$ consists of faults between the internal spiders and the 3-legged spider pair:
  \begin{equation}
    \label{eq:rec-cat-push-2}
 \tikzfig{appendix/zx-rewrites/error-1}
  \end{equation}
 Each detecting region either highlights both internal edges of an outer spider or neither, therefore, this error is not detectable.
 However, by firing the pair of $3$-legged spiders, we can push the error to the boundary edges.

 Lastly, we consider faults $\vec{r_{n + 1}}, \dots, \vec{r_{2n}}$ incorporating an edge flip between the $3$-legged spiders and another on the internal edge of one of the $3$-legged spiders:
  \begin{equation}
    \label{eq:rec-cat-push-3}
 \tikzfig{appendix/zx-rewrites/error-3}
  \end{equation}
 Again, each detecting region either includes $2$ or $0$ of these faults, making them undetectable.
 Nevertheless, firing corresponding spider $s$ pushes the fault to the boundary.

 We now have $1 + n + n = 2n + 1$ independent faults, thereby spanning the null space of $P_X$.
 Therefore, any undetectable fault $\vec{v}$ (i.e.\@ any vector in the null space of $P_X$) can be expressed as a linear combination of these basis vectors spanning $P_X$, i.e.\@ $\vec{v} = \sum_{i = 0}^{2n} \lambda_{r_i} \vec{r_i}$, where $\lambda_{r_i} \in F_2$.

  \paragraph{Pushing faults out}
 Let us consider an arbitrary fault of the form $\vec{v} = \lambda_{r_1} \vec{r_1} + \dots +  \lambda_{r_{2n}} \vec{r_{2n}}$.
 Let us define the following values:
  \[
 A \coloneqq \sum_{i = 1}^{n} \lambda_i
 \qquad\qquad\qquad
 B \coloneqq \sum_{i = n + 1}^{2n} \lambda_i
 \qquad\qquad\qquad
 C \coloneqq \sum_{i = 1}^{n} \lambda_i \lambda_{n + i}
  \]
 That is, in $\vec{v}$, $A$ is the number of basis elements of the form \autoref{eq:rec-cat-push-2}, $B$ is the number of basis elements of the form \autoref{eq:rec-cat-push-3}, and $C$ quantifies when both of these errors act on the same 3-legged spider pair.

 With this, we can quantify the weight of $\vec v$ as $wt(\vec v) = 2A + 2B - 2C$.
 Then, pushing out these faults according to \autoref{eq:rec-cat-push-2} and \autoref{eq:rec-cat-push-3} results in a fault $\vec{v'}$ with $wt(\vec{v'}) = 2A + B - 2C$.
 Consequently, the condition of pushing out to create a fault with a smaller weight is satisfied.

 Now, let us consider a fault of the form $\vec{w} = \vec{r_0} + \lambda_{r_1} \vec{r_1} + \dots +  \lambda_{r_{2n}} \vec{r_{2n}}$, and define $A$, $B$, and $C$ similarly to the above.
 Then, the weight of the error is $wt(\vec{w}) = n + 2C$.
 This is because the weight of $\vec{r_0}$ is $n$, and then if a pair of 3-legged spiders at the boundary has an error of the form of
  \begin{itemize}
    \item only \autoref{eq:rec-cat-push-2} or only \autoref{eq:rec-cat-push-3}, an edge flip is added, and one is removed, leaving the overall weight unchanged;
    \item both \autoref{eq:rec-cat-push-2} and \autoref{eq:rec-cat-push-3} at the same time, two edge flips are added to the weight.
  \end{itemize}

 Pushing each basis fault out individually still results in a fault with weight $wt(\vec{w'}) = 2A + B - 2C$.
 Now, if $wt(\vec{w'}) \leq n$, then this pushed-out fault is adequate.
 Otherwise, $wt(\vec{w'}) > n$, in which case, we fire the spiders according to the Pauli-web that witnesses the all $X$ stabiliser.
 Effectively, this flips the bits on each of the outputs.
 This results in a fault $\vec{w''}$, such that $wt(\vec{w''}) = 2n - wt(\vec{w'}) < n$, finalizing the proof.

 Since in both cases, any undetectable fault can be pushed to the boundary with a weight less than or equal to its original weight, the diagram has the boundary push-out property.
 By \autoref{lem:fault-equiv-boundaries}, this completes the proof.
\end{proof}

\inductiveDistRewriteW*
\begin{proof}
 While this statement looks similar to the previous one, our proof strategy will be vastly different.
 As we only have to prove $w$-fault equivalence, the only faults we have to care about are undetectable faults of weight less than $w$.
 This means we can say a lot more about the structure of the undetectable faults.

 In particular, naming the spiders, we have:
  \[\tikzfig{04-zx/fault-equiv-rewrites/recursive-spider-w-proof-1}\]
 Detecting regions on this diagram consists of even many subregions that highlight edges $(s_0, s_i), (s_i, s'_i), (s'_i, s'_0)$, e.g.\@ even many of the following highlightings:
  \[\tikzfig{04-zx/fault-equiv-rewrites/recursive-spider-w-proof-2}\]

 Let us consider a fault $F$ of weight less than $w$.
 As there are $w$ subregions of the form above, we know that there must be at least one subregion $(s_0, s_i), (s_i, s'_i), (s'_i, s'_0)$ that does not have an edge flip in $F$.
 But then, since $F$ is assumed to be undetectable, all other subregions $(s_0, s_j), (s_j, s'_j), (s'_j, s'_0)$ must have even overlap with $F$.
 Otherwise, the detecting region $(s_0, s_i), (s_i, s'_i), (s'_i, s'_0), (s_0, s_j), (s_j, s'_j), (s'_j, s'_0)$ would have odd overlap with $F$, violating our assumption that $F$ is undetectable.

 But then, either $F$ does not overlap with the $j$-th subregion, or it overlaps as follows:
  \[\tikzfig{04-zx/fault-equiv-rewrites/recursive-spider-w-proof-3}\]
 As for each subregion, we can always push the edge flips to the boundary without increasing the weight of the fault, this means that $F$ must obey the push-out property, completing the proof.
\end{proof}

\section{Steane-style syndrome extraction --- Additional details}
\subsection{Steane-style syndrome extraction --- Full proof}
\label{appendix:steane-proof}
In the proof of correctness for Steane-style syndrome extraction, we glossed over a subtle but important step.
In particular, the following two diagrams are \emph{not} fault-equivalent in general:
\[
\tikzfig{appendix/steane-proof/challenging-step}
\]
For example, a $Y$ fault on the top wire between the two measurements propagates to an $X$ error on the input and a $Z$ error on the output:
\[
\tikzfig{appendix/steane-proof/counter-example}
\]
However, if we assume that, in the fault-free case, the input state lies in the code space of the code being measured, then the step \emph{is} fault-equivalent. 
Diagrammatically, we can encode this assumption by prepending our equations with an idealised, fault-free encoder. 
We have:
\begin{proposition}
    \[\tikzfig{appendix/steane-proof/correct-step}\]
\end{proposition}
\begin{proof}
Fault equivalence from the left-hand side to the right-hand side is immediate, since the right-hand side is obtained by idealising edges and therefore can only reduce fault propagation.

To show fault equivalence from the right-hand side to the left-hand side, we first show that any fault of weight $w$ on the right-hand side induces at most $w$ input faults and at most $w$ output faults.
It suffices to verify this for individual edge faults on the newly non-idealised edges, since faults on distinct edges are independent and their effects compose additively.

We first consider faults on the horizontal edges.
An $X$ fault can be pushed to the input without increasing its weight, and a $Z$ fault can be pushed to the output without increasing its weight.
A $Y$ fault can be decomposed into its $X$ and $Z$ components; pushing these components separately yields one fault on the input and one fault on the output.
Although this results in a fault of total weight two, it induces weight one on the inputs and weight one on the outputs, satisfying the desired bound.

Next, consider faults on the left vertical edges.
An $X$ fault can be pushed to the input without increasing its weight.
A $Y$ or $Z$ fault is equivalent to a $Y$ or $Z$ fault on the input together with a $Z$ fault on the output.
Again, this induces at most one fault on the input and one on the output.
Faults on the right vertical edges satisfy the same property by symmetry.

Thus, every atomic edge fault on the newly non-idealised edges induces at most one input fault and at most one output fault.
By independence, the same bound holds for any combination of such faults.
Consequently, any fault of weight $w$ on the right-hand side induces at most $w_1 \leq w$ input faults and at most $w_2 \leq w$ output faults.

Now let $F$ be a fault of weight $w < d$.
If $w_1 = 0$, then $F$ induces fewer than $d$ output faults and therefore satisfies the fault-equivalence criterion.
If $w_1 > 0$, then $F$ must be detectable, since the left-hand side corresponds to measuring the stabilisers of the code on the code space.
By the distance-$d$ property of the code, any nontrivial fault of weight less than $d$ acting on a code state must be detectable.
Hence, every fault $F$ of weight less than $d$ is either detectable or induces at most $w$ output faults, establishing the fault equivalence.
\end{proof}

This result shows that Steane-style syndrome extraction satisfies a slightly weaker notion of fault-tolerant syndrome extraction than proposed in the paper.
In particular, it cannot be used to measure an arbitrary stabiliser fault-freely in isolation.
To measure a stabiliser $d$-fault-equivalently, one must measure a sufficient collection of stabilisers whose joint measurement defines a code of distance at least $d$.
Consequently, Steane-style syndrome extraction cannot, in general, be used to measure only a strict subset of stabilisers in a single shot at distance $d$.

\subsection{Steane-style syndrome extraction --- Decoding}
\label{appendix:stean-decoding}
As fault-equivalent rewrites preserve distance by \autoref{cor:fe-preserves-dist}, \autoref{prop:optimised-steane} guarantees that the syndrome extraction circuit using the optimised Steane method has distance three. 
Therefore, \autoref{prop:correctability-of-circuits} guarantees that any fault of weight one can be identified and corrected, for example using a lookup table.
To give further insights into the practical application of this guarantee we will discuss the concrete decoder for the optimised Steane-style syndrome extraction for the Steane code.

To understand how syndrome information is calculated in the optimized circuit, it is instructive to first recall the standard procedure for Steane-style syndrome measurements.
In Steane-style syndrome measurements, a 7-bit string $v \in \mathbb{B}^7$ is obtained from the measurements.
This is then multiplied with the parity check matrix:
\[
  H_x = H_z =
  \begin{pmatrix}
    1 & 1 & 1 & 1 & 0 & 0 & 0 \\
    0 & 1 & 1 & 0 & 1 & 1 & 0 \\
    0 & 0 & 1 & 1 & 0 & 1 & 1 \\
  \end{pmatrix}
\]
which gives the exact syndrome information.

For the procedure introduced in \autoref{prop:floq-steane}, we assume that the measurement outcomes are $0$ when they are bent around and treated as inputs.
In this optimized scheme, a 4-bit string $w \in \mathbb{B}^4$ is obtained from the measurement.
As the bent qubits are the $1$-st, $6$-th, and $7$-th, these columns are removed from $H_x$ and $H_z$ leaving us with:
\[
  H_x' = H_z' =
  \begin{pmatrix}
    1 & 1 & 1 & 0 \\
    1 & 1 & 0 & 1 \\
    0 & 1 & 1 & 0 \\
  \end{pmatrix}
\]
Multiplying $w$ with $H_x'$ then gives us the syndrome information.

The procedure described above correctly identifies the syndrome, provided the flag qubit measures $0$ (indicating no detectable fault in the ancilla qubits).
For the original method of Steane this is a reasonable assumption as we can always repeat the preparation of the logical state until success.
However, for the procedure of \autoref{prop:floq-steane}, the circuit is already entangled with the data qubits before the measurement of the flag qubit, making repeat-until-success an unviable option.
Instead, we use a method similar to \textcite{chaoQuantumError2018,reichardtFaulttolerantQuantum2020}, modifying the decoding problem if the flag qubit fires.

First, we note the detecting region that the flag qubit is part of:
\[
  \tikzfig{07-examples/steane/detecting-region-no-data}
\]
This means that the flag qubit detects any single qubit $X$ error that occurs on the marked edges.
All errors that can be detected by the flag qubit are marked on the following circuit by the corresponding data qubits they propagate to:
\[
  \tikzfig{07-examples/steane/steane-optimised-decoding}
\]
Faults that propagate to zero or one data qubit are not problematic. 
Neither are the faults at $\bR\{2, 3, 5\}\e$ which propagate to data qubits $X_2X_3X_5$ which is stabiliser equivalent to $X_{6}$.
The main concern here are failures at $\bR\{3, 4\}\e$ and $\bR\{2, 5\}\e$.
Faults at these locations propagate to physical errors $X_{3} X_{4} \sim X_{5} \overline{X}$ and $X_{2} X_{5} \sim X_{1} \overline{X}$, respectively.
They are problematic as the standard correction, unaware of the flag, would misidentify them as an $X_{5}$ and $X_{1}$ error respectively, resulting in a logical error.
Therefore, we have to modify the decoding procedure if the flag qubit fires.
If the flag goes off, the possible resulting errors on the data qubits are $I$, $X_{1} \overline{X}$, $X_2$, $X_4$, $X_{5} \overline{X}$, and $X_6$.
If the next round of syndrome extraction identifies the most likely fault as $I, X_2, X_4, X_6$ or $X_7$, we proceed as usual. 
However, if it identifies the most likely faults to be $X_1$ or $X_5$, we instead apply the correction $X_1 \overline{X}$ or $X_5 \overline{X}$ respectively, given that the flag tells us that these faults are more likely to have occurred.
Assuming that at most a single fault has occurred, this allows for an unambiguous correction.

\end{document}